\documentclass[12pt]{article}
\usepackage{times,amssymb,amsmath,amsfonts,eurosym,geometry,ulem,graphicx,caption,color,setspace,sectsty,comment,footmisc,pdflscape,array,threeparttable,tikz,subcaption,natbib,tabularx,float} 
\newsavebox{\measurebox}
\usepackage[scale=2]{ccicons}
\usepackage{pgfplots}
\usepgfplotslibrary{dateplot}
\usetikzlibrary{intersections,shapes.multipart}
\usepackage{pifont}
\usetikzlibrary{tikzmark}
\usetikzlibrary{shapes,arrows,backgrounds,positioning}
\usetikzlibrary{intersections,shapes.multipart}
\usepackage{tikz}
\usetikzlibrary{fit,calc}

\usetikzlibrary{intersections,shapes.multipart}
\usepackage{pifont}
\usetikzlibrary{tikzmark}
\usetikzlibrary{shapes,arrows,backgrounds,positioning}
\usetikzlibrary{intersections,shapes.multipart}
\usepackage{pifont}
\usepackage{tikz}
\usetikzlibrary{fit,calc}

\usepackage[colorlinks=true,
citecolor=blue,
linkcolor=blue,
anchorcolor=blue,
urlcolor=blue]{hyperref}

\usepackage{multicol}
\def\E{\mathbb{E}\,}

\usepackage{tikz,pgfplots}
\usepackage{amsmath,amsthm}
\newtheorem{assumption}{Assumption}
\newtheorem{lemma}{Lemma}
\newtheorem{corollary}{Corollary}
\newtheorem{proposition}{Proposition}
\newtheorem{theorem}{Theorem}
\theoremstyle{definition}
\newtheorem{definition}{Definition}
\theoremstyle{example}
\newtheorem{example}{Example}

\usepackage[titletoc,toc,title,page,header]{appendix}
\usepackage{minitoc}

\noptcrule
\usepackage{caption}

\newcommand{\Var}{\operatorname{Var}}

\newcommand{\MSE}{\operatorname{MSE}}
\newcommand{\Risk}{\operatorname{Risk}}

\newcommand{\DM}{\mathrm{DM}}

\newcommand{\lin}{\mathrm{Lin}}
\newcommand{\cf}{\mathrm{CF}}

\newcommand{\cI}{\mathcal I}

\newcommand{\calZ}{\mathcal Z}
\newcommand{\DeltaZ}{\Delta_Z}

\newcommand{\trans}{\mathsf T}

\newcommand{\argmin}{\operatorname*{arg\,min}}

\title{Covariate Adjustment in Randomized Experiments: A Unified Framework for Decision and Practice}
\author{Jiawei Fu\footnote{Assistant Professor, Duke University. \url{jiawei.fu@duke.edu}} \and  Donald P. Green\footnote{Burgess Professor of Political Science, Columbia University. \url{dpg2110@columbia.edu} \\  
We thank Anna Wilke for detailed comments and suggestions that helped us reinterpret several results. We also thank participants at the 2026 APSA Annual Meeting for their helpful comments and feedback.}}

\date{\today}

\begin{document}

\maketitle
\singlespacing


\begin{abstract}
Should researchers adjust for covariates in randomized experiments, and if so, how? The literature offers three distinct prescriptions: do not adjust because randomization guarantees unbiasedness; adjust for outcome-prognostic covariates to improve precision; or adjust for covariates imbalanced between treatment arms. These competing prescriptions create confusion and uncertainty. We develop a unified framework for decision and practice. Given available information, we show that the optimal correction is what we call \emph{ex-post bias}. The only relevant criterion for adjustment is prognosticity for ex-post bias; neither raw covariate imbalance nor outcome prognosticity is sufficient by itself. We also show that correcting imbalance and improving precision are two sides of the same decision problem. We develop two estimation approaches, one of which recovers familiar adjustment estimators and provides a new theoretical justification for them. Simulations compare alternative covariate-selection and adjustment strategies. Overall, our framework provides a unified foundation for covariate adjustment in randomized experiments.
\end{abstract}

\noindent 
\vspace{.1in}
\noindent\textbf{Keywords: } randomized experiments; covariate adjustment; covariate balance; prognostic scores; finite-population inference

\vspace{.1in}

\thispagestyle{empty}

\clearpage
\doparttoc 
\faketableofcontents 

\setcounter{page}{1}

\section{Introduction}

Suppose that a randomized experiment has been completed. When analyzing the resulting data, should the researcher adjust for pretreatment covariates? If so, which covariates should be used, and how should the adjustment be carried out? These questions appear routine, but the usual answers rest on different ideas about what covariate adjustment is supposed to accomplish.

There exists at leat three distinct ideas. One approach is to make no adjustment and report the unadjusted difference in means because properly implemented randomization makes this estimator unbiased \citep{meier1993illusion}. A second approach is to adjust when covariates are imbalanced, on the grounds that the two groups may not be comparable in the realized assignment \citep{johansson2022inference}. A third approach is to adjust for covariates that predict outcomes because prognostic adjustment can reduce residual variation and increase precision \citep{zhang2008improving}. These distinct rationales lead to diverse practices. Researchers sometimes retain the unadjusted estimator and treat it as the gold-standard estimate. In other cases, they use a balance table to identify imbalanced covariates, select prognostic covariates using outcome models, and include the selected variables in a regression. We collect field experiments published in three leading political science journals from 2016 to 2025 and summarize their approaches to covariate adjustment in Table~\ref{tab:covariate-adjustment-rationales}. As the table shows, all three rationales are commonly invoked in practice.

\begin{table}[htbp]
\centering
\begin{threeparttable}
\caption{Reported rationales in leading journals from 2016 to 2025}
\label{tab:covariate-adjustment-rationales}
\small
\setstretch{1}
\setlength{\tabcolsep}{5pt}
\renewcommand{\arraystretch}{1.2}
\begin{tabularx}{\linewidth}{@{}lr*{4}{>{\centering\arraybackslash}X}@{}}
\hline
 & & \multicolumn{2}{c}{R1: No adjustment} & R2: Imbalance & R3: Prognosticity \\
\cline{3-4}
Journal & Papers & Explicit rationale & Reports DIM & correction & / precision \\
\hline
AJPS & 41 & 3 (7.3\%) & 21 (51.2\%) & 4 (9.8\%) & 15 (36.6\%) \\
APSR & 38 & 4 (10.5\%) & 14 (36.8\%) & 7 (18.4\%) & 7 (18.4\%) \\
JOP & 31 & 3 (9.7\%) & 15 (48.4\%) & 5 (16.1\%) & 6 (19.4\%) \\
\hline
Total & 110 & 10 (9.1\%) & 50 (45.5\%) & 16 (14.5\%) & 28 (25.5\%) \\
\hline
\end{tabularx}
\begin{tablenotes}[flushleft]
\footnotesize
\item \textit{Notes:} Entries are paper counts, with within-journal percentages in parentheses. R1 (explicit rationale) records an explicit appeal to randomization, unbiasedness, or the lack of a need for controls. R1 (reports DIM) records raw post-treatment mean/proportion comparisons or treatment-only OLS, whether or not an explanation is provided. R2 and R3 count directly expressed arguments. Categories overlap, and a paper is counted if the criterion is met in at least one reviewed study or analysis. 
\end{tablenotes}
\end{threeparttable}
\end{table}

We review a collection of experimental studies published between 2016 and 2025 in three leading political science journals and summarize their stated rationales for covariate adjustment and their reporting of unadjusted estimates in Table~\ref{tab:covariate-adjustment-rationales}. All three rationales are frequently invoked in practice. But are these rationales well founded, and which one should guide researchers' adjustment decisions? Existing debates often talk past one another. A clear, unified answer is therefore needed. This paper develops such a framework. We show that each argument contains important insights, but conventional accounts also miss the central roles of bias, imbalance, and prognosticity. The framework clarifies these concepts and provides practical guidance.

Our decision framework begins with a practical concern. Although the difference-in-means estimator $\hat \tau$ is ex-ante unbiased for the average treatment effect (ATE), researchers usually observe only a single experiment, and its realized estimate will generally differ from the true effect $\tau$ because the treatment and control groups are not perfectly balanced. Researchers often mistakenly believe that a single randomized experiment balances all covariates and therefore returns the exact causal effect. In fact, a gap generally remains between the true effect and the estimate from a realized experiment. Can available information reveal this \textit{ex-post} randomization error $e=\hat \tau-\tau$ and support a correction that moves the estimate closer to the true effect? We call the error ex post because it is considered after the randomization has been conducted, when researchers may have learned information $\mathcal{I}$ about the randomization through a balance check. For example, researchers may learn that a covariate mean is larger in the treatment group and therefore become concerned that the realized estimate deviates from the unobserved true effect.

We formalize and solve the decision problem. We show that the optimal correction is exactly the \textit{ex-post bias} $\E_Z[e|\mathcal I]$. Estimating and subtracting this term can remove the ex-post bias. From an ex-ante perspective, the resulting estimator has lower variance than the unadjusted estimator. This framework thus unifies two objectives that are often treated separately. From an ex-post perspective, adjustment removes predictable error in the particular experiment that occurred. From an ex-ante perspective, the same adjustment reduces variation over the randomization distribution and increases precision. Therefore, concerns about covariate imbalance and precision are not competing justifications for covariate adjustment; rather, they are two manifestations of the same underlying decision problem. They are two interpretations of the same underlying prediction problem.

Our key result is that the relevant criterion for covariate adjustment is prognosticity for the randomization error, or equivalently, the ex-post bias. Neither raw covariate imbalance nor prognosticity for the observed outcome is, by itself, the fundamental criterion for adjustment. This result also clarifies the relevant notions of \textit{prognosticity} and \textit{imbalance}, showing that conventional interpretations are valid only in special cases. Traditionally, a covariate is considered prognostic if it predicts the observed outcome, the control potential outcome, or either treatment-specific potential outcome. For ATE estimation, however, the relevant question is more specific: does the covariate predict the ex-post randomization error? We provide examples in which a covariate is imbalanced and strongly predicts outcomes in both treatment arms, yet contains no information about the ATE estimation error. Conversely, a covariate that predicts the randomization error can be valuable for adjustment even if it is not a strong predictor of either potential outcome considered separately.

Similarly, we distinguish raw covariate imbalance from ATE-relevant imbalance. Raw covariate differences describe how treatment assignment is distributed across observed baseline characteristics. Such differences are valuable for transparency and for diagnosing potential failures in experimental implementation, such as those arising from attrition. However, raw covariate imbalance does not necessarily imply ex-post bias and therefore does not, by itself, justify covariate adjustment. We define ATE-relevant imbalance and show that it is this form of imbalance, rather than raw covariate imbalance, that is informative for adjustment.

Because the ex-post error is unobserved, we introduce two approaches to estimating it. Perhaps surprisingly, one of these approaches leads to familiar covariate-adjustment estimators. Therefore, conventional regression adjustment, generalized regression, augmented estimators, and flexible machine-learning procedures can all be used to construct the prognostic score \citep{lin2013agnostic, robins1994estimation}. Our framework thus provides a new theoretical justification for these methods. What differs is that our framework highlights that the decision to adjust for a covariate depends only on its prognosticity for ex-post randomization error. Our second approach estimates this target more directly and, in our finite-sample simulations, outperforms existing adjustment estimators. One reason is that the direct approach requires estimating only a single nuisance function, allowing it to use the available data more efficiently and potentially reducing estimation noise relative to arm-specific approaches that estimate separate nuisance functions for each treatment arm.

We evaluate the framework in a main simulation with \(N=400\) and \(K=60\). We compare a range of estimators across four settings: no useful information about the ATE estimation error; a sparse systematic component that depends directly on four of the sixty covariates; strong within-arm outcome relationships that cancel out for ATE adjustment; and unequal treatment allocation with heterogeneous treatment effects.
We also vary sample size ($N\in\{160,240,400\}$) to examine finite-sample estimation performance. Overall, our direct estimator has the strongest performance profile in these simulations. The arm-specific AIPW estimator remains a credible alternative when separate outcome models are substantively desirable. 

The supplementary study examines covariate selection policy. It compares five rules: control-outcome prediction (prognosticity), covariate imbalance, their intersection, their union, and selection under the proposed design-relevant weighted loss. The study confirms our theories that control-outcome prognosticity and/or covariate imbalance is not sufficient for prognosticity with respect to ATE estimation error; the design-relevant weighted criterion is preferable among the rules studied. 


Our unified framework provides clear guidance on both \emph{when} and \emph{how} researchers should adjust. Researchers concerned with ex-post bias should adjust whenever covariates are prognostic for that bias using adjustment estimators. In practice, researchers should collect a rich set of pretreatment covariates. Because ex-post bias is inherently unobserved, a useful minimum guideline is to collect covariates that are plausibly prognostic for the potential outcomes. Our proposed estimators, combined with machine-learning methods, can use these covariates to adaptively approximate the relevant component of ex-post bias and produce an adjusted ATE estimate. Researchers should not decide whether to adjust based on tests of whether individual raw covariates happen to be balanced in the realized assignment or solely on whether those covariates predict control-group outcomes. At the same time, we do not recommend abandoning conventional balance tables. Raw covariate balance remains useful for describing the experiment, assessing implementation of the assignment mechanism, and identifying unusual features of the realized sample. A conventional balance table should therefore be retained but supplemented by an outcome- and estimand-specific prognostic balance report.

The remainder of the paper proceeds as follows. We first review the competing arguments surrounding covariate balance and adjustment. We then develop the decision framework, derive feasible prognostic-adjustment procedures, and discuss their implications for balance reporting and existing methods. Finally, we evaluate the framework through simulations and conclude.


\section{Debates over Covariate Balance and Adjustment}
\label{sec:debates}

Should researchers adjust for pretreatment covariates in randomized
experiments? If so, should covariates be selected because they are imbalanced in the realized assignment, because they predict the outcome, or for some combination of these reasons? Despite the widespread use of regression adjustment in experimental research, there is no single consensus answer supported by a common rationale.

The apparent disagreement in the literature partly reflects the fact that several distinct questions are often discussed under the common heading of ``covariate adjustment.'' Randomization addresses
the marginal validity of an estimator over all possible assignments. Balance diagnostics describe features of the assignment that was actually realized. Prognostic adjustment seeks to improve precision by explaining outcome variation. These objectives are related, but they are not interchangeable.

This section reviews these three rationales and then explains how the decision framework developed below places them within a common structure while highlighting what each perspective overlooks.

\paragraph{Ex-Ante Unbiasedness.}
One influential view begins from a simple principle: in a properly randomized experiment, the difference-in-means estimator is unbiased for the average treatment effect in expectation over all possible randomizations. Covariate adjustment is therefore not required for identification. This feature makes the unadjusted estimator transparent and provides a natural benchmark against which adjusted estimators should be evaluated.

Concerns about conventional regression adjustment reinforce this position. \citet{freedman2008regression,freedman2008regression1} show that ordinary least squares can have finite-sample bias under the randomization model, that the usual model-based variance formulas need not be justified, and that adjustment need not always improve precision. These results caution against treating a randomized experiment as if it were generated by a correctly specified linear regression model. Subsequent work, however, substantially qualifies these concerns. With treatment--covariate interactions and heteroskedasticity-robust standard errors, regression adjustment is asymptotically valid and no less efficient than the unadjusted estimator under standard regularity conditions \citep{lin2013agnostic}. Exact or higher-order bias corrections can further address the finite-sample bias emphasized by Freedman \citep{wu2018loop,chang2024exact}.

\paragraph{Covariate Imbalance.}
A second perspective agrees that randomization eliminates systematic bias over the assignment distribution. However, because an experiment is conducted only once, a particular random assignment may place units with systematically more favorable prognoses in one treatment arm. Conditional on the resulting signed covariate imbalance, the difference-in-means estimator can therefore have a nonzero conditional mean error even though its unconditional randomization bias is zero.

This concern has a long history. Early discussions of clinical-trial analysis describe covariate adjustment as a way to account for baseline variables on which randomization failed to achieve close balance \citep{beach1989choosing,pocock2002subgroup}. In practice, balance reporting is therefore common in experimental research and is encouraged by reporting standards in political science and other fields
\citep{gerber2014reporting}. Randomization-based omnibus tests have also been developed to summarize whether the observed treatment--control differences are unusual under the specified assignment mechanism \citep{hansen2008covariate}. 

A further step is to choose covariates for adjustment on the basis of a balance test. This practice is much more controversial. \citet{senn1994testing} argues that significance tests of baseline equality are uninformative and potentially misleading in randomized trials. Moreover, adjustment after balance test can have undesirable efficiency and inferential properties \citep{mutz2019perils,zhao2024randomization}.

\paragraph{Prognostic Adjustment and Efficiency.}

A third and broader school justifies adjustment on efficiency grounds. The key principle is prognosticity: baseline information is useful when it predicts components of the potential outcomes that contribute to sampling or randomization error. Prespecified adjustment for such covariates can reduce residual outcome variation and increase precision.

This principle underlies classical analysis of covariance and modern semiparametric augmentation methods
\citep{tsiatis2008covariate,zhang2008improving}. In the design-based literature, fully interacted regression adjustment is asymptotically no less efficient than the difference in means \citep{lin2013agnostic,negi2021revisiting}. Related work shows that substantial gains are possible when baseline variables are strongly predictive of outcomes \citep{colantuoni2015leveraging}. High-dimensional methods extend this logic by using regularization, cross-estimation, or machine learning to estimate prognostic components while controlling the cost of overfitting \citep{bloniarz2016lasso,wager2016high,wu2018loop}. Current FDA guidance likewise focuses on prespecified prognostic baseline covariates as a means of increasing precision and power.\footnote{https://www.fda.gov/regulatory-information/search-fda-guidance-documents/adjusting-covariates-randomized-clinical-trials-drugs-and-biological-products}

\paragraph{Summary.}

The debate remains unresolved. Each school contains correct but incomplete insights. The ex-ante unbiasedness school correctly observes that randomization eliminates systematic confounding in expectation, but it overlooks efficiency and the fact that each realized experiment has its own error. The imbalance school correctly recognizes that covariate imbalance can matter in a single realized experiment. In practice, however, balance-test-and-adjust procedures can generate additional bias and overlook potential efficiency gains. The efficiency school emphasizes prognosticity and rests on a strong theoretical foundation, but it gives less attention to bias associated with covariate imbalance, which is a principal concern motivating balance tables in experimental research.

These schools of thought are not mutually exclusive; rather, each addresses a different part of the adjustment problem. The ex-ante perspective evaluates an estimator over repeated assignments, whereas empirical analysis confronts one realized randomization. Budget constraints, ethical limitations, and the irreversibility of many treatments make the realized experiment, rather than the hypothetical average over all possible assignments, the relevant practical object. The framework developed in the next section connects these practical concerns and perspectives by formalizing when and how a realized estimate should be corrected. We show that the latter two perspectives are two sides of the same coin. Our central guideline is that the relevant criterion for covariate adjustment is prognosticity for ex-post bias. Neither raw covariate imbalance nor conventional outcome prognosticity is, by itself, the fundamental criterion for adjustment.

\section{A Decision Framework with Ex-Post Bias}

Consider a finite population of $N$ units. Unit $i$ has fixed potential outcomes $Y_i(1)$ and $Y_i(0)$ under treatment and control, respectively, and a $K$-dimensional vector of pretreatment covariates
\(
X_i=(X_i^1,\ldots,X_i^K)^{\mathsf T}\in\mathcal X.
\) Let $Z_i\in\{0,1\}$ denote treatment assignment. We assume complete randomization with exactly $N_1$ treated units and $N_0=N-N_1$ control units, so $Z$ is drawn uniformly from $\calZ_{N_1}=\{z\in\{0,1\}^N:\sum_i z_i=N_1\}$. The observed outcome is $Y_i=Z_iY_i(1)+(1-Z_i)Y_i(0)$, and the finite-population ATE is $\tau=N^{-1}\sum_{i=1}^N\{Y_i(1)-Y_i(0)\}$.

The unadjusted difference-in-means estimator is $\widehat\tau_{\DM}=\hat{\tau}=\frac{1}{N_1} \sum_{i=1}^N [Z_iY_i]-\frac{1}{N_0} \sum_{i=1}^N [(1-Z_i)Y_i]$. Complete randomization makes this estimator unbiased over all possible assignments. Its value under a particular assignment, however, generally differs from the finite-population ATE. As \citet{deaton2018understanding} emphasize, errors may cancel across a hypothetical sequence of randomizations even though the estimate from the realized experiment is far from the target. This observation motivates a distinction between the estimator's ex-ante bias and its realized randomization error.

We first formalize the assignment-specific discrepancy between an estimator and the true ATE.

\begin{definition}[Realized randomization error]
For a realized assignment $Z=z$, the realized error of an estimator $\widehat\tau$ is
\begin{equation}
e(z)=\widehat\tau(z)-\tau.
\end{equation}
\end{definition}

For the unadjusted estimator, we write $e_0(Z)=\widehat\tau_{\DM}(Z)-\tau$. Each possible treatment assignment induces an estimate $\hat{\tau}$ and thus an error $e(z)$. For example, as shown in Table~\ref{tab:rand}, the first randomization generates an error of 7, whereas the second generates an error of $-8$. In practice, researchers cannot observe this oracle randomization table or know the realized error. However, because of randomization, the expected error over all possible assignments is zero ex ante: $\E_Z\{e(Z)\}=0$. This does not imply that $e(z)=0$ for every $z\in\calZ_{N_1}$.

\begin{table}[h!]
    \centering
    \begin{tabular}{|c|c|c|c|}
        \hline
        Randomization & Imbalanced covariates & Error ($e(z)$) \\ \hline
        1 & $X^2$, $X^5$, $X^{10}$ & 7 \\ \hline
        2 & $X^1$, $X^4$, $X^5$, $X^7$ & -8 \\ \hline
        3 & $X^2$, $X^{10}$  &  -4 \\ \hline
       \vdots &  &  \\ \hline
       ${N \choose N_1}$  & $X^1$, $X^2$, $X^6$, $X^9$, $X^{10}$  & 5 \\ \hline
    \end{tabular}
    \caption{An artificial randomization table known to the oracle}\label{tab:rand}
\end{table} 

Because the realized randomization error is unknown, researchers may use observed covariates and balance information to learn about it. For motivation, suppose researchers learn that $X_2$ is imbalanced in the realized assignment. This observation rules out some randomizations, including the second row of Table~\ref{tab:rand}, but remains compatible with many others, such as rows 1, 3, and the final row. Let $\mathcal I$ denote the information generated by this or any other prespecified post-assignment report. We define the corresponding ex-post bias as follows.

\begin{definition}[Ex-post bias]\label{def:expost}
The ex-post bias given the information set $\mathcal I$ is
\[
B_{\mathcal I}:=\E_Z[\hat\tau-\tau\mid \mathcal I].
\]
\end{definition}
Thus, ex-post bias averages the randomization errors of assignments that are indistinguishable given $\mathcal I$. 

Our decision framework begins with a practical concern. After observing $\mathcal I$, the researcher must decide whether to correct the unadjusted estimator and, if so, by how much. Consider the estimator
\begin{equation}
\widehat\tau_a
=
\widehat\tau_{\DM}-a(\cal I),
\label{eq:general-balance-action}
\end{equation}
where $a(\cal I)$ is any square-integrable $\cal I$-measurable correction.

We adopt mean-squared error as the decision criterion. Conditional MSE evaluates a correction among assignments consistent with the observed information, whereas unconditional MSE evaluates it before treatment assignment over the full randomization distribution. The following theorem shows that both criteria identify the same optimal correction and quantifies the loss from any alternative action.

\begin{proposition}[Decision theorem]
\label{thm:conditional-decision}
For every $\cal I$-measurable action $a(\cal I)$,
\begin{equation}
\MSE_Z(\widehat\tau_{\DM}|\cI)-
\MSE_Z(\hat{\tau}_a|\cI)
=
B^2_{\mathcal I}-[B_{\mathcal I}-a(\cI)]^2
\label{eq:value-bias1}
\end{equation}
\begin{equation}
\MSE_Z(\widehat\tau_{\DM})-
\MSE_Z(\hat{\tau}_a)
=
\Var_Z(B_{\mathcal I})-\E[B_{\mathcal I}-a(\cI)]^2
\label{eq:value-bias}
\end{equation}
 Consequently, the unique MSE minimizer is
\begin{equation}
a^*(\mathcal{I})= B_{\mathcal{I}}.
\label{eq:optimal-action}
\end{equation} 
\end{proposition}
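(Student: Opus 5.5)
The plan is to use the conditional bias--variance decomposition with respect to the information set $\cI$, then pass to unconditional statements by iterated expectations. Throughout I write $e_0=\widehat\tau_{\DM}-\tau$, so that $\widehat\tau_a-\tau=e_0-a(\cI)$. Since $a(\cI)$ is $\cI$-measurable and square integrable, and $e_0$ takes finitely many values under complete randomization, all conditional moments below are well defined.

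First, for the conditional identity \eqref{eq:value-bias1}, I will decompose $e_0-a=(e_0-B_{\cI})+(B_{\cI}-a)$ and expand the square conditional on $\cI$. The cross term is $2(B_{\cI}-a)\,\E_Z[e_0-B_{\cI}\mid\cI]$. It vanishes because $B_{\cI}-a$ is $\cI$-measurable and can be pulled out, and because $\E_Z[e_0\mid\cI]=B_{\cI}$ by Definition~\ref{def:expost}. This gives $\MSE_Z(\widehat\tau_a\mid\cI)=\Var_Z(e_0\mid\cI)+(B_{\cI}-a)^2$. Setting $a\equiv 0$ gives $\MSE_Z(\widehat\tau_{\DM}\mid\cI)=\Var_Z(e_0\mid\cI)+B_{\cI}^2$. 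Subtracting the two expressions cancels the conditional variance and yields \eqref{eq:value-bias1}.

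Second, for \eqref{eq:value-bias}, I will take expectations of \eqref{eq:value-bias1} over $Z$ and use the tower property, $\MSE_Z(\cdot)=\E_Z[\MSE_Z(\cdot\mid\cI)]$. This gives $\E_Z[B_{\cI}^2]-\E_Z[(B_{\cI}-a)^2]$. The key observation is that $\E_Z[B_{\cI}]=\E_Z[e_0]=0$ by ex-ante unbiasedness of the difference in means under complete randomization. Hence $\E_Z[B_{\cI}^2]=\Var_Z(B_{\cI})$, which is the form stated in \eqref{eq:value-bias}. I read the second term $\E[B_{\cI}-a(\cI)]^2$ as $\E[(B_{\cI}-a(\cI))^2]$; this is the only place where notation needs care.

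Third, for the optimality claim \eqref{eq:optimal-action}: in \eqref{eq:value-bias1} the gain is maximized pointwise exactly when $(B_{\cI}-a)^2=0$, that is, when $a=B_{\cI}$. In \eqref{eq:value-bias} the gain is maximized exactly when $\E[(B_{\cI}-a)^2]=0$. That forces $a=B_{\cI}$ almost surely, which is the sense of uniqueness here; on the finite randomization space this means equality on every information cell of positive probability.

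There is no substantive obstacle. The main points of care are:
\begin{itemize}
\item stating uniqueness correctly, up to almost-sure equality;
\item making explicit that $\E_Z[e_0]=0$ is what turns the second moment $\E_Z[B_{\cI}^2]$ into the variance $\Var_Z(B_{\cI})$.
\end{itemize}
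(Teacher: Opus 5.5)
Your proposal is correct and follows essentially the same route as the paper's proof. Both use the conditional bias--variance decomposition $\E_Z[(e_0-a)^2\mid\cI]=\Var_Z(e_0\mid\cI)+(B_{\cI}-a)^2$, subtract the $a\equiv 0$ case, and then average over $Z$. Your explicit use of $\E_Z[B_{\cI}]=\E_Z[e_0]=0$ to turn $\E_Z[B_{\cI}^2]$ into $\Var_Z(B_{\cI})$, and your reading of $\E[B_{\cI}-a(\cI)]^2$ as a second moment, spell out two steps the paper leaves implicit.
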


Setting $a=a^*$ in Equation~\eqref{eq:value-bias} gives the oracle MSE gain directly:
\begin{equation}
\MSE_Z(\widehat\tau_{\DM})-
\MSE_Z(\hat{\tau}_{a^*})
=
\Var_Z(B_{\mathcal I}) \ge 0.
\label{eq:optimal-mse-gain}
\end{equation}

The proposition separates the information from the quality of a correction based on it. The first feature of Equations~\eqref{eq:value-bias1} and~\eqref{eq:value-bias} is the \emph{value of information}, captured by \(B_{\mathcal I}\) for a given information set and, across treatment assignments, by \(\Var_Z(B_{\mathcal I})\). The random variable $B_{\mathcal I}$ is the component of the unadjusted randomization error that can be predicted from $\cI$. Because $\E_Z(B_{\mathcal I})=0$, its variance,
$
\Var_Z(B_{\mathcal I})=\E_Z(B_{\mathcal I}^2)$, is the average squared magnitude of that predictable component across assignments. Equivalently, as shown in equation \eqref{eq:optimal-mse-gain}, it is the portion of the unadjusted randomization MSE that an oracle could remove by using $\mathcal{I}$ perfectly. Thus, removing conditional ex-post bias and improving unconditional randomization precision are two interpretations of the same projection.

The second feature is the \emph{quality of the chosen correction}. The discrepancy $B_{\mathcal I}-a(\cal I)$ measures how far the proposed action is from the oracle correction. Equation~\ref{eq:value-bias} therefore says that the net MSE gain equals the informational value of the report minus the approximation error of the action. A feasible correction improves on the unadjusted estimator if and only if \(\E[B_{\mathcal I}-a(\cI)]^2 <\Var_Z(B_{\mathcal I})\). Thus, the mere existence of information does not guarantee that an arbitrary adjustment based on it will help. A poorly chosen correction can exhaust the report's informational value and can even increase MSE. At the optimum $a=a^*$, the approximation cost vanishes, so the correction removes the conditional ex-post bias and achieves the maximum unconditional MSE gain.

Two implications follow. If $\Var_Z(B_{\mathcal I})=0$, then $\mathcal I$ contains no information about the \emph{conditional mean} of the randomization error, and every nonzero additive correction based only on that information weakly increases MSE. Conversely, richer information can only improve the oracle decision. The next proposition formalizes this monotonicity.

\begin{proposition}[Value of finer information]
\label{thm:value-more-information}
Let $\mathcal I_1\subseteq\mathcal I_2$ and for \(j\in\{1,2\}\) define
\(
 B_j=\E_Z(e_0\mid\mathcal I_j)\), and 
\(\Risk^*(\mathcal I_j)=\E_Z\{(e_0-B_j)^2\}
\).
Then \(
 B_1=\E_Z(B_2\mid\mathcal I_1),
\)
and
\begin{equation}
 \Risk^*(\mathcal I_1)-\Risk^*(\mathcal I_2)
 =
 \E_Z\{(B_2-B_1)^2\}
 =
 \Var_Z(B_2)-\Var_Z(B_1)
 \geq0.
 \label{eq:more-informative-report}
\end{equation}
\end{proposition}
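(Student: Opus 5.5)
The plan is to treat $\mathcal I_1\subseteq\mathcal I_2$ as nested $\sigma$-algebras on the finite assignment space $\calZ_{N_1}$ with the uniform randomization measure. Since that space is finite, $e_0$ is bounded, so every conditional expectation below exists and is square-integrable; no integrability issues arise. First, I would get $B_1=\E_Z(B_2\mid\mathcal I_1)$ directly from the tower property: $\E_Z(B_2\mid\mathcal I_1)=\E_Z\{\E_Z(e_0\mid\mathcal I_2)\mid\mathcal I_1\}=\E_Z(e_0\mid\mathcal I_1)=B_1$, using $\mathcal I_1\subseteq\mathcal I_2$.

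For the risk identity, I would decompose $e_0-B_1=(e_0-B_2)+(B_2-B_1)$ and expand the square. The cross term $\E_Z\{(e_0-B_2)(B_2-B_1)\}$ vanishes because $B_2-B_1$ is $\mathcal I_2$-measurable. Conditioning on $\mathcal I_2$ gives $\E_Z\{(B_2-B_1)\E_Z(e_0-B_2\mid\mathcal I_2)\}=0$. This yields $\Risk^*(\mathcal I_1)=\Risk^*(\mathcal I_2)+\E_Z\{(B_2-B_1)^2\}$, which is the first equality in \eqref{eq:more-informative-report}.

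For the second equality, I would expand $\E_Z\{(B_2-B_1)^2\}=\E_Z(B_2^2)-2\E_Z(B_1B_2)+\E_Z(B_1^2)$. By the first step, $\E_Z(B_1B_2)=\E_Z\{B_1\E_Z(B_2\mid\mathcal I_1)\}=\E_Z(B_1^2)$, so the expression reduces to $\E_Z(B_2^2)-\E_Z(B_1^2)$. To turn second moments into variances, I would use $\E_Z(B_j)=\E_Z(e_0)=0$, which holds by ex-ante unbiasedness of $\widehat\tau_{\DM}$ under complete randomization. Then $\E_Z(B_j^2)=\Var_Z(B_j)$. Nonnegativity is immediate since the middle expression is the expectation of a square.

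No step is a real obstacle; this is the standard Pythagorean property of nested $L^2$ projections. The one point needing care is to state explicitly that the inclusion $\mathcal I_1\subseteq\mathcal I_2$ means $\sigma$-algebra refinement, so that the tower property and the measurability of $B_1$ with respect to $\mathcal I_2$ are legitimate. One could alternatively cite Proposition~\ref{thm:conditional-decision} with $a=B_1$ and information $\mathcal I_2$, but the direct projection argument is cleaner.
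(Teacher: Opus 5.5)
Your proposal is correct and follows the paper's proof essentially step for step: the tower property gives $B_1=\E_Z(B_2\mid\mathcal I_1)$, and the $L^2$-orthogonal decomposition $e_0-B_1=(e_0-B_2)+(B_2-B_1)$ gives the risk identity. The one minor difference is that you obtain $\E_Z\{(B_2-B_1)^2\}=\Var_Z(B_2)-\Var_Z(B_1)$ by direct expansion together with $\E_Z(B_j)=0$. The paper instead applies the law of total variance to $B_2$ given $\mathcal I_1$, which needs only that $\E_Z(B_1)=\E_Z(B_2)$ and not the zero-mean property.
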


Proposition~\ref{thm:value-more-information} shows that, at the oracle level, retaining finer information cannot worsen prediction of the bias. We now formalize the information and introduce three layers of information. Begin with the finest information generated by the observed covariate profiles. Let $x^{(1)},\ldots,x^{(J)}$ be the distinct observed covariate profiles in the finite population, and define
$
 \mathcal C_j
 =
 \{i:X_i=x^{(j)}\}$. 
For a realized assignment, let
$
 N_{1j}(Z)
 =
 \sum_{i\in\mathcal C_j}Z_i$.
The vector
$
 H_X(Z)
 =
 \{N_{11}(Z),\ldots,N_{1J}(Z)\}
$
records the complete treatment allocation across the observed covariate profiles. We define
$ \mathcal F_X
 =
 \sigma\{H_X(Z)\}$ as the finest post-assignment information generated by the observed covariate profiles. The sigma-field $\mathcal F_X$ records the complete allocation of treatment across the observed covariate profiles, but not which unit labels within a profile are treated.

The second layer incorporates pretreatment characteristics that researchers do not observe. Let $U_i$ denote characteristics that are unobserved or unrecorded, and let $V_i=(X_i,U_i)$ denote the full covariate profile. Let $\mathcal F_{\mathrm{full}}$ be defined analogously from the treatment counts
within the distinct full profiles $V_i$. Because every observed $X$-cell is a union of finer $(X,U)$-cells\footnote{In particular, if
$
 \mathcal L(j)
 =
 \{\ell:v^{(\ell)}=(x^{(j)},u)\text{ for some }u\},
$
then $
 N_{1j}^X(Z)
 =
 \sum_{\ell\in\mathcal L(j)}N_{1\ell}^V(Z).$
Hence $H_X(Z)$ is measurable with respect to $H_V(Z)$.},
$
 \mathcal F_X
 \subseteq
 \mathcal F_{\mathrm{full}}.
$

The third and generally coarsest layer is the particular balance report observed by the researcher. We use $T(Z,X)$ to denote any post-assignment statistic constructed from the realized assignment and pretreatment observables and define
$
 \mathcal I
 =
 \sigma\{T(Z,X)\}
$.
Examples of $T$ include the full vector of signed covariate mean differences,
$
 T
 =
 \bigl\{\Delta_Z(X^1),\ldots,\Delta_Z(X^K)\bigr\}$, the corresponding vector of absolute mean differences, an omnibus Mahalanobis balance statistic, a vector of two-sided balance-test $p$-values, or indicators of whether prespecified tests reject. 

The information sets satisfy the following hierarchy:
$
 \mathcal I
 \subseteq
 \mathcal F_X
 \subseteq
 \mathcal F_{\mathrm{full}}
$
This hierarchy clarifies the source of the remaining uncertainty. Conditional on $\mathcal F_X$, the numbers treated within each observed $X$-profile are fixed, but the allocation of treatment across the unobserved $U$-subprofiles inside an $X$-profile remains random. Equivalently, two assignments can belong
to the same atom of $\mathcal F_X$ while belonging to different atoms of $\mathcal F_{\mathrm{full}}$. 

According to Proposition~\ref{thm:value-more-information}, we therefore focus on the observed-covariate information set \(\mathcal F_X\) in what follows. That is, we use the observed covariates directly as the information available for constructing the adjustment, rather than prespecifying particular functions of them. This approach is both natural and flexible, as the adjustment procedure can learn relevant functions of the observed covariates from the data.

\subsection{ATE-Relevant Imbalance and Prognosticity}

The decision framework shows that the central task is to estimate ex-post bias. This subsection proceeds in two steps. First, we derive an exact score representation of the unadjusted randomization error and ex-post bias. Second, we show that approximating ex-post bias is equivalent to predicting this score up to an additive constant. Together, these results transform the decision problem above into a covariate-prediction problem.

To state the score representation, we first introduce notation for treatment--control imbalance. For any fixed scalar finite-population variable $A=(A_1,\ldots,A_N)$, define
\begin{align}
\overline A
&=\frac{1}{N}\sum_{i=1}^NA_i,
&
\overline A_1
&=\frac{1}{N_1}\sum_{i=1}^NZ_iA_i,
&
\overline A_0
&=\frac{1}{N_0}\sum_{i=1}^N(1-Z_i)A_i.
\end{align}
Its signed treatment--control imbalance is
$
\Delta_Z(A)
=
\overline A_1-\overline A_0.
$

\begin{definition}[Design-weighted randomization-error]
\label{def:M}
For treatment fraction $p=N_1/N$, define
\begin{equation}
M_i=(1-p)Y_i(1)+pY_i(0).
\label{eq:M}
\end{equation}
\end{definition}

The score $M_i$ is an oracle object because the two potential outcomes are not jointly observed. Under equal allocation,
\(
M_i=\frac{Y_i(1)+Y_i(0)}{2},
\)
so \(M_i\) is simply the midpoint of the two potential outcomes. Under unequal allocation, the midpoint is shifted toward the potential outcome observed in the smaller arm, whose sample composition has greater leverage on the difference-in-means estimator. The following proposition establishes that the realized treatment--control imbalance of $M_i$ exactly equals the error of the unadjusted estimator.

\begin{proposition}[Exact randomization-error identity]
\label{thm:M-identity}
Under complete randomization,
\begin{equation}
\widehat\tau_{\DM}-\tau
=
\DeltaZ(M)
=
(1-p)\DeltaZ\{Y(1)\}+p\DeltaZ\{Y(0)\}.
\label{eq:M-identity}
\end{equation}
\end{proposition}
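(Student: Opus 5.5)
The identity in Proposition~\ref{thm:M-identity} is purely algebraic, so I will prove it for every fixed assignment $z\in\calZ_{N_1}$; no expectation over $Z$ is needed. The idea is to write both the estimator and the estimand as linear functionals of the potential-outcome vectors. Then I will use one elementary fact about the full-sample mean: since $N\overline A=N_1\overline A_1+N_0\overline A_0$, we have $\overline A=p\overline A_1+(1-p)\overline A_0$ for any fixed finite-population variable $A$.

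\textbf{Step 1.} Under the consistency relation $Y_i=Z_iY_i(1)+(1-Z_i)Y_i(0)$, the observed treated mean equals $\overline{Y(1)}_1$ and the observed control mean equals $\overline{Y(0)}_0$. Hence $\widehat\tau_{\DM}=\overline{Y(1)}_1-\overline{Y(0)}_0$, while $\tau=\overline{Y(1)}-\overline{Y(0)}$.

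\textbf{Step 2.} Apply the mean decomposition to each potential outcome. For the treated potential outcome,
\[
\overline{Y(1)}_1-\overline{Y(1)}=(1-p)\bigl(\overline{Y(1)}_1-\overline{Y(1)}_0\bigr)=(1-p)\DeltaZ\{Y(1)\}.
\]
For the control potential outcome,
\[
\overline{Y(0)}-\overline{Y(0)}_0=p\bigl(\overline{Y(0)}_1-\overline{Y(0)}_0\bigr)=p\DeltaZ\{Y(0)\}.
\]
Subtracting $\tau$ from $\widehat\tau_{\DM}$ and adding these two expressions gives the second equality in \eqref{eq:M-identity}.

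\textbf{Step 3.} The first equality follows because $\DeltaZ(\cdot)$ is linear in its argument for each fixed $Z$:
\[
\DeltaZ(M)=(1-p)\DeltaZ\{Y(1)\}+p\DeltaZ\{Y(0)\}.
\]
This uses Definition~\ref{def:M} directly.

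There is no substantive obstacle. The only point needing care is the bookkeeping of which arm gets which weight. The treated-mean deviation is scaled by $N_0/N=1-p$, and the control-mean deviation is scaled by $N_1/N=p$. This is why $M$ places weight $1-p$ on $Y(1)$ and $p$ on $Y(0)$.
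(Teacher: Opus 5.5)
Your proof is correct and follows essentially the same route as the paper's. The paper also uses $\overline A=p\overline A_1+(1-p)\overline A_0$ to get $\overline A_1-\overline A=(1-p)\Delta_Z(A)$ and $\overline A_0-\overline A=-p\Delta_Z(A)$, applies these to $Y(1)$ and $Y(0)$, and concludes by linearity of $\Delta_Z$; your observation that the identity holds assignment by assignment, with no expectation over $Z$, is accurate.
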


Equivalently,
$
\Delta_z(M)
=
\left\{\overline Y_1(1)-\overline Y(1)\right\}
+
\left\{\overline Y(0)-\overline Y_0(0)\right\}.
$
The first term is the error from using the realized treatment group to represent the population under treatment. The second is the error from using the realized control group to represent the population under control. \(\Delta_z(M)\) asks: Did the realized treatment group contain units with systematically higher or lower ATE-relevant outcome levels than the realized control group? Accordingly, \(\Delta_z(M)>0\) means that the realized difference-in-means estimate overstates the ATE, while \(\Delta_z(M)<0\) means that it understates the ATE.

By Proposition~\ref{thm:M-identity}, the oracle correction based on observed covariates is $B_{X}=\E_Z[\Delta_Z(M)|\mathcal{F}_X]$. We seek to approximate this complex quantity using the observed covariates $X$. We show that it is sufficient to approximate $M$ up to a constant with function $g:\mathcal X\longrightarrow\mathbb R$ constructed from $X$. The exact identity
$
e_0=\Delta_Z(M)
$
then suggests the plug-in correction $a_g=\Delta_Z\{g(X)\}$. The resulting estimator is
$$
\widehat\tau(g)
 =
 \widehat\tau_{\DM}-\Delta_Z\{g(X)\}
$$
We now show that using $g(X)$ to approximate $M$ up to a constant is equivalent to selecting $\Delta g(X)$ to approximate the optimal correction $B_{X}$. To state the equivalence formally, for two finite-population variables $A$ and $B$, write
$
S_{AB}
=
\frac{1}{N-1}\sum_{i=1}^N(A_i-\overline A)(B_i-\overline B)$, $
S_A^2=S_{AA}$. For vector-valued $X_i$, $S_{XX}$ denotes its finite-population covariance matrix and $S_{XA}$ the vector of covariances between $X$ and $A$. 

\begin{proposition}\label{thm:score-approximation-equivalence}
Let $\mathcal G$ be a prespecified class of scores $g:\mathcal X\to\mathbb R$ that are held fixed with respect to the treatment assignment.

\begin{equation}
\begin{aligned}
\argmin_{g\in\mathcal G}\min_{c\in\mathbb R}\frac1N\sum_{i=1}^N
\{M_i-c-g(X_i)\}^2 &= \argmin_{g\in\mathcal G}\E_Z\left[
\left\{
\Delta_Z(M)-\Delta_Z\{g(X)\}
\right\}^2
\right]\\
&=\argmin_{g\in\mathcal G}
 \E_Z\left[
 \left\{B_X-\Delta_Z\{g(X)\}\right\}^2
 \right]\\
 &= \argmin_{g\in\mathcal G} \MSE_Z\{\widehat\tau(g)\} \\
 &=\argmin_{g\in\mathcal G}  S_{M-g}^2
 \end{aligned}
\end{equation}
\end{proposition}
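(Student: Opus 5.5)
The plan is to reduce every line of the display to a single quantity, the finite-population variance $S^2_{M-g}$. For any fixed scalar $D$, I would show that each objective equals either a positive multiple of $S^2_{D}$ or $S^2_D$ plus a term that does not depend on $g$. Set $D=M-g(X)$. Because $g$ does not depend on $Z$, $D$ is a fixed finite-population variable, and $\Delta_Z$ is linear, so $\Delta_Z(M)-\Delta_Z\{g(X)\}=\Delta_Z(D)$. The classical completely-randomized variance formula (Neyman) then gives $\E_Z\{\Delta_Z(D)\}=0$ and
\[
\E_Z\bigl[\{\Delta_Z(D)\}^2\bigr]=\Var_Z\{\Delta_Z(D)\}=\frac{N}{N_1N_0}S_D^2 .
\]
I would derive this briefly. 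Write $\Delta_Z(D)=\frac{N}{N_1N_0}\sum_i Z_i(D_i-\overline D)$ and use $\Var(Z_i)=p(1-p)$ and $\mathrm{Cov}(Z_i,Z_j)=-p(1-p)/(N-1)$. The positive constant does not depend on $g$, so the second line and the last line share the same argmin.

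The first line is a least-squares intercept. For fixed $g$, the minimizing $c$ is $\overline{M}-\overline{g(X)}$, and the minimum value is $\frac{1}{N}\sum_i(D_i-\overline D)^2=\frac{N-1}{N}S_D^2$. This is again a positive multiple of $S^2_{M-g}$, so the first and last lines agree.

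For the $\MSE$ line, Proposition~\ref{thm:M-identity} gives $\widehat\tau(g)-\tau=\Delta_Z(M)-\Delta_Z\{g(X)\}=\Delta_Z(D)$. Hence $\MSE_Z\{\widehat\tau(g)\}=\E_Z[\{\Delta_Z(D)\}^2]$, which is literally the second objective.

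The third line, involving $B_X$, is the step that needs care. Here I would use the conditional-expectation orthogonality behind Proposition~\ref{thm:conditional-decision}. The key observation is that $\Delta_Z\{g(X)\}$ is $\mathcal F_X$-measurable: $\sum_i Z_i g(X_i)=\sum_j g(x^{(j)})N_{1j}(Z)$ is a function of $H_X(Z)$, and $N_1,N_0$ are fixed. Writing $e_0=\Delta_Z(M)$ and $B_X=\E_Z(e_0\mid\mathcal F_X)$, the Pythagorean decomposition gives
\[
\E_Z\bigl[\{\Delta_Z(M)-\Delta_Z g\}^2\bigr]=\E_Z\{(e_0-B_X)^2\}+\E_Z\bigl[\{B_X-\Delta_Z g\}^2\bigr].
\]
The cross term vanishes because $B_X-\Delta_Z g$ is $\mathcal F_X$-measurable and $\E_Z(e_0-B_X\mid\mathcal F_X)=0$. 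The first term on the right is free of $g$, so the second and third lines have identical argmins.

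The remaining subtleties concern the argmins themselves. Equality is meant as equality of argmin sets, and the scaling constants $\frac{N-1}{N}$ and $\frac{N}{N_1N_0}$ must be strictly positive, which holds when $1\le N_1\le N-1$. The whole argument also relies on $g$ being fixed with respect to $Z$, which is what makes $D$ a fixed population variable and $\Delta_Z g$ an $\mathcal F_X$-measurable quantity. I would state these two requirements explicitly.
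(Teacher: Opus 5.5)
Your proposal is correct and follows the paper's proof essentially step for step. Both arguments use the randomization variance $\Var_Z\{\Delta_Z(A)\}=\frac{N}{N_1N_0}S_A^2$ for a fixed variable, the least-squares intercept that gives $\frac{N-1}{N}S^2_{M-g}$, Proposition~\ref{thm:M-identity} for the MSE line, and the Pythagorean decomposition based on the $\mathcal F_X$-measurability of $\Delta_Z\{g(X)\}$ for the $B_X$ line; the only cosmetic difference is that you derive the variance from pairwise covariances of the $Z_i$ rather than from the simple-random-sample variance of $\overline A_1$.
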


The first and second equality show that choosing $g(X)$ to approximate $M$ up to a constant is equivalent to choosing $\Delta_Z\{g(X)\}$ to approximate $\Delta_Z(M)$, which in turn minimizes the approximation error for the ex-post bias $\mathcal{B}_X$. Note that $\min_{c\in\mathbb R}\frac1N\sum_{i=1}^N \{M_i-c-g(X_i)\}^2=\frac1N\sum_{i=1}^N[(M_i-\overline{M})-(g(X_i)-\overline{g})]^2$. Thus, what matters is how well the centered values of $g(X)$ predict the centered values of $M$. We do not require $g(X)$ to fit the level of $M$ perfectly: the two may differ by an arbitrary additive constant because only $\Delta_Z\{g(X)\}$ enters the correction.

The third equality further shows that the choice of $g(X)$ that best approximates the ex-post bias is also the choice that yields the optimal adjustment estimator. The fourth equality then brings the argument full circle: finding the best adjustment reduces to finding the function $g(X)$ that approximate $M$ up to an additive constant. This yields our main take-home message: the relevant criterion for covariate adjustment is prognosticity for $M$. In particular, neither covariate imbalance nor prognosticity for the observed outcome is, by itself, the fundamental criterion for adjustment.

\subsubsection{Covariate Imbalance and Outcome Prognosticity}
\label{subsec:conventional-ate-prognosticity}

Proposition~\ref{thm:score-approximation-equivalence} identifies the score that adjustment should target. It also reveals why raw covariate imbalance and generic outcome prognosticity are incomplete guides when considered separately. We now make these distinctions explicit.

Two common rationales for covariate adjustment focus on different properties of a pretreatment covariate. The balance-based rationale asks whether the covariate is distributed differently between the treatment and control groups in the realized assignment. The prognostic rationale asks whether the covariate predicts the outcome. Our framework shows that neither property, considered on its own, identifies whether the covariate is useful for estimating the ATE. A stronger point is that the two properties are insufficient even when they occur together.

The term \emph{prognostic} ordinarily refers to pretreatment information that predicts an outcome. In the causal-inference literature, prognostic scores summarize the association between baseline covariates and potential responses \citep{hansen2008prognostic}. In the randomized-trial literature, the standard
precision argument likewise emphasizes adjustment for baseline covariates that are associated with the outcome \citep{tsiatis2008covariate,lin2013agnostic}. The clinical biomarker literature further distinguishes a \emph{prognostic} factor, which predicts the outcome level, from a \emph{predictive} factor, which predicts differential treatment benefit and is therefore an effect modifier \citep{ballman2015biomarker}. 

These conventional uses are informative, but they do not fully answer the question studied here. For estimation of the ATE, it is not enough to ask whether $X$ predicts $Y(1)$ or $Y(0)$ separately. The relevant question is whether $X$ predicts the particular combination
\[
M_i=(1-p)Y_i(1)+pY_i(0)
\]
that determines the randomization error of the difference-in-means estimator.

The conventional notion of imbalance requires a similar refinement. A balance table typically reports treatment--control differences in individual covariates,
\(
\Delta_z(X^1),\ldots,\Delta_z(X^K),
\)
possibly accompanied by standardized differences or balance-test $p$-values. These quantities describe the realized assignment and can be useful for transparency and for diagnosing possible problems in experimental implementation. They do not, however, directly measure error in the estimated
ATE. By Proposition~\ref{thm:M-identity}, the imbalance that exactly determines the realized randomization error is
\(
\Delta_z(M),
\)
not $\Delta_z(X^j)$ for an arbitrary raw covariate.

The following example illustrates both distinctions simultaneously. It echoes Corollary 1.1, Remark (i), of \citet{lin2013agnostic}, but generalizes the insight to a broader setting and incorporates covariate imbalance into the analysis.

\begin{example}[Offsetting Prognostic Imbalances under Unequal Allocation]
\label{ex:cancellation}

Consider an experiment with four units, one of which is assigned to treatment. Thus,
\(
p=\frac14\).
Let \(X_i=(X_i^1,X_i^2)^\trans\) contain two pretreatment covariates, and suppose that the potential outcomes satisfy
\begin{equation}
Y_i(1)
=
24+X_i^1+5X_i^2,
\qquad
Y_i(0)
=
16+5X_i^1+X_i^2.
\label{eq:same-sign-example-potential-outcomes}
\end{equation}
Both covariates therefore positively predict both potential outcomes. Consider the following finite-population science table:
\[
\begin{array}{c|rrrr}
i & 1 & 2 & 3 & 4\\
\hline
X_i^1    & 2  & -2 & 0  & 2\\
X_i^2    & 0  & 0  & 1  & 2\\
Y_i(1)   & 26 & 22 & 29 & 36\\
Y_i(0)   & 26 & 6  & 17 & 28
\end{array}.
\]

Suppose that unit \(1\) is assigned to treatment and units \(2,3,4\) are assigned to control. The two covariates are both imbalanced:
\begin{align}
\Delta_z(X^1)
&=
2-\frac{-2+0+2}{3}
=
2,
\label{eq:same-sign-example-imbalance-x1}\\
\Delta_z(X^2)
&=
0-\frac{0+1+2}{3}
=
-1.
\label{eq:same-sign-example-imbalance-x2}
\end{align}
Thus, a conventional balance table would report a positive imbalance in \(X^1\) and a negative imbalance in \(X^2\).

The ATE-relevant randomization-error score, however, is
\begin{align}
M_i
&=
\frac34Y_i(1)+\frac14Y_i(0) \nonumber\\
&=
\frac34\left(24+X_i^1+5X_i^2\right)
+
\frac14\left(16+5X_i^1+X_i^2\right) \nonumber\\
&=
22+2X_i^1+4X_i^2.
\label{eq:same-sign-example-M}
\end{align}
Consequently,
\begin{align}
\Delta_z(M)
&=
2\Delta_z(X^1)+4\Delta_z(X^2) \nonumber\\
&=
2(2)+4(-1) \nonumber\\
&=
0.
\label{eq:same-sign-example-M-balance}
\end{align}
The two raw covariate imbalances therefore offset after they are translated into their ATE-relevant prognostic contributions.

Indeed, the finite-population ATE is
\[
\tau
=
\frac14
\left[
(26-26)+(22-6)+(29-17)+(36-28)
\right]
=
9,
\]
while the realized difference-in-means estimator is
\[
\widehat\tau_{\DM}
=
26-\frac{6+17+28}{3}
=
26-17
=
9.
\]
Hence,
\(
\widehat\tau_{\DM}-\tau
=
\Delta_z(M)
=
0,
\)
even though both covariates are prognostic for both potential outcomes and both are imbalanced in the realized assignment.
\end{example}

Example~\ref{ex:cancellation} establishes that raw covariate imbalance does not necessarily imply ATE-relevant imbalance:
\begin{equation}
\Delta_z(X)\neq0
\quad\not\Rightarrow\quad
\Delta_z(M)\neq0.
\end{equation}
It also establishes the stronger joint statement
\begin{equation}
\left[
\Delta_z(X)\neq0
\quad\text{and}\quad
X\text{ predicts }Y(1)\text{ and }Y(0)
\right]
\quad\not\Rightarrow\quad
\Delta_z(M)\neq0.
\end{equation}

Crucially, the failure in this example is not that \(X\) lacks prognostic value for \(M\). In fact, \(M\) is perfectly predicted by
\(
g(X)=22+2X^1+4X^2.
\)
Rather, conventional measures of covariate imbalance do not aggregate covariate differences according to their ATE-relevant prognostic weights. Here, the positive contribution from the imbalance in \(X^1\) is exactly offset by the negative contribution from the imbalance in \(X^2\). Thus, even a collection of genuinely prognostic and individually imbalanced covariates need not generate any realized error in the ATE estimate.

The relationship between prognosticity and realized imbalance becomes especially transparent for a linear score,
\(
g(X_i)=\gamma^{\mathsf T}X_i,
\)
the relevant feasible imbalance is
\begin{equation}
\Delta_z\{g(X)\}
=
\gamma^{\mathsf T}\Delta_z(X)
=
\sum_{j=1}^K\gamma_j\Delta_z(X^j).
\label{eq:weighted-covariate-imbalance}
\end{equation}
Equation~\eqref{eq:weighted-covariate-imbalance} makes clear why raw covariate imbalance is not sufficient. Each covariate difference matters only through its contribution to predicting $M$. A large imbalance in a covariate with no $M$-prognostic value receives zero weight. A smaller imbalance in a highly $M$-prognostic covariate can receive substantial weight. Imbalances in different covariates may also reinforce or offset one another depending on their signs and their prognostic coefficients.

Accordingly, the oracle score-selection problem is
$
g^*
\in
\argmin_{g\in\mathcal G}S_{M-g}^2,$
where the goal is not necessarily to identify every variable that predicts either potential outcome, but rather to construct a score whose centered values predict the centered values of $M$.


The conceptual distinction can therefore be summarized as follows:
$$
\boxed{
\begin{aligned}
\text{Correct prognostic question:}\quad
&\text{Does }g(X)\text{ predict }M?\\
\text{Correct imbalance question:}\quad
&\text{Is }g(X)\text{ imbalanced in the realized assignment?}
\end{aligned}}
$$
At the oracle level, the relevant imbalance is $\Delta_z(M)$. In practice, because $M$ is unobserved, researchers estimate a score $\widehat g(X)\approx M$ and report
\(
\Delta_z\{\widehat g(X)\}.
\)
We return to this distinction in Section~\ref{sec:prognostic-balance-report}, where we develop an
ATE-relevant prognostic balance report and explain how the score's predictive value and its realized imbalance should be reported separately.

\section{Feasible Prognostic Adjustment}
\label{subsec:feasible-prognostic-adjustment}

The preceding section identifies the oracle adjustment problem. Under complete randomization,
\(
\widehat\tau_{\DM}-\tau=\Delta_Z(M)\), where $
M_i=(1-p)Y_i(1)+pY_i(0),
$
and a pretreatment score \(g(X)\) produces the adjusted estimator
\(
\widehat\tau(g)
=
\widehat\tau_{\DM}-\Delta_Z\{g(X)\}.
\) The difficulty is that \(M_i\) is not observed because \(Y_i(1)\) and \(Y_i(0)\) are never jointly
observed.


There are two natural implementation routes. Researchers may learn the single score \(g\) directly from observed outcomes, or they may estimate the two arm-specific outcome functions and combine them using the complementary design weights. Both routes enter the treatment-effect estimator through the same score imbalance. In particular, the arm-specific route yields exactly the familiar generalized-regression or augmented inverse-probability-weighted estimator.

\subsection{Direct Learning of the Single Score}

To learn \(M\) without observing both potential outcomes, consider the design-weighted finite-population loss
\begin{equation}
\mathcal L_p(g)
=
(1-p)\frac1N\sum_{i=1}^N
\{Y_i(1)-g(X_i)\}^2
+
p\frac1N\sum_{i=1}^N
\{Y_i(0)-g(X_i)\}^2.
\label{eq:design-weighted-loss}
\end{equation}
Its observable analogue is
\begin{equation}
\widehat{\mathcal L}_p(g)
=
(1-p)\frac1{N_1}\sum_{i=1}^N
Z_i\{Y_i-g(X_i)\}^2
+
p\frac1{N_0}\sum_{i=1}^N
(1-Z_i)\{Y_i-g(X_i)\}^2.
\label{eq:sample-design-loss}
\end{equation}
The complementary weights are chosen to match the potential-outcome weights in \(M=qY(1)+pY(0)\). Under equal allocation, the observed criterion reduces to ordinary pooled squared-error loss:
\[
\widehat{\mathcal L}_{1/2}(g)
=
\frac1N\sum_{i=1}^N
\{Y_i-g(X_i)\}^2.
\]
Under unequal allocation, the complementary weights are essential. The
criterion can be written as
\[
\widehat{\mathcal L}_p(g)
=
\frac1N\sum_{i=1}^N
\omega_i\{Y_i-g(X_i)\}^2,
\]
where
\[
\omega_i
=
\begin{cases}
q/p, & Z_i=1,\\
p/q, & Z_i=0.
\end{cases}
\]
Thus, the potential-outcome surface observed in the smaller treatment arm receives greater weight because its composition contributes more to the randomization error of the difference-in-means estimator.

The following proposition shows that the observable criterion targets precisely the prediction problem identified by the oracle MSE analysis.

\begin{proposition}[Observed-outcome learning targets the oracle score]
\label{thm:observable-loss}
Let \(\mathcal G\) be a prespecified class of functions, and let \(\mathcal L_p(g)\) be defined as in \eqref{eq:design-weighted-loss}. For every
\(g\in\mathcal G\),
\(
\argmin_{g\in\mathcal G}\mathcal L_p(g)
=
\argmin_{g\in\mathcal G}
\frac1N\sum_{i=1}^N
\{M_i-g(X_i)\}^2.\)
\end{proposition}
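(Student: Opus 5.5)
The plan is to show that $\mathcal L_p(g)$ and $\frac1N\sum_i\{M_i-g(X_i)\}^2$ differ by a quantity that does not depend on $g$. Equal objectives up to an additive constant have the same set of minimizers over any class $\mathcal G$, so the argmin statement follows immediately. Write $q=1-p$, so that $M_i=qY_i(1)+pY_i(0)$ and $q+p=1$.

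The key step is a pointwise bias--variance (parallel-axis) decomposition for each unit $i$. For any real number $c$, the $q$/$p$-weighted average of the squared deviations of $Y_i(1)$ and $Y_i(0)$ from $c$ satisfies
\begin{equation*}
q\{Y_i(1)-c\}^2+p\{Y_i(0)-c\}^2
=
\{M_i-c\}^2+q\{Y_i(1)-M_i\}^2+p\{Y_i(0)-M_i\}^2 .
\end{equation*}
To verify it, write $Y_i(t)-c=\{Y_i(t)-M_i\}+\{M_i-c\}$ and expand. The cross term is
\begin{equation*}
2(M_i-c)\bigl[q\{Y_i(1)-M_i\}+p\{Y_i(0)-M_i\}\bigr],
\end{equation*}
and it vanishes because $qY_i(1)+pY_i(0)-(q+p)M_i=0$ by the definition of $M_i$ in Definition~\ref{def:M}.

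Next, apply the identity with $c=g(X_i)$ and average over $i$. This gives
\begin{equation*}
\mathcal L_p(g)=\frac1N\sum_{i=1}^N\{M_i-g(X_i)\}^2+R,
\end{equation*}
where
\begin{equation*}
R=\frac1N\sum_{i=1}^N\bigl[q\{Y_i(1)-M_i\}^2+p\{Y_i(0)-M_i\}^2\bigr].
\end{equation*}
The remainder $R$ depends only on the fixed potential outcomes, not on $g$. It can equivalently be written as $pq\,\frac1N\sum_i\{Y_i(1)-Y_i(0)\}^2$, since $Y_i(1)-M_i=p\{Y_i(1)-Y_i(0)\}$ and $Y_i(0)-M_i=-q\{Y_i(1)-Y_i(0)\}$. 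Because the two criteria differ by the constant $R$ for every $g\in\mathcal G$, their minimizers over $\mathcal G$ coincide, including when the argmin is a set or is empty.

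There is no real obstacle here; the only care needed is that the weights $q$ and $p$ sum to one and match the weights defining $M$, which is exactly why the cross term cancels. It may be worth adding a remark after the proof. Combined with Proposition~\ref{thm:score-approximation-equivalence}, the result targets $M$ in levels, and this is stronger than targeting $M$ up to a constant. It also motivates $\widehat{\mathcal L}_p$: under complete randomization, $\E_Z\widehat{\mathcal L}_p(g)=\mathcal L_p(g)$ for any fixed $g$, because $\E_Z[Z_i/N_1]=1/N$ and $\E_Z[(1-Z_i)/N_0]=1/N$.
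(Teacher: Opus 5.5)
Your proof is correct and is essentially the paper's argument. Both establish the unit-level identity $q\{Y_i(1)-g_i\}^2+p\{Y_i(0)-g_i\}^2=\{M_i-g_i\}^2+pq\{Y_i(1)-Y_i(0)\}^2$ (you by centering at $M_i$, the paper by expanding the squares and using the weighted-variance identity), average it over units, and separately check that $\E_Z\{\widehat{\mathcal L}_p(g)\}=\mathcal L_p(g)$ for fixed $g$.

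One caution on your closing aside. Fitting $M$ in levels is not ``stronger'' than fitting it up to a constant unless $\mathcal G$ is closed under adding constants. Otherwise the level-loss minimizer need not minimize $S_{M-g}^2$, and $S_{M-g}^2$ is what governs the ATE MSE in Proposition~\ref{thm:score-approximation-equivalence}.
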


The proposition \ref{thm:observable-loss} identifies the oracle target. Define
\(
g_{\mathcal G}^*
\in
\argmin_{g\in\mathcal G}\mathcal L_p(g).\)
Proposition~\ref{thm:observable-loss} shows that \(g_{\mathcal G}^*\) is the best predictor of \(M\) within \(\mathcal G\). When \(\mathcal G\) is closed under adding constants, it also induces the same treatment--control imbalance as the MSE-optimal centered predictor characterized in Proposition~\ref{thm:score-approximation-equivalence}. 

The final statement of Proposition~\ref{thm:observable-loss} is pointwise in \(g\). It says that the observed loss is unbiased for the oracle loss when a candidate function is held fixed over the randomization distribution. In practice, researchers estimate the score. In a low-dimensional class, a natural estimator is the empirical risk minimizer
$
\widehat g
\in
\argmin_{g\in\mathcal G}
\widehat{\mathcal L}_p(g).
$
In a flexible or high-dimensional class, researchers may instead use cross-fitting, or leave-one-out prediction. For example, partition the observations into folds \(I_1,\ldots,I_V\), estimate \(\widehat g^{(-v)}\) without using outcomes in fold \(I_v\), and set
\(
\widehat g_i
=
\widehat g^{(-v)}(X_i)\), \(
 i\in I_v.
\)
The resulting direct score-adjustment estimator is
\begin{equation}
\begin{aligned}
\widehat\tau_{\mathrm{dir}}
&=
\frac1{N_1}\sum_{i=1}^N
Z_i\{Y_i-\widehat g_i\}
-
\frac1{N_0}\sum_{i=1}^N
(1-Z_i)\{Y_i-\widehat g_i\}\\
&=
\widehat\tau_{\DM}
-
\Delta_Z(\widehat g).
\end{aligned}
\label{eq:direct-score-estimator}
\end{equation}
Existing work on cross-estimation, leave-one-out adjustment, regularization, and direct machine-learning augmentation provides conditions under which estimated scores yield valid and efficient treatment-effect estimators \citep{bloniarz2016lasso,wager2016high,wu2018loop}. We provide results on consistency and variance estimation in SI \ref{si:direct}. We note that \citet{zhang2019machine} develop a related approach that directly learns an augmentation function and derives the corresponding variance estimator from a semiparametric perspective. Their empirical loss uses an estimated, arm-centered influence function response, whereas our learner uses the raw-outcome complementary-weighted loss. These criteria have the same population target after translation and scale, but their finite-sample minimizers need not coincide.

\subsection{Arm-Specific Outcome Regression}

Instead of learning a single score directly, researchers may estimate the arm-specific outcome functions
\(
m_z(x)
\approx
Y_i(z),\)
and combine them as $g_{m_1,m_0}(x)
=
q\,m_1(x)+p\,m_0(x).$
If \(m_1(X_i)\) predicts \(Y_i(1)\) and \(m_0(X_i)\) predicts \(Y_i(0)\), then \(g_{m_1,m_0}(X_i)\) predicts the ATE-relevant score
\[
M_i=(1-p)Y_i(1)+pY_i(0).
\]

Given any pair of prediction rules \(m_1\) and \(m_0\), define
\begin{equation}
\begin{aligned}
\widehat\tau(m_1,m_0)
={}&
\frac1N\sum_{i=1}^N
\left[
m_1(X_i)-m_0(X_i)
+
\frac{Z_i}{p}\{Y_i-m_1(X_i)\}
-
\frac{1-Z_i}{q}\{Y_i-m_0(X_i)\}
\right]\\
={}&
\frac1N\sum_{i=1}^N
\{m_1(X_i)-m_0(X_i)\}\\
&+
\frac1{N_1}\sum_{i=1}^N
Z_i\{Y_i-m_1(X_i)\}\\
&-
\frac1{N_0}\sum_{i=1}^N
(1-Z_i)\{Y_i-m_0(X_i)\}.
\end{aligned}
\label{eq:greg}
\end{equation}
Because the treatment probability \(p=N_1/N\) is known by design, Equation~\eqref{eq:greg} is the familiar generalized-regression or AIPW estimator for a randomized experiment. The following algebraic identity connects this familiar estimator directly to the single-score framework.

\begin{proposition}[AIPW is single-score adjustment]
\label{prop:greg-score}
For any numerical prediction vectors \(\{m_1(X_i),m_0(X_i)\}_{i=1}^N\), including predictions estimated from the observed data,
\begin{equation}
\widehat\tau(m_1,m_0)
=
\widehat\tau_{\DM}
-
\Delta_Z\{(1-p)\,m_1(X)+p\,m_0(X)\}
=
\widehat\tau_{\DM}
-
\Delta_Z\{g_{m_1,m_0}(X)\}.
\label{eq:greg-score}
\end{equation}
\end{proposition}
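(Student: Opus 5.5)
The identity is purely algebraic. It holds for every realized assignment and every fixed vector of numbers $\{m_1(X_i),m_0(X_i)\}$, so no randomization argument is needed. In particular, it does not matter whether the predictions were estimated from the same data. I would start from the second line of \eqref{eq:greg}, write $m_{1i}=m_1(X_i)$, $m_{0i}=m_0(X_i)$ and $q=1-p$, and show that everything other than $\widehat\tau_{\DM}$ collapses to $-\Delta_Z\{q\,m_1(X)+p\,m_0(X)\}$.

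\textbf{Step 1 (separate the difference in means).} Split off the outcome terms:
\[
\frac1{N_1}\sum_i Z_i\{Y_i-m_{1i}\}-\frac1{N_0}\sum_i(1-Z_i)\{Y_i-m_{0i}\}=\widehat\tau_{\DM}-\overline{m_1}_{\,1}+\overline{m_0}_{\,0},
\]
where $\overline{m_1}_{\,1}$ is the treated-group mean of $m_1$ and $\overline{m_0}_{\,0}$ is the control-group mean of $m_0$, as in the paper's notation. What remains to be shown is
\[
\overline{m_1}-\overline{m_0}-\overline{m_1}_{\,1}+\overline{m_0}_{\,0}=-\Delta_Z(q\,m_1+p\,m_0).
\]

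\textbf{Step 2 (decompose full-population means).} Under complete randomization we have exactly $N_1=pN$ treated units. Hence, for any fixed variable $A$,
\[
\overline A=p\,\overline A_1+q\,\overline A_0 .
\]
This gives
\[
\overline{m_1}-\overline{m_1}_{\,1}=q\bigl(\overline{m_1}_{\,0}-\overline{m_1}_{\,1}\bigr)=-q\,\Delta_Z(m_1),
\]
and similarly
\[
\overline{m_0}_{\,0}-\overline{m_0}=p\bigl(\overline{m_0}_{\,0}-\overline{m_0}_{\,1}\bigr)=-p\,\Delta_Z(m_0).
\]

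\textbf{Step 3 (combine).} Adding the two pieces gives $-[q\,\Delta_Z(m_1)+p\,\Delta_Z(m_0)]$. By linearity of $\Delta_Z$ this equals $-\Delta_Z\{(1-p)m_1(X)+p\,m_0(X)\}=-\Delta_Z\{g_{m_1,m_0}(X)\}$, which proves \eqref{eq:greg-score}.

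I would also check briefly that the first line of \eqref{eq:greg} equals the second. This follows because $\frac{1}{Np}=\frac1{N_1}$ and $\frac{1}{Nq}=\frac1{N_0}$.

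The only point that needs care is Step 2. The decomposition $\overline A=p\overline A_1+q\overline A_0$ must use the realized arm sizes, and these equal $pN$ and $qN$ exactly because complete randomization fixes $N_1$. The decomposition is applied to numerical vectors, so data-dependence of $m_z$ is irrelevant. Beyond that, the argument is a routine rearrangement with no real obstacle.
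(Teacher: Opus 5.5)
Your proof is correct and takes the paper's approach. The paper gives no separate written proof of this proposition and treats it as a pure algebraic identity; your Step 2 applies to the realized prediction vectors the same decompositions, $\overline A_1-\overline A=q\,\Delta_Z(A)$ and $\overline A_0-\overline A=-p\,\Delta_Z(A)$, that drive the paper's proof of the exact randomization-error identity (Proposition~\ref{thm:M-identity}). Your observation that the argument holds pointwise for each realized assignment is exactly what justifies the clause ``including predictions estimated from the observed data.''
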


Proposition~\ref{prop:greg-score} establishes that the proposed adjustment does not introduce a new treatment-effect estimator. Rather, it gives a new interpretation of familiar regression and augmentation estimators: they subtract the realized treatment--control imbalance in the estimated predictor of \(M\).

In a low-dimensional setting, researchers may estimate \(m_1\) and \(m_0\) using separate regressions within the treatment and control groups. With linear arm-specific regressions on centered covariates, the resulting estimator is equivalent to fully interacted regression adjustment \citep{lin2013agnostic,negi2021revisiting}. More generally, Equation~\eqref{eq:greg} is the standard randomized-trial AIPW or generalized-regression estimator studied in the semiparametric literature
\citep{robins1994estimation,robins1995analysis,tsiatis2008covariate,
zhang2008improving}. Flexible arm-specific regressions may be estimated using regularization or machine learning, with sample splitting, cross-fitting, or leave-one-out prediction used under the conditions developed in the corresponding literature \citep{bloniarz2016lasso,wager2016high,wu2018loop}.

\subsection{Discussion}

The fixed-score theory clarifies the cost of learning the score. Conditional on
\(\widehat g\),
\begin{equation}
\begin{aligned}
\MSE_Z\{\widehat\tau(\widehat g)\mid\widehat g\}
&=
\frac{N}{N_1N_0}S_{M-\widehat g}^2\\
&=
\underbrace{
\frac{N}{N_1N_0}S_{M-g^*}^2
}_{\text{class-oracle risk}}
+
\underbrace{
\frac{N}{N_1N_0}
\left\{
S_{M-\widehat g}^2-S_{M-g^*}^2
\right\}
}_{\text{excess score risk}}.
\end{aligned}
\label{eq:learned-score-risk}
\end{equation}
The second term is nonnegative by the definition of \(g^*\). Averaging it over repeated training samples gives the expected cost of learning the score rather than knowing the class oracle. The factor
\(
\frac{N}{N_1N_0}
=
\frac{1}{Npq}
\)
reflects the size and allocation of the target experiment, whereas the excess score risk depends on the training sample, the complexity of the candidate class, and the estimation procedure.

When \(\widehat g\) is learned from the same experimental outcomes, it depends on the realized assignment, and the fixed-score decomposition in Equation~\eqref{eq:learned-score-risk} does not directly provide a finite-sample no-harm guarantee. Because our theory leads to the same covariate-adjustment estimators already established in the literature, we do not need to develop separate theory for each specific learner. The existing literature has already characterized the efficiency gains associated with these estimators \citep{lin2013agnostic,negi2021revisiting,robins1994estimation,
robins1995analysis,tsiatis2008covariate,zhang2008improving, bloniarz2016lasso,wager2016high,wu2018loop,zhang2019machine}. Our contribution is to provide a new justification from a decision-theoretic perspective and to show that the resulting estimator is equivalent to direct adjustment for ex-post bias.

Accordingly, practical implementation does not require researchers to classify each pretreatment covariate separately as prognostic or nonprognostic. Researchers may prespecify a substantively reasonable set of baseline covariates and use them jointly to estimate the relevant score. In a modest low-dimensional setting, arm-specific linear regressions yield the usual fully interacted regression or AIPW estimator. When the candidate set is large or the prediction rule is flexible, regularization together with cross-fitting, or leave-one-out prediction should be used under the conditions established in the corresponding literature. 

Which estimator should researchers use in practice? We compare their performance across a range of scenarios in the simulation section. One important lesson is that, because arm-specific estimators require estimating two separate nuisance functions, they can incur greater estimation costs and may therefore perform worse than our direct estimator in finite samples.

\section{Simulation Design and Results}
\label{sec:simulation}

The simulations examine practical choice among adjustment strategies along two dimensions. The first is estimator performance across signal structures, allocation ratios, and sample sizes. With $N=400$ and $K=60$, the main study compares seven estimators across four scenarios. A focused extension holds $K=60$ and balanced assignment fixed while varying $N\in\{160,240,400\}$ to compare the methods in smaller samples. The second dimension is covariate selection policy. We summarize this comparison at the end of the section, and Appendix~\ref{app:selection-simulation} reports the full study.

\subsection{Estimators}
\label{subsec:simulation-comparisons}

We begin with the difference-in-means estimator as the unadjusted benchmark, followed by three regression-adjustment estimators. All-\(X\) OLS represents the familiar additive regression adjustment using the full set of covariates, while All-\(X\) Lin allows the covariate slopes to differ across treatment arms. Because unpenalized adjustment may perform poorly when the number of covariates is large relative to the sample size (\(K=60\) and \(N=400\)), One-step LASSO provides a regularized alternative to additive adjustment: it penalizes the covariate coefficients while leaving the treatment coefficient unpenalized.

The remaining estimators focus on the ATE-relevant prediction target. Following section \ref{subsec:feasible-prognostic-adjustment}, we consider direct and arm-specific estimators. Direct $M$-LASSO learns a single adjustment score under the complementary design-weighted loss. Arm-specific LASSO instead fits separate treatment- and control-arm outcome models and combines their predictions; the resulting estimator has the familiar cross-fitted AIPW form with the known assignment probability. Finally, the DGP oracle uses the known systematic score and therefore benchmarks performance in the absence of score-estimation error. Table~\ref{tab:simulation-methods} summarizes these seven estimators. Although we primarily use LASSO because of its computational convenience and familiarity to empirical researchers, in practice, researchers are free to use any machine-learning method. Implementation details are included in SI \ref{si:implement}.

\begin{table}[!htbp]
\centering
\small
\renewcommand{\arraystretch}{1.16}
\caption{Estimators in the main simulation}
\label{tab:simulation-methods}
\begin{tabularx}{\textwidth}{>{\raggedright\arraybackslash}p{0.22\textwidth} X >{\raggedright\arraybackslash}p{0.28\textwidth}}
\hline
Estimator & Construction & Reason for inclusion \\
\hline
Difference in means (DM)
& No covariate adjustment.
& No-adjustment benchmark. \\
All-$X$ OLS
& Additive OLS using all 60 prespecified covariates.
& Familiar additive regression adjustment. \\
All-$X$ Lin
& Fully interacted Lin regression using all 60 covariates.
& Familiar adjustment with arm-specific slopes. \\
One-step LASSO
& One regression of $Y$ on unpenalized $Z$ and penalized $X$.
& Regularized additive adjustment. \\
Direct $M$-LASSO
& Cross-fitted learning under the complementary design-weighted loss.
& Direct implementation of the paper's score-learning target. \\
Arm-specific LASSO (AIPW)
& Cross-fitted arm regressions combined using the design proportions.
& Familiar cross-fitted AIPW implementation. \\
DGP oracle
& Adjustment using the known systematic score.
& Benchmark without score-estimation error. \\
\hline
\end{tabularx}
\end{table}

\subsection{Data-generating framework}
\label{subsec:simulation-design}

To evaluate these estimators in a common setting, each baseline replication generates a finite-population science table for $N=400$ units and $K=60$ pretreatment covariates. Covariates are independent across units and follow
\begin{equation}
X_i\sim\mathcal N(0,\Sigma),
\text{ where } 
\Sigma_{jk}=0.3^{|j-k|},
i=1,\ldots,N.
\label{eq:simulation-x}
\end{equation}
Thus, every covariate has superpopulation variance one, and correlations decline with the distance between their indices. We use the prespecified covariance matrix $\Sigma$ for all normalizations; the realized finite population is not recentered or rescaled.

We generate the potential outcomes through the unit-level treatment effect $\tau_i$ and the score $M_i$ from Proposition~\ref{thm:M-identity}. For scenario $s$, let $g_s^\circ(X)$ be a normalized covariate index that carries useful signal for $M$, and let $h_s^\circ(X)$ be a normalized index that generates treatment-effect heterogeneity. With $p=N_1/N$, $q=1-p$, and $\varepsilon_i\sim\mathcal N(0,1)$ independently of $X_i$, define
\begin{align}
M_i
&=
\sqrt{R_M^2}\,g_s^\circ(X_i)
+\sqrt{1-R_M^2}\,\varepsilon_i,
\nonumber\\
\tau_i
&=0.20+\sigma_{\tau,s}h_s^\circ(X_i),
\nonumber\\
Y_i(1)&=M_i+p\tau_i,
\text{ and }
Y_i(0)=M_i-q\tau_i.
\label{eq:simulation-dgp}
\end{align}
This construction guarantees both $Y_i(1)-Y_i(0)=\tau_i$ and $qY_i(1)+pY_i(0)=M_i$.

The parameter $R_M^2$ controls how much of the relevant score can be predicted from baseline covariates. Because every nonzero $g_s^\circ(X)$ is normalized to have superpopulation variance one and is independent of $\varepsilon_i$,
\begin{equation}
\E(M_i\mid X_i)=\sqrt{R_M^2}\,g_s^\circ(X_i),
\text{ and }
R_M^2=
\frac{\Var\{\E(M_i\mid X_i)\}}{\Var(M_i)}.
\label{eq:simulation-r2}
\end{equation}
Thus, $R_M^2=0$ means that the covariates contain no systematic information about ATE estimation error, whereas $R_M^2=0.50$ means that they explain one half of the superpopulation variation in $M$. It is neither a measure of realized covariate balance nor an $R^2$ for one treatment arm. Because the generated finite population is not restandardized, its empirical predictive fit fluctuates around the nominal value.

Table~\ref{tab:simulation-dgps} summarizes how the four scenarios vary the predictive signal, treatment-effect heterogeneity, and allocation ratio. Its final column explains why each scenario is included.

\begin{table}[!htbp]
\centering
\small
\renewcommand{\arraystretch}{1.16}
\caption{Structural scenarios in the main simulation}
\label{tab:simulation-dgps}
\begin{tabularx}{\textwidth}{>{\raggedright\arraybackslash}p{0.21\textwidth} c c c X}
\hline
Scenario & $p$ & $R_M^2$ & $\sigma_{\tau,s}$ & Purpose \\
\hline
Null
& $1/2$ & $0$ & $0$
& Compares the methods when no baseline covariate can improve ATE precision. \\
Sparse $M$ signal
& $1/2$ & $0.50$ & $0$
& Tests whether the methods recover a gain carried by four of 60 covariates. \\
Arm-specific cancellation
& $1/2$ & $0$ & $2$
& Makes outcomes predictable within arms even though the relevant score has no covariate signal. \\
Unequal allocation with heterogeneous effects
& $1/4$ & $0.50$ & $1$
& Makes the arm weights consequential while different covariates predict the score and treatment effects. \\
\hline
\end{tabularx}
\end{table}

To implement the signal and heterogeneity components, we use the following
sparse normalized indices:
\begin{align}
g_s^\circ(X)
&=\frac{X_1+X_2+X_3+X_4}{\sqrt{6.214}}
&&\text{in the sparse-signal scenario},
\nonumber\\
h_s^\circ(X)
&=\frac{X_1+X_2-X_3}{\sqrt{2.82}}
&&\text{in the cancellation scenario},
\nonumber\\
g_s^\circ(X)
&=\frac{X_1+X_2+X_3}{\sqrt{4.38}},
\qquad
h_s^\circ(X)=\frac{X_4-X_5+X_6}{\sqrt{1.98}}
&&\text{under unequal allocation}.
\label{eq:simulation-directions}
\end{align}
The denominators are the theoretical standard deviations of the displayed indices under $\Sigma$. An index not specified for a scenario is zero.

After generating the science table, we assign exactly $N_1=pN$ units to treatment by complete randomization. In replication $r$, all errors are evaluated against the realized finite-population ATE. When effects are heterogeneous, $\tau_r$ need not equal $0.20$, although its expectation across generated populations is $0.20$.

\subsection{Performance measures}
\label{subsec:simulation-performance-measures}

We evaluate the estimators at two levels: ATE performance for all methods and held-out score-prediction performance for the cross-fitted learners. For each estimator $\ell$, we report bias, root MSE, and MSE relative to the difference in means:
\begin{equation}
\operatorname{RelativeMSE}_{\ell}
=
\frac{\sum_{r=1}^R(\widehat\tau_{\ell r}-\tau_r)^2}
{\sum_{r=1}^R(\widehat\tau_{\DM,r}-\tau_r)^2}.
\label{eq:simulation-relative-mse}
\end{equation}
A value below one indicates an MSE improvement over no adjustment. All methods within a replication use the same science table and assignment. We report paired Monte Carlo standard errors (MCSE) and 95\% Monte Carlo intervals for relative MSE.

For Direct and Arm-specific LASSO, we additionally evaluate the held-out score predictions. The reported ``$M$ explained'' measure is one minus the fold-centered prediction risk of $\widehat g$ relative to a constant score; we also report the fold-centered correlation with $M$ and the improvement in the observable held-out complementary weighted loss. These are honest out-of-fold diagnostics. We do not compare them with in-sample losses for All-$X$ OLS, All-$X$ Lin, or One-step LASSO.

To compare performance across sample sizes, we summarize each method's distance from the known-score benchmark using the normalized oracle gap
\begin{equation}
G_{\ell,N}
=
\frac{
\operatorname{MSE}(\widehat\tau_{\ell,N})
-\operatorname{MSE}(\widehat\tau_{\mathrm{or},N})
}{
\operatorname{MSE}(\widehat\tau_{\DM,N})
}
=
\operatorname{RelativeMSE}_{\ell,N}
-\operatorname{RelativeMSE}_{\mathrm{or},N}.
\label{eq:simulation-oracle-gap}
\end{equation}
Thus, $G_{\ell,N}$ is the MSE difference from the DGP oracle in units of difference-in-means MSE. It is zero for the oracle, and smaller values indicate performance closer to that benchmark. In the cancellation scenario, the oracle equals the difference in means, so $G_{\ell,N}$ is the proportional MSE difference between feasible adjustment and no adjustment. We report paired Monte Carlo standard errors and untruncated 95\% Monte Carlo intervals for this measure.

\subsection{Results}
\label{subsec:simulation-results}

\subsubsection{Baseline performance}
\label{subsubsec:simulation-baseline-results}

Table~\ref{tab:simulation-ate-results} and Figure~\ref{fig:simulation-performance} first report baseline ATE performance.
After interpreting those results, we use Table~\ref{tab:simulation-prediction-results} to link the performance of the
cross-fitted procedures to their held-out score predictions. In this baseline study, bias is not distinguishable from Monte Carlo noise: the largest absolute bias is $0.0056$ with an MCSE of $0.0035$, and no bias estimate exceeds $1.61$ MCSEs in absolute value. 

\begin{table}[!htbp]
\centering
\small
\renewcommand{\arraystretch}{1.08}
\caption{ATE performance in the baseline simulation}
\label{tab:simulation-ate-results}
\resizebox{\textwidth}{!}{%
\begin{tabular}{llrrr}
\hline
Scenario & Estimator & Bias (MCSE) & RMSE & Rel. MSE (MCSE) [95\% MCI] \\
\hline
Null & DM & -0.0007 (0.0033) & 0.1033 & 1.000 (0.000) [1.000, 1.000] \\
Null & All-X OLS & -0.0006 (0.0035) & 0.1118 & 1.173 (0.029) [1.116, 1.229] \\
Null & All-X Lin & -0.0016 (0.0036) & 0.1136 & 1.210 (0.034) [1.143, 1.277] \\
Null & One-step LASSO & -0.0009 (0.0033) & 0.1034 & 1.002 (0.003) [0.996, 1.008] \\
Null & Direct M-LASSO & -0.0008 (0.0033) & 0.1033 & 1.001 (0.003) [0.996, 1.006] \\
Null & Arm-specific LASSO (AIPW) & -0.0006 (0.0033) & 0.1033 & 1.001 (0.003) [0.995, 1.007] \\
Null & DGP oracle & -0.0007 (0.0033) & 0.1033 & 1.000 (0.000) [1.000, 1.000] \\
\hline
Sparse M signal & DM & -0.0009 (0.0033) & 0.1038 & 1.000 (0.000) [1.000, 1.000] \\
Sparse M signal & All-X OLS & -0.0010 (0.0024) & 0.0767 & 0.546 (0.026) [0.495, 0.596] \\
Sparse M signal & All-X Lin & -0.0008 (0.0025) & 0.0779 & 0.563 (0.027) [0.511, 0.616] \\
Sparse M signal & One-step LASSO & -0.0010 (0.0023) & 0.0736 & 0.502 (0.020) [0.463, 0.540] \\
Sparse M signal & Direct M-LASSO & -0.0008 (0.0024) & 0.0743 & 0.512 (0.019) [0.474, 0.550] \\
Sparse M signal & Arm-specific LASSO (AIPW) & -0.0007 (0.0024) & 0.0756 & 0.531 (0.019) [0.494, 0.567] \\
Sparse M signal & DGP oracle & -0.0006 (0.0023) & 0.0723 & 0.485 (0.022) [0.442, 0.527] \\
\hline
Arm-specific cancellation & DM & -0.0033 (0.0031) & 0.0991 & 1.000 (0.000) [1.000, 1.000] \\
Arm-specific cancellation & All-X OLS & -0.0044 (0.0037) & 0.1165 & 1.381 (0.044) [1.296, 1.467] \\
Arm-specific cancellation & All-X Lin & -0.0056 (0.0035) & 0.1111 & 1.256 (0.033) [1.191, 1.321] \\
Arm-specific cancellation & One-step LASSO & -0.0035 (0.0031) & 0.0996 & 1.009 (0.004) [1.001, 1.018] \\
Arm-specific cancellation & Direct M-LASSO & -0.0032 (0.0031) & 0.0993 & 1.004 (0.004) [0.997, 1.012] \\
Arm-specific cancellation & Arm-specific LASSO (AIPW) & -0.0039 (0.0032) & 0.1011 & 1.041 (0.010) [1.020, 1.061] \\
Arm-specific cancellation & DGP oracle & -0.0033 (0.0031) & 0.0991 & 1.000 (0.000) [1.000, 1.000] \\
\hline
Unequal allocation with HTE & DM & -0.0012 (0.0037) & 0.1185 & 1.000 (0.000) [1.000, 1.000] \\
Unequal allocation with HTE & All-X OLS & 0.0017 (0.0034) & 0.1064 & 0.806 (0.042) [0.724, 0.889] \\
Unequal allocation with HTE & All-X Lin & -0.0021 (0.0035) & 0.1114 & 0.884 (0.050) [0.786, 0.983] \\
Unequal allocation with HTE & One-step LASSO & -0.0012 (0.0030) & 0.0940 & 0.629 (0.029) [0.573, 0.685] \\
Unequal allocation with HTE & Direct M-LASSO & -0.0028 (0.0026) & 0.0834 & 0.495 (0.018) [0.459, 0.532] \\
Unequal allocation with HTE & Arm-specific LASSO (AIPW) & -0.0018 (0.0027) & 0.0865 & 0.533 (0.020) [0.494, 0.572] \\
Unequal allocation with HTE & DGP oracle & -0.0025 (0.0025) & 0.0789 & 0.443 (0.022) [0.401, 0.486] \\
\hline
\end{tabular}

}
\begin{minipage}{0.96\textwidth}
\footnotesize
\emph{Notes:} Bias is calculated relative to the realized finite-population ATE $\tau_r$. RMSE is on the scale of the outcome. Each relative-MSE entry uses the difference in means as its within-scenario benchmark; parentheses contain
paired Monte Carlo standard errors (MCSE), and brackets contain 95\% Monte Carlo intervals.
\end{minipage}
\end{table}

\begin{figure}[!htbp]
\centering
\includegraphics[width=0.96\textwidth]{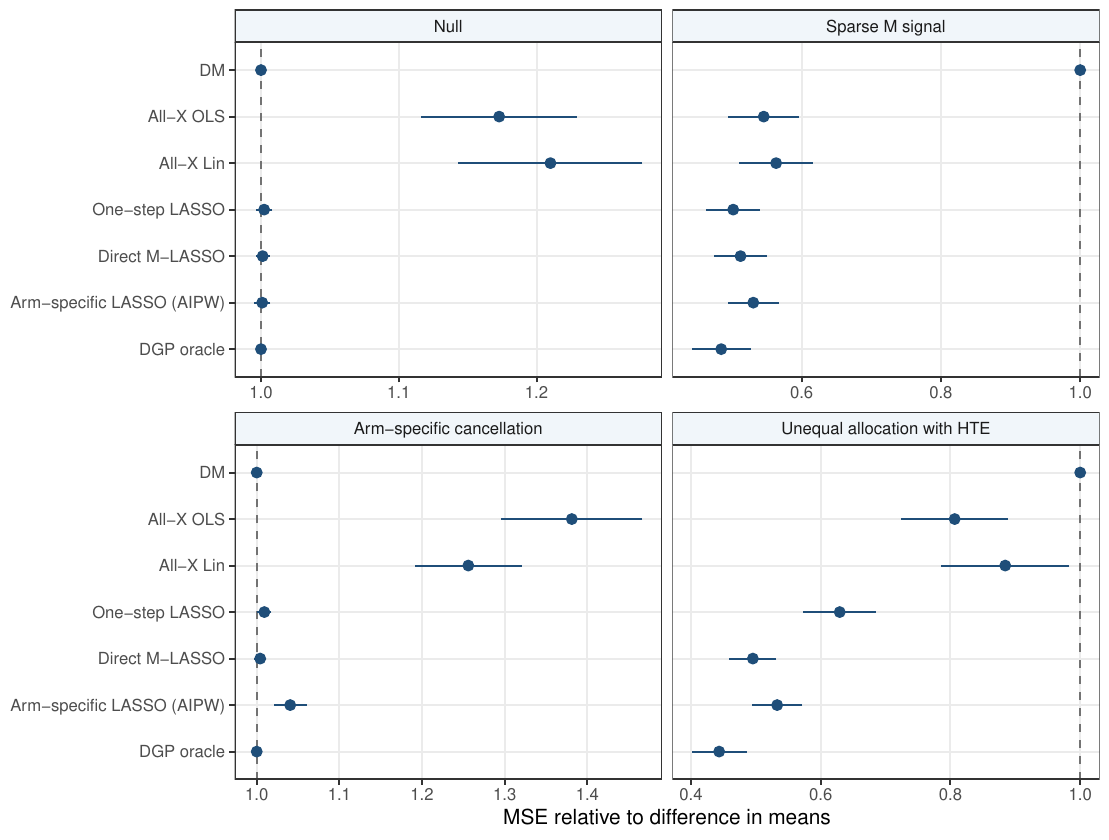}
\caption{MSE relative to the difference in means}
\label{fig:simulation-performance}
\begin{minipage}{0.94\textwidth}
\footnotesize
\emph{Notes:} Points show relative MSE, and horizontal intervals equal the estimate plus or minus $1.96$ paired Monte Carlo standard errors. The dashed vertical line marks the MSE of the unadjusted difference in means.
\end{minipage}
\end{figure}

The null and sparse-signal scenarios first show how regularization performs under balanced allocation. In the null scenario, One-step LASSO has relative MSE $1.002$, and Direct and Arm-specific LASSO both round to $1.001$. By contrast, All-$X$ OLS and Lin increase MSE by $17.3\%$ and $21.0\%$. When the useful signal is sparse, One-step LASSO attains relative MSE $0.502$, compared with $0.512$ for Direct LASSO, $0.531$ for Arm-specific LASSO/AIPW, and $0.485$ for the oracle. 

Arm-specific cancellation gives the regularized methods a harder no-signal case. All-$X$ OLS and Lin raise relative MSE to $1.381$ and $1.256$, respectively. One-step and Direct LASSO remain close to no adjustment, at $1.009$ and $1.004$. Arm-specific LASSO/AIPW has relative MSE $1.041$ because two strong arm-specific relationships must cancel after being estimated separately. 

Unequal allocation makes the choice of prediction target consequential. Direct LASSO attains relative MSE $0.495$, and Arm-specific LASSO/AIPW attains $0.533$, compared with $0.629$ for conventional One-step LASSO. The oracle value is $0.443$, while All-$X$ OLS and Lin are farther away at $0.806$ and $0.884$. This pattern is consistent with the value of complementary design weighting when allocation is unequal. 

\begin{table}[!htbp]
\centering
\small
\renewcommand{\arraystretch}{1.08}
\caption{Held-out score-prediction performance}
\label{tab:simulation-prediction-results}
\resizebox{\textwidth}{!}{%
\begin{tabular}{llrrr}
\hline
Scenario & Learner & $M$ explained & Correlation & Loss improvement \\
\hline
Null & Direct M-LASSO & -0.002 & -0.010 & -0.002 \\
Null & Arm-specific LASSO (AIPW) & -0.002 & -0.008 & -0.002 \\
Null & DGP oracle & 0.000 & -- & 0.003 \\
Sparse M signal & Direct M-LASSO & 0.475 & 0.695 & 0.471 \\
Sparse M signal & Arm-specific LASSO (AIPW) & 0.461 & 0.692 & 0.457 \\
Sparse M signal & DGP oracle & 0.500 & 0.708 & 0.497 \\
Arm-specific cancellation & Direct M-LASSO & -0.004 & -0.003 & -0.002 \\
Arm-specific cancellation & Arm-specific LASSO (AIPW) & -0.024 & -0.006 & -0.010 \\
Arm-specific cancellation & DGP oracle & 0.000 & -- & 0.003 \\
Unequal allocation with HTE & Direct M-LASSO & 0.443 & 0.677 & 0.370 \\
Unequal allocation with HTE & Arm-specific LASSO (AIPW) & 0.415 & 0.651 & 0.348 \\
Unequal allocation with HTE & DGP oracle & 0.500 & 0.708 & 0.423 \\
\hline
\end{tabular}

}
\begin{minipage}{0.96\textwidth}
\footnotesize
\emph{Notes:} Entries average replication-level diagnostics. $M$ explained is one minus fold-centered prediction risk. Correlation uses fold-centered held-out predictions. Loss improvement is measured against a held-out intercept-only learner under the complementary design-weighted loss. Correlation is undefined when the oracle score is constant.
\end{minipage}
\end{table}

The prediction diagnostics explain these MSE patterns. In the sparse-signal scenario, Direct and Arm-specific LASSO explain $0.475$ and $0.461$ of the held-out variation in $M$, compared with $0.500$ for the oracle. Under unequal
allocation, the corresponding values are $0.443$, $0.415$, and $0.500$. In the null and cancellation scenarios, the feasible learners' explained fractions are zero or slightly negative, correctly indicating that fitting a score offers no systematic gain. Direct LASSO is more accurate than the arm-specific route in the two signal settings considered here, plausibly because each direct fit uses the entire outer training sample, whereas the arm-specific fits divide those observations between treatment arms. 

\subsubsection{Performance across sample sizes}
\label{subsubsec:simulation-sample-size-performance}

The baseline results at $N=400$ do not show whether the relative performance of the estimators changes in smaller experiments. We therefore hold $K=60$ and $p=1/2$ fixed and repeat the sparse-signal and arm-specific-cancellation
scenarios at $N\in\{160,240,400\}$. All seven estimators retain the fitting rules described above, and the $N=400$ results
are reused from the baseline study. The smallest sample is deliberately demanding: each arm contains 80 observations, so
All-$X$ Lin estimates 60 slopes and an intercept with only 19 residual degrees of freedom per arm. The sparse-signal scenario compares how well the methods convert useful covariate information into precision, whereas cancellation compares their behavior when covariates predict outcomes within arms but the oracle coincides with no adjustment.

Figure~\ref{fig:simulation-sample-size-performance} displays the normalized oracle gap defined in Equation~\eqref{eq:simulation-oracle-gap}, and Table~\ref{tab:app-small-n-results} reports exact results for all seven estimators.

\begin{figure}[!htbp]
\centering
\includegraphics[width=0.96\textwidth]{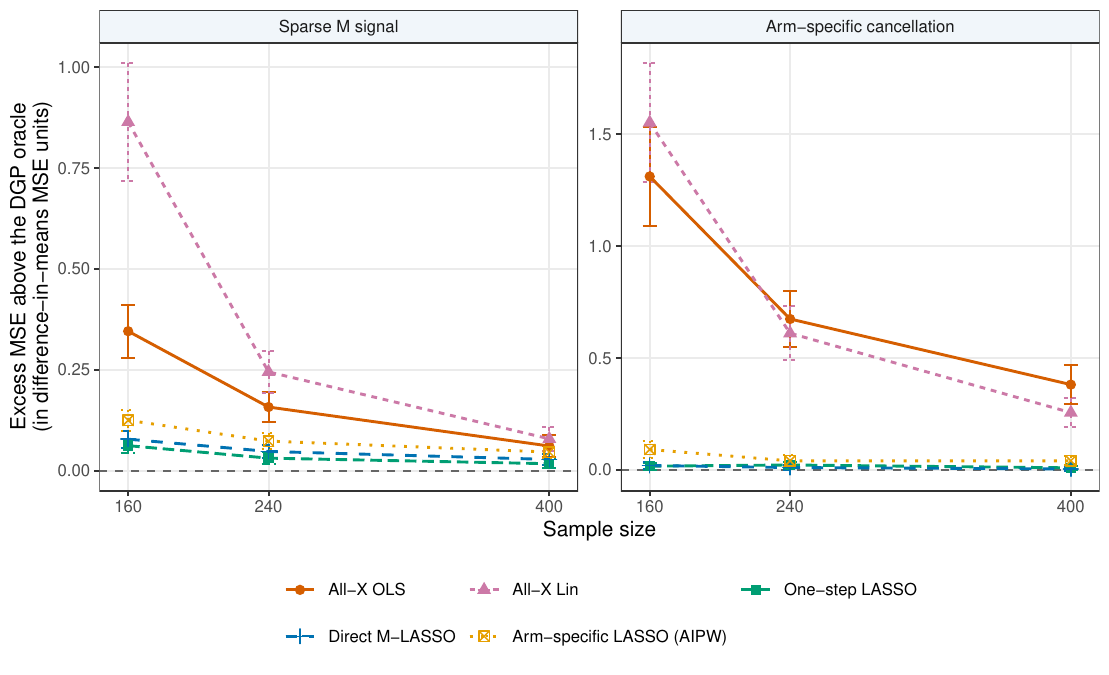}
\caption{Estimator performance relative to the DGP oracle across sample sizes}
\label{fig:simulation-sample-size-performance}
\begin{minipage}{0.94\textwidth}
\footnotesize
\emph{Notes:} Points report the normalized oracle gap $G_{\ell,N}$ from Equation~\eqref{eq:simulation-oracle-gap}; vertical intervals equal the estimate plus or minus $1.96$ paired Monte Carlo standard errors. Lower values indicate performance closer to the DGP oracle. Both panels hold $K=60$ and $p=1/2$ fixed. The DGP oracle, whose gap is zero, and the difference in means are omitted for legibility. The $N=400$ estimates reuse the baseline simulation.
\end{minipage}
\end{figure}

In the sparse-signal setting, the methods separate more sharply as the sample becomes smaller. At $N=160$, the oracle gap is $0.346$ for All-$X$ OLS and $0.865$ for All-$X$ Lin. The corresponding gaps are $0.062$ for One-step LASSO, $0.078$ for Direct $M$-LASSO, and $0.125$ for Arm-specific LASSO/AIPW. Thus, the regularized methods remain much closer to the oracle in the smallest sample, whereas fully interacted regression loses the available precision gain and performs worse than no adjustment.

Cancellation sharpens the practical comparison. At $N=160$, All-$X$ OLS and Lin lie $131.2\%$ and $155.2\%$ above the oracle, respectively. One-step and Direct LASSO lie only $1.7\%$ and $2.0\%$ above it, whereas Arm-specific LASSO/AIPW lies $9.1\%$ above it. The arm-specific gap is approximately $4\%$ at both $N=240$ and $N=400$; at those sample sizes, Direct LASSO remains within about $1\%$ of the oracle. A plausible finite-sample explanation is that each arm-specific nuisance regression is learned from only one treatment arm, so estimation errors need not cancel even when the underlying relationships do. This pattern does not conflict with semiparametric efficiency results for AIPW under their regularity and nuisance-estimation conditions, and it does not establish uniform dominance by the direct learner.

\subsubsection{Covariate selection policies}
\label{subsubsec:simulation-selection-summary}

To avoid additional inferential complications from reusing target outcomes, Appendix~\ref{app:selection-simulation} uses an independent selection sample for outcome-dependent covariate selection and for the common external-score
refit. It compares five covariate-set rules: control-outcome prediction, covariate imbalance, their intersection, their union, and selection under the design-relevant weighted loss. The covariate imbalance rule uses  assignments and covariates but never target outcomes. 

The sparse-signal scenario provides a benchmark: the control-outcome and design-relevant rules both attain external-score relative MSE $0.506$. Under arm-specific cancellation, the control-outcome rule retains variables whose arm-specific contributions cancel in $M$ and raises relative MSE to $1.022$, whereas the design-relevant rule selects essentially no covariates and reproduces no adjustment. Under unequal allocation, the corresponding relative MSEs are $0.545$ and $0.535$. In these scenarios, the balance-based rules are less reliable, and Lin regression preserves the same qualitative ranking. Table~\ref{tab:app-selection-results} and Figure~\ref{fig:app-selection-performance} report the complete results. The study therefore establishes covariate imbalance should not be used in practice. Although control-outcome prognosticity is better than covariate imbalance policy, it is not sufficient for prognosticity with respect to ATE estimation error.

\subsection{Practical guidance for estimator choice}
\label{subsec:simulation-takeaways}

If a single procedure must be prespecified for use across different treatment allocation ratios, Direct \(M\)-LASSO exhibits the strongest overall performance in our simulations. It performs best among the feasible estimators under unequal allocation and performs nearly as well as One-step LASSO under equal allocation.

The arm-specific AIPW estimator remains a credible alternative when separate outcome models are substantively desirable, but it can trail the direct learner when each arm provides limited training data. With $K=60$, unpenalized All-$X$ OLS
and Lin are not attractive default choices. More generally, these results are scenario-specific performance comparisons, not a proposition of uniform dominance.


\section{Discussion}
\label{sec:prognostic-balance-report}

Discussions of covariate adjustment in randomized experiments routinely combine two questions. First, which pretreatment variables are \emph{prognostic}? Second, which pretreatment variables are \emph{imbalanced} in the realized assignment? Conventional approaches call covariates prognostic when they predict outcomes and imbalanced when their distributions differ by treatment assignment. However, our framework shows that both questions require an ATE-specific answer. The relevant pair of oracle objects is
\begin{equation}
\underbrace{M_i=(1-p)Y_i(1)+pY_i(0)}_{\text{the object to predict}}
\qquad\text{and}\qquad
\underbrace{\Delta_z(M)}_{\text{the imbalance that equals realized ATE error}},
\label{eq:two-correct-objects}
\end{equation} where lowercase $z$ denotes the assignment that was actually realized. 

The theory above shows that reducing randomization variance requires predicting and removing $\Delta_z(M)$. The relevant prognostic target is therefore $M$, rather than $Y$ in a generic sense. Example~\ref{ex:cancellation} demonstrates that even perfect arm-specific outcome prediction by $X$ need not reduce the MSE of the ATE when the two associations cancel in $M$. Likewise, raw covariate imbalance $\Delta_z(X)$ does not by itself determine the bias; the exact realized error is $\Delta_z(M)$.

The oracle pair in \eqref{eq:two-correct-objects} is not observed because the two potential outcomes are never jointly observed. The feasible counterparts are therefore
\begin{equation}
\underbrace{g(X_i)}_{\text{a pretreatment predictor of }M_i}
\qquad\text{and}\qquad
\underbrace{\Delta_z\{g(X)\}}_
{\text{the predicted component of realized ATE error}}.
\label{eq:two-feasible-objects}
\end{equation}
The score $g$ should be chosen according to how well it predicts $M$, up to an additive constant. Once the score has been chosen, its signed realized imbalance $\Delta_z\{g(X)\}$ is the point correction. This distinction is the central message of this section: \emph{prognosticity selects the score, while score imbalance determines the realized correction}.

\subsection{Balance Tables}

Conventional balance tables report treated and control means, standardized mean differences, and sometimes balance-test \(p\)-values for each baseline covariate. These quantities are useful for describing the realized assignment, checking implementation, and documenting the design. Our theory implies, however, that they do not provide a coherent rule for selecting an ATE adjustment model. A balance statistic measures how differently a variable was distributed across treatment arms; it does not measure whether the variable predicts the ex-post bias Conversely, a measure of predictive performance is an ex-ante property of a score and does not describe how much that score will change the estimate in the particular assignment observed.


A preliminary balance test classifies assignments according to whether a raw covariate difference is statistically unusual under randomization. Table \ref{tab:bta-four-cells} shows the resulting four cases. When a useful prognostic variable happens not to cross the balance-test threshold, the rule withholds an adjustment that has ex-ante value. When an irrelevant variable happens to cross the threshold, the rule forces an unnecessary coefficient into the analysis. The two remaining cells produce reasonable decisions only by coincidence. The significance threshold is therefore neither necessary nor sufficient for choosing an adjustment score.

\begin{table}[!htbp]
\centering
\small
\renewcommand{\arraystretch}{1.25}
\caption{The four cases generated by a balance-test-and-adjust rule}
\label{tab:bta-four-cells}
\begin{tabularx}{\textwidth}{
>{\raggedright\arraybackslash}p{0.22\textwidth}
>{\raggedright\arraybackslash}X
>{\raggedright\arraybackslash}X}
\hline
& \textbf{Balance test does not reject}
& \textbf{Balance test rejects} \\
\hline
\textbf{The covariate contributes to an ATE-relevant prognostic score}
& \emph{False negative for the adjustment decision.} The rule omits useful information because the covariate happened to be relatively balanced in this assignment. The realized point correction may be small, but the score can still improve ex-ante precision.
& \emph{Adjustment for the right variable, but for the wrong reason.} The covariate is useful because it predicts \(M\), not because a preliminary test crossed a threshold. The same score should be available in assignments in which the test does not reject. \\
\hline
\textbf{The covariate has no incremental predictive value for \(M\)}
& \emph{No adjustment, but only by coincidence.} The correct oracle coefficient is zero; the balance-test result adds no information about ATE error.
& \emph{False positive for the adjustment decision.} The imbalance may be unusual, but it does not predict ATE error. Estimating and forcing an unnecessary coefficient can increase finite-sample MSE. \\
\hline
\end{tabularx}
\end{table}

The problem is sharpened when balance is judged by a \(p\)-value. A \(p\)-value combines the realized mean difference with the randomization variance of the covariate. It indicates how surprising the raw imbalance is under the design, not how large the implied ATE correction is in outcome units. A high-variance, weakly prognostic covariate can have a small \(p\)-value, while a highly prognostic score can have a large \(p\)-value because it happened to be close to balanced. Neither event changes the score's out-of-sample ability to predict \(M\).

\subsection{A Recommended Prognostic Balance Report}

We recommend retaining the conventional raw-covariate balance table and adding an outcome- and estimand-specific prognostic-score panel. The conventional panel documents the assignment and can reveal coding mistakes, attrition, deviation from the randomization protocol, or differences that readers regard as substantively important. It should not be used as a preliminary-test model-selection device. The prognostic panel connects baseline information to the estimation target.

Let \(\widehat g_i\) denote a prespecified, externally trained, or cross-fitted prediction of \(M_i\). For each outcome and estimand, report
\begin{align}
\overline{\widehat g}_1
&=
\frac{1}{N_1}\sum_{i=1}^N Z_i\widehat g_i,
&
\overline{\widehat g}_0
&=
\frac{1}{N_0}\sum_{i=1}^N(1-Z_i)\widehat g_i,
\label{eq:score-means-report}\\
\Delta_Z(\widehat g)
&=
\overline{\widehat g}_1-\overline{\widehat g}_0,
&
D_{\widehat g}
&=
\frac{\Delta_Z(\widehat g)}
{\sqrt{N/(N_1N_0)}\,S_{\widehat g}}.
\label{eq:score-balance-report}
\end{align}
The unstandardized difference \(\Delta_Z(\widehat g)\) is of primary interest because it indicates the point correction and is measured in outcome units. The standardized difference \(D_{\widehat g}\), together with an optional randomization \(p\)-value, answers the separate descriptive question of how unusual the score imbalance is under the design.

The resulting division of labor is as follows. The prognostic balance panel reports how that useful score was distributed in the realized assignment. A small score imbalance means that the adjustment changes the point estimate little in this experiment. It does not imply that the score should have been excluded. Similarly, a large raw-covariate imbalance may be noteworthy (and worth disclosing for the sake of transparency) without necessitating a change in the set of covariates used for adjustment.

\subsection{Relationship to the Literature}

The framework connects four related strands of work: covariate adjustment, prognostic scores, conditional inference after imbalance, and design-stage balance. This subsection clarifies how the paper draws on each strand and where its decision-theoretic interpretation differs.

\paragraph{Covariate Adjustment.}
Covariate adjustment can be implemented in many ways. Except those mentioned in the section \ref{subsec:feasible-prognostic-adjustment}, a growing literature uses regularization and machine learning to estimate outcome regressions or augmentation functions when the covariate vector is large. \citet{bloniarz2016lasso} develop LASSO adjustment under the finite-population potential-outcomes framework. \citet{wager2016high} show that a broad class of risk-consistent prediction methods can be combined with cross-estimation to obtain valid and efficient treatment-effect estimators. LOOP uses leave-one-out prediction for a similar purpose \citep{wu2018loop}, while \citet{zhang2019machine} construct a direct super learner for the optimal augmentation.

These papers primarily address feasible estimation and inference when the prediction rule is learned from observed outcomes. Our oracle results address a logically prior question: what should the learner predict in the first place? Guided by this framework, we derive both direct and arm-specific estimators and show that the latter is equivalent to a broad class of existing covariate-adjustment estimators in the literature. This provides a new theoretical foundation for existing covariate-adjustment estimators.

\paragraph{Prognostic Scores.}
Our notion of a design-relevant prognostic score is related to, but distinct from, the prognostic score introduced by \citet{hansen2008prognostic} and the prognosis-weighted balance tests developed by \citet{bicalho2026power}. Hansen defines a prognostic score ($\Psi_0(X)$) as a dimension-reducing balancing score for the control potential outcome, typically satisfying \(Y(0)\perp X\mid\Psi_0(X)\). This is a conditional-independence, or sufficiency, property intended primarily to facilitate matching, stratification, and confounding adjustment in observational studies.

By contrast, \citet{bicalho2026power} build on this idea for a different inferential objective. They estimate the control potential outcome from control-group observations and use the treatment--control difference in fitted control outcomes---equivalently, a prognosis-weighted combination of covariate imbalances---as an omnibus test of as-if random assignment or continuity of potential outcomes. 

\paragraph{Conditional Imbalance and Post-Assignment Decisions.}

Our decision proposition is most closely related to the conditional-inference work of \citet{zhang2019conditional} and
\citet{johansson2022inference}. Those papers establish that the difference-in-means estimator can be conditionally biased given observed covariate imbalance and develop adjusted estimation or inference procedures to address it. Their results are important precedents for our concept of ex-post bias. We do not aim to rederive the conditional-inference results; rather, we develop a framework for covariate-adjustment decisions that recovers traditional estimators.

\paragraph{Design-Stage Balance.}

Covariates can also be used before treatment assignment. Blocking, matching, stratification, and rerandomization restrict the assignment mechanism so that large imbalances in important baseline variables are less likely or impossible. \citet{morgan2012rerandomization} formalize rerandomization based on a prespecified balance criterion. \citet{bai2022optimality} shows that, within a broad class of stratified designs with equal treatment probabilities, an appropriately constructed matched-pair design can maximize precision; in an important special case, the optimal match is based on a baseline outcome. \citet{li2020rerandomization} further show how rerandomization and regression adjustment can be combined, since design-stage and analysis-stage uses of covariates provide complementary efficiency gains.

Our framework is complementary to design-stage balancing. Design methods reduce the ex-ante variation in prognostic imbalance by restricting the assignments that can occur. Our problem begins after an assignment has been realized and asks what the resulting pretreatment information reveals about the error of the unadjusted estimator.

\subsection{Practical Guidance}
\label{subsec:practical-guidance}

Researchers should collect a rich set of pretreatment covariates whenever feasible, prioritizing information that can help predict the design-relevant score $M$. The decomposition
\(
M_i=Y_i(0)+(1-p)\tau_i\) provides a useful guide: relevant covariates may predict baseline potential outcomes, treatment-effect heterogeneity, or both. The objective, however, is to predict their weighted combination, rather than either component in isolation, because their associations with a covariate may reinforce or offset one another. Collecting a broad set of candidate predictors therefore provides useful flexibility, without requiring that every collected variable enter an unregularized adjustment model.

Because $M$ is unobserved, researchers estimate its predictor $g(X)$ using either approach developed above: direct or arm-specific estimation. With many covariates or flexible learners, regularization and cross-fitting can help control overfitting and the cost of estimating the score. Under either approach, adjustment subtracts the treatment--control difference in the fitted score from the unadjusted estimate:
$\widehat\tau_{\mathrm{adj}}
=\widehat\tau_{\DM}-\Delta_Z(\widehat g)$.

No preliminary balance test is needed to decide whether to adjust or which covariates to include. These decisions should be guided by prediction of $M$, rather than the statistical significance of realized covariate differences. The adjustment already incorporates realized balance through $\Delta_Z(\widehat g)$: when the fitted score is nearly balanced, the point correction is correspondingly small, without requiring a separate testing step. Conventional balance tables may still be reported for transparency and to assess experimental implementation, but they should not serve as a prerequisite for adjustment or as a covariate-selection rule.

\section{Conclusion}

This paper develops a finite-population framework for deciding when and how to adjust for pretreatment covariates after a randomized experiment has been realized. The framework reconciles the three debates in the literature and practice. Ex-post error correction and ex-ante precision improvement are therefore not competing rationales for adjustment; they are two interpretations of the same projection. From this framework, we derive a direct estimator as well as an arm-specific estimator, with the latter recovering the familiar generalized-regression and AIPW estimators. The central argument is that constructing or selecting an adjustment score, prognosticity for $M$ is the sole criterion. Raw covariate imbalance, balance-test $p$-values, generic prediction of the observed outcome, and prediction of one potential outcome are not independent adjustment criteria; they are informative only when they help predict $M$. 


The present analysis focuses on the finite-population ATE under complete randomization. Extending the decision framework to stratified, clustered, rerandomized, multi-arm, or repeated-experiment settings, and to estimands other than the ATE, requires deriving the corresponding design- and estimand-specific error score. Further work can also sharpen finite-sample guarantees for data-adaptive score learning. The broader lesson, however, is already clear: covariate adjustment should be guided not by which baseline variables happen to be imbalanced, but by which pretreatment information predicts the randomization error relevant to the estimand. Randomization validates the unadjusted estimator across assignments; prognostic adjustment uses baseline information to improve the estimate from the assignment that was actually observed.

\section*{Statement of AI Usage} In preparing this paper, the authors used ChatGPT Sol models for the following tasks: constructing illustrative examples; assisting with code for generating figures and tables from simulation results; drafting and refining descriptions of the simulation design and results; and checking mathematical proofs. All AI-assisted outputs were carefully reviewed, and where necessary, revised by the authors. The authors take full responsibility for the content of the paper.
\clearpage

\begin{spacing}{0.0}
	\setlength{\bibsep}{10pt}
	\setstretch{0.1}
	\bibliographystyle{apalike}
	\bibliography{literature}
\end{spacing}

\newpage


\clearpage

\appendix
\addcontentsline{toc}{section}{Appendix} 
\part{Supplementary Information} 
\parttoc 

\setcounter{figure}{0}
\setcounter{table}{0}
\setcounter{proposition}{0} 
\renewcommand\thefigure{A.\arabic{figure}}
\renewcommand\thetable{A.\arabic{table}}
\renewcommand\theproposition{\thesection.\arabic{proposition}}

\clearpage
\onehalfspacing
\setcounter{page}{1}

\section{Proofs}
We define $q=1-p$ throughout the proof.

\subsection{Proof of Proposition \ref{thm:conditional-decision}}
\begin{proof}
Because $a$ is $\mathcal I$-measurable, it is constant after conditioning on
$\mathcal I$. Hence
\[
 \E_Z(e_0-a\mid\mathcal I)
 =
 \E_Z(e_0\mid\mathcal I)-a
 =
 B_{\mathcal I}-a.
\] and 
\[
 \Var_Z(e_0-a\mid\mathcal I)
 =
 \Var_Z(e_0\mid\mathcal I).
\]

Define conditional squared-error risk
$
 \mathcal R_{\mathcal I}(a)
 =
 \E_Z\left[
 \{\widehat\tau_a-\tau\}^2
 \mid\mathcal I
 \right].
$
Applying the conditional
bias--variance decomposition gives
\[
 \mathcal R_{\mathcal I}(a)
 =
 \Var_Z(e_0\mid\mathcal I)
 +
 \{B_{\mathcal I}-a\}^2,
\] and
\[
 \mathcal R_{\mathcal I}(0)
 =
 \Var_Z(e_0\mid\mathcal I)
 +
 \{B_{\mathcal I}\}^2,
\] Thus,
$
\mathcal R_{\mathcal I}(0)-\mathcal R_{\mathcal I}(a)=B^2_{\mathcal I}-[B_{\mathcal I}-a(\cI)]^2
$ Averaging the conditional result gives the unconditional results. Because the variance term does not depend on $a$, the squared-bias term is uniquely minimized by $a^*=B_{\mathcal I}$.
\end{proof}

\subsection{Proof of Proposition \ref{thm:value-more-information}}
\begin{proof}
The tower property gives
\[
 \E_Z(B_2\mid\mathcal I_1)
 =
 \E_Z\{\E_Z(e_0\mid\mathcal I_2)\mid\mathcal I_1\}
 =
 \E_Z(e_0\mid\mathcal I_1)
 =
 B_1.
\]
The decomposition
\(
 e_0-B_1=(e_0-B_2)+(B_2-B_1)
\)
is orthogonal in $L^2$, because $e_0-B_2$ is orthogonal to every $\mathcal I_2$-measurable variable. Squaring and taking expectations yields the risk identity. The variance identity follows from the law of total variance applied to $B_2$ conditional on $\mathcal I_1$.
\end{proof}

\subsection{Proof of Proposition \ref{thm:M-identity}}
\begin{proof}
    Because
\(
\overline A=p\overline A_1+q\overline A_0,
\)
we have
\(
\overline A_1-\overline A
=q(\overline A_1-\overline A_0)
=q\Delta_Z(A),
\)

Starting from the difference-in-means estimator,
\begin{align*}
\widehat\tau_{\DM}-\tau
&=
\{\overline Y_1(1)-\overline Y(1)\}
-
\{\overline Y_0(0)-\overline Y(0)\}.
\end{align*}
The identities
\[
\overline A_1-\overline A=q\Delta_Z(A),
\qquad
\overline A_0-\overline A=-p\Delta_Z(A)
\]
give
\begin{align*}
\widehat\tau_{\DM}-\tau
&=
q\Delta_Z\{Y(1)\}+p\Delta_Z\{Y(0)\}\\
&=
\Delta_Z\{qY(1)+pY(0)\}\\
&=
\Delta_Z(M).
\end{align*}
\end{proof}

\subsection{Proof of Proposition \ref{thm:score-approximation-equivalence}}
\begin{proof}

Throughout the proof, each candidate score \(g\in\mathcal G\) is held fixed with respect to the treatment assignment. Write
\(
g_i=g(X_i)\),
\( R_i(g)=M_i-g_i\), and \(
\overline g=\frac1N\sum_{i=1}^N g_i.
\)

\medskip
\noindent

For any fixed scalar finite-population variable \(A=(A_1,\ldots,A_N)\), the treated units form a simple random sample without replacement of size \(N_1\). Therefore,
\(
\E_Z(\overline A_1)=\overline A
\)
and
\(
\Var_Z(\overline A_1)
=
\left(\frac1{N_1}-\frac1N\right)S_A^2
=
\frac{N_0}{NN_1}S_A^2.
\)
Moreover,
\(
\overline A_0
=
\frac{N\overline A-N_1\overline A_1}{N_0},
\)
so
\(
\Delta_Z(A)
=
\overline A_1-\overline A_0
=
\frac{N}{N_0}\left(\overline A_1-\overline A\right).
\)
It follows that
\begin{equation}
\E_Z\{\Delta_Z(A)\}=0
\label{eq:imbalance-zero-mean-proof}
\end{equation}
and
\begin{equation}
\Var_Z\{\Delta_Z(A)\}
=
\frac{N}{N_1N_0}S_A^2.
\label{eq:imbalance-variance-proof}
\end{equation}

\medskip
\noindent

By Proposition~\ref{thm:M-identity} and the linearity of \(\Delta_Z\),
\begin{align*}
\widehat\tau(g)-\tau
&=
\widehat\tau_{\DM}-\tau-\Delta_Z\{g(X)\}\\
&=
\Delta_Z(M)-\Delta_Z\{g(X)\}\\
&=
\Delta_Z\{M-g(X)\}\\
&=
\Delta_Z\{R(g)\}.
\end{align*}
Applying \eqref{eq:imbalance-zero-mean-proof} and \eqref{eq:imbalance-variance-proof} to the fixed residual \(R(g)\) gives
\begin{align}
\MSE_Z\{\widehat\tau(g)\}
&=
\E_Z\left[
\left\{
\Delta_Z(M)-\Delta_Z\{g(X)\}
\right\}^2
\right]
\nonumber\\
&=
\Var_Z\left[
\Delta_Z\{M-g(X)\}
\right]
\nonumber\\
&=
\frac{N}{N_1N_0}S_{M-g}^2.
\label{eq:mse-residual-variance-proof}
\end{align}
Because \(N/(N_1N_0)>0\) does not depend on \(g\),
\begin{equation}
\argmin_{g\in\mathcal G}
\MSE_Z\{\widehat\tau(g)\}
=
\argmin_{g\in\mathcal G}
\E_Z\left[
\left\{
\Delta_Z(M)-\Delta_Z\{g(X)\}
\right\}^2
\right]
=
\argmin_{g\in\mathcal G}S_{M-g}^2.
\label{eq:mse-score-argmin-proof}
\end{equation}

\medskip
\noindent

Because \(g(X)\) is constant within each observed \(X\)-profile, \(\Delta_Z\{g(X)\}\) is measurable with respect to \(\mathcal F_X\). More explicitly, if
\(x^{(1)},\ldots,x^{(J)}\) are the distinct observed profiles,
\(N_j=\#\{i:X_i=x^{(j)}\}\), and \(N_{1j}(Z)=\sum_{i:X_i=x^{(j)}}Z_i\), then
\[
\Delta_Z\{g(X)\}
=
\sum_{j=1}^J
\left\{
\frac{N_{1j}(Z)}{N_1}
-
\frac{N_j-N_{1j}(Z)}{N_0}
\right\}
g\bigl(x^{(j)}\bigr),
\]
which is a function of the treatment counts defining \(\mathcal F_X\).

Recall that
\(
B_X
=
\E_Z\{\Delta_Z(M)\mid\mathcal F_X\}.
\)
We may therefore write
\begin{equation}
\Delta_Z(M)-\Delta_Z\{g(X)\}
=
\{\Delta_Z(M)-B_X\}
+
\{B_X-\Delta_Z\{g(X)\}\}.
\label{eq:oracle-orthogonal-decomposition}
\end{equation}
The two terms on the right-hand side are orthogonal in \(L^2\).
Indeed,
\begin{align*}
&\E_Z\left[
\{\Delta_Z(M)-B_X\}
\{B_X-\Delta_Z\{g(X)\}\}
\right]\\
&\quad=
\E_Z\left[
\{B_X-\Delta_Z\{g(X)\}\}
\E_Z\{\Delta_Z(M)-B_X\mid\mathcal F_X\}
\right]\\
&\quad=0,
\end{align*}
because
\(
\E_Z\{\Delta_Z(M)-B_X\mid\mathcal F_X\}
=
\E_Z\{\Delta_Z(M)\mid\mathcal F_X\}-B_X
=
0.
\)
Squaring \eqref{eq:oracle-orthogonal-decomposition}, taking expectations, and using orthogonality yields
\begin{align}
&\E_Z\left[
\left\{
\Delta_Z(M)-\Delta_Z\{g(X)\}
\right\}^2
\right]
\nonumber\\
&\qquad=
\E_Z\left[
\{\Delta_Z(M)-B_X\}^2
\right]
+
\E_Z\left[
\{B_X-\Delta_Z\{g(X)\}\}^2
\right].
\label{eq:oracle-pythagorean-proof}
\end{align}
The first term on the right-hand side of \eqref{eq:oracle-pythagorean-proof} does not depend on \(g\). Consequently,
\begin{align}
&\argmin_{g\in\mathcal G}
\E_Z\left[
\left\{
\Delta_Z(M)-\Delta_Z\{g(X)\}
\right\}^2
\right]
\nonumber\\
&\qquad=
\argmin_{g\in\mathcal G}
\E_Z\left[
\left\{
B_X-\Delta_Z\{g(X)\}
\right\}^2
\right].
\label{eq:oracle-score-argmin-proof}
\end{align}

\medskip
\noindent

For any fixed \(g\in\mathcal G\) and \(c\in\mathbb R\),
\begin{align*}
\frac1N\sum_{i=1}^N
\{M_i-c-g_i\}^2
&=
\frac1N\sum_{i=1}^N
\left[
R_i(g)-\overline R(g)
+
\overline R(g)-c
\right]^2\\
&=
\frac1N\sum_{i=1}^N
\{R_i(g)-\overline R(g)\}^2
+
\{\overline R(g)-c\}^2\\
&=
\frac{N-1}{N}S_{M-g}^2
+
\{\overline M-\overline g-c\}^2,
\end{align*}
where the cross-product vanishes because \(\sum_i\{R_i(g)-\overline R(g)\}=0\). The expression is uniquely minimized over \(c\) at
\(
c_g^*
=
\overline R(g)
=
\overline M-\overline g.
\)
Therefore,
\begin{equation}
\min_{c\in\mathbb R}
\frac1N\sum_{i=1}^N
\{M_i-c-g(X_i)\}^2
=
\frac{N-1}{N}S_{M-g}^2.
\label{eq:centered-score-loss-proof}
\end{equation}
Because \((N-1)/N>0\) does not depend on \(g\),
\[
\argmin_{g\in\mathcal G}
\min_{c\in\mathbb R}
\frac1N\sum_{i=1}^N
\{M_i-c-g(X_i)\}^2
=
\argmin_{g\in\mathcal G}S_{M-g}^2.
\]

Combining \eqref{eq:mse-score-argmin-proof}, \eqref{eq:oracle-score-argmin-proof}, and
\eqref{eq:centered-score-loss-proof} establishes
\begin{align*}
&\argmin_{g\in\mathcal G}
\min_{c\in\mathbb R}
\frac1N\sum_{i=1}^N
\{M_i-c-g(X_i)\}^2\\
&\quad=
\argmin_{g\in\mathcal G}
\E_Z\left[
\left\{
\Delta_Z(M)-\Delta_Z\{g(X)\}
\right\}^2
\right]\\
&\quad=
\argmin_{g\in\mathcal G}
\E_Z\left[
\left\{
B_X-\Delta_Z\{g(X)\}
\right\}^2
\right]\\
&\quad=
\argmin_{g\in\mathcal G}
\MSE_Z\{\widehat\tau(g)\}\\
&\quad=
\argmin_{g\in\mathcal G}S_{M-g}^2.
\end{align*}

Finally, if \(\mathcal G\) is closed under the addition of constants, the optimal intercept \(c_g^*\) can be absorbed into the score by replacing \(g\) with \(g+c_g^*\). This replacement does not change the adjustment because
\[
\Delta_Z\{g(X)+c_g^*\}
=
\Delta_Z\{g(X)\}.
\]
\end{proof}

\subsection{Proof of Proposition~\ref{thm:observable-loss}}

\begin{proof}

\noindent

For each unit \(i\), recall that
\(
M_i=qY_i(1)+pY_i(0).
\)
Expanding the design-weighted squared-error loss gives
\begin{align*}
&q\{Y_i(1)-g_i\}^2
+
p\{Y_i(0)-g_i\}^2\\
&\quad=
qY_i(1)^2+pY_i(0)^2
-
2g_i\{qY_i(1)+pY_i(0)\}
+
g_i^2\\
&\quad=
\{M_i-g_i\}^2
+
\left[
qY_i(1)^2+pY_i(0)^2-M_i^2
\right].
\end{align*}
The remaining term satisfies the weighted-variance identity
\begin{align*}
&qY_i(1)^2+pY_i(0)^2
-
\{qY_i(1)+pY_i(0)\}^2\\
&\quad=
pq\{Y_i(1)-Y_i(0)\}^2.
\end{align*}
Consequently, unit by unit,
\begin{equation}
q\{Y_i(1)-g_i\}^2
+
p\{Y_i(0)-g_i\}^2
=
\{M_i-g_i\}^2
+
pq\{Y_i(1)-Y_i(0)\}^2.
\label{eq:unit-observable-loss-proof}
\end{equation}
Averaging \eqref{eq:unit-observable-loss-proof} over the finite population yields
\begin{align*}
\mathcal L_p(g)
&=
q\frac1N\sum_{i=1}^N
\{Y_i(1)-g_i\}^2
+
p\frac1N\sum_{i=1}^N
\{Y_i(0)-g_i\}^2\\
&=
\frac1N\sum_{i=1}^N
\{M_i-g_i\}^2
+
pq\frac1N\sum_{i=1}^N
\{Y_i(1)-Y_i(0)\}^2.
\end{align*}
This proves Equation~\eqref{eq:observable-loss-identity}. Because the
second term does not depend on \(g\),
\[
\argmin_{g\in\mathcal G}\mathcal L_p(g)
=
\argmin_{g\in\mathcal G}
\frac1N\sum_{i=1}^N
\{M_i-g(X_i)\}^2.
\]

\medskip
\noindent

Fix a candidate function \(g\) with respect to the treatment assignment. Since
\(
Z_iY_i=Z_iY_i(1)\) and
\(
(1-Z_i)Y_i=(1-Z_i)Y_i(0),
\)
the observable loss can be written as
\begin{align*}
\widehat{\mathcal L}_p(g)
&=
q\frac1{N_1}\sum_{i=1}^N
Z_i\{Y_i(1)-g_i\}^2\\
&\quad+
p\frac1{N_0}\sum_{i=1}^N
(1-Z_i)\{Y_i(0)-g_i\}^2.
\end{align*}
Under complete randomization,
\(
\E_Z(Z_i)=p=\frac{N_1}{N}\), and
\(
\E_Z(1-Z_i)=q=\frac{N_0}{N}.
\)
Therefore,
\begin{align*}
\E_Z\{\widehat{\mathcal L}_p(g)\}
&=
q\frac1{N_1}\sum_{i=1}^N
\E_Z(Z_i)\{Y_i(1)-g_i\}^2\\
&\quad+
p\frac1{N_0}\sum_{i=1}^N
\E_Z(1-Z_i)\{Y_i(0)-g_i\}^2\\
&=
q\frac{p}{N_1}\sum_{i=1}^N
\{Y_i(1)-g_i\}^2
+
p\frac{q}{N_0}\sum_{i=1}^N
\{Y_i(0)-g_i\}^2\\
&=
q\frac1N\sum_{i=1}^N
\{Y_i(1)-g_i\}^2
+
p\frac1N\sum_{i=1}^N
\{Y_i(0)-g_i\}^2\\
&=
\mathcal L_p(g),
\end{align*}
where the penultimate equality uses \(p/N_1=1/N\) and \(q/N_0=1/N\). This establishes the final statement.

The calculation is pointwise in \(g\): it requires \(g\) to be fixed with respect to the target treatment assignment. It does not, by itself, imply that \(\E_Z\{\widehat{\mathcal L}_p(\widehat g)\}
=\mathcal L_p(\widehat g)\) when \(\widehat g\) is trained adaptively on the same assignment.
\end{proof}

\section{Consistency and variance estimation for the direct estimator}\label{si:direct}

From semiparametric theory, \citet[Sections~2.2--2.3 and Appendix]{zhang2019machine} develop direct learning of an augmentation function, an asymptotic linearization for an estimated augmentation, and variance estimation under an i.i.d. framework. For the mean effect, their optimal function is the centered complementary-weighted score divided by $p(1-p)$. Their empirical loss uses an estimated, arm-centered influence-function response, whereas our learner uses the raw-outcome complementary-weighted loss. These criteria have the same population target after translation and scaling, but their finite-sample minimizers need not coincide. We complement these results with a finite-population theorem for the specific estimator used here. 

Consider a sequence of fixed finite populations indexed by $N$.  Partition the units into a fixed number $V\geq2$ of folds $I_1,\ldots,I_V$, independently of treatment assignment, with prespecified sizes $n_v=|I_v|$. For $i\in I_v$, let $\widehat g_i=\widehat g^{(-v)}(X_i)$, where the entire fitting procedure excludes the assignments and outcomes in $I_v$. 
The estimator is:
\begin{equation}
 \widehat\tau_{\mathrm{dir}}
 =\frac{1}{N_1}\sum_{i:Z_i=1}(Y_i-\widehat g_i)
  -\frac{1}{N_0}\sum_{i:Z_i=0}(Y_i-\widehat g_i).
 \label{eq:dir:estimator}
\end{equation}

Let $g_N^*$ be a reference score deterministic given the finite population, and abbreviate $g_{Ni}^*=g_N^*(X_i)$. The natural choice is the finite-population minimizer of the complementary weighted loss in the prespecified prediction class, although validity does not require that this class correctly specifies the potential outcomes. Define
\begin{equation}
 u_i(z)=Y_i(z)-g_{Ni}^*,\qquad
 r_i=q_Nu_i(1)+p_Nu_i(0)=M_i-g_{Ni}^*,\qquad
 M_i=q_NY_i(1)+p_NY_i(0).
 \label{eq:dir:residuals}
\end{equation}
All probability statements below condition on the finite population. When the split or the learning algorithm is randomized, probability also averages over its random seed, which is independent of treatment assignment.

\begin{assumption}[Conditions for direct-score inference]
\label{ass:dir:main}
As $N\to\infty$, the following conditions hold.
\begin{enumerate}
\item Treatment is completely randomized with $N_1$ treated units, $p_N=\frac{N_1}{N}\to p_\infty\in(0,1)$, and $n_v/N\to\rho_v\in(0,1)$ for every $v$.
The fold-specific fitting procedure is restricted to training information as described above.
\item The fitted scores converge in held-out empirical squared error to a common fixed reference score:
\begin{equation}
 Q_N:=\max_{1\leq v\leq V}
 \frac{1}{n_v}\sum_{i\in I_v}
 \{\widehat g^{(-v)}(X_i)-g_{Ni}^*\}^{2}
 =o_{\mathbb P}(1).
 \label{eq:dir:riskconsistency}
\end{equation}
\item For $z\in\{0,1\}$,
\begin{equation}
 \begin{gathered}
 \sup_N\frac1N\sum_{i=1}^N
 \{u_i(z)-\overline u(z)\}^{4}<\infty,\\
 \sigma_N^2:=\frac{S_r^2}{p_Nq_N}\longrightarrow\sigma^2\in(0,\infty),
 \qquad S_\tau^2\longrightarrow s_\tau^2<\infty.
 \end{gathered}
 \label{eq:dir:moments}
\end{equation}
\end{enumerate}
\end{assumption}

Condition~\eqref{eq:dir:riskconsistency} is a prediction-stability condition, not a requirement that the unobserved $M_i$ be recovered without residual error. It requires convergence to $g_N^*$, not convergence to $M$. It permits misspecification and imposes no $N^{-1/4}$ convergence rate. Appendix~\ref{app:dir:learning} gives sufficient conditions for weighted empirical risk minimization and direct weighted LASSO. The fourth-moment condition is a convenient sufficient condition for the finite-population central limit theorem and consistency of residual sample variances; it can be weakened.

For estimation of uncertainty, let
\begin{equation}
 \widehat u_i=Y_i-\widehat g_i,\qquad
 \overline{\widehat u}_z
 =\frac1{N_z}\sum_{i:Z_i=z}\widehat u_i,\qquad
 \widehat s_z^2
 =\frac1{N_z-1}\sum_{i:Z_i=z}
       (\widehat u_i-\overline{\widehat u}_z)^2,
 \label{eq:dir:samplevariances}
\end{equation}
and define \(
 \widehat V_{\mathrm{dir}}
 =\frac{\widehat s_1^2}{N_1}+\frac{\widehat s_0^2}{N_0}.\)

\begin{theorem}[Consistency, asymptotic normality, and conservative inference]
\label{thm:dir:main}
Under Assumption~\ref{ass:dir:main},
\begin{equation}
 \widehat\tau_{\mathrm{dir}}-\tau_N
 =\Delta_Z(r)+o_{\mathbb P}(N^{-1/2}),\qquad
 \sqrt N(\widehat\tau_{\mathrm{dir}}-\tau_N)
 \ \xrightarrow{d}\ \mathcal N(0,\sigma^2).
 \label{eq:dir:linearization}
\end{equation}
The exact variance of the fixed-score leading term is
\begin{equation}
 V_N^*
 :=\operatorname{Var}_Z\{\Delta_Z(r)\}
 =\frac{S_r^2}{Np_Nq_N}
 =\frac{S_{u(1)}^2}{N_1}+\frac{S_{u(0)}^2}{N_0}
       -\frac{S_\tau^2}{N}.
 \label{eq:dir:oraclevariance}
\end{equation}
and,
\begin{equation}
 \left[
 \widehat\tau_{\mathrm{dir}}
       -z_{1-\alpha/2}\sqrt{\widehat V_{\mathrm{dir}}},\quad
 \widehat\tau_{\mathrm{dir}}
       +z_{1-\alpha/2}\sqrt{\widehat V_{\mathrm{dir}}}
 \right]
 \label{eq:dir:CI}
\end{equation}
has limiting randomization coverage at least $1-\alpha$, where $z_a$ is the $a$th standard normal quantile. Its limiting coverage is exactly $1-\alpha$ when $s_\tau^2=0$.
\end{theorem}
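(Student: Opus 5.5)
My plan is to decompose the estimation error exactly, show that the learned-score term is negligible fold by fold, and then apply the finite-population CLT to the fixed leading term. By Proposition~\ref{thm:M-identity} and linearity of $\Delta_Z$, with $\widehat g$ the vector of cross-fitted predictions,
\[
\widehat\tau_{\mathrm{dir}}-\tau_N=\Delta_Z(M)-\Delta_Z(\widehat g)=\Delta_Z(r)-\Delta_Z(\widehat g-g^*_N).
\]
So the linearization in \eqref{eq:dir:linearization} reduces to showing $\Delta_Z(\widehat g-g_N^*)=o_{\mathbb P}(N^{-1/2})$.

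To prove this, I will write $\Delta_Z(d)=\sum_v\sum_{i\in I_v}(Z_i/N_1-(1-Z_i)/N_0)d_i$ with $d_i=\widehat g_i-g^*_{Ni}$. Within fold $v$, $d_i$ depends only on the assignments and outcomes outside $I_v$. Conditionally on those out-of-fold assignments $Z_{-v}$, the vector $d$ restricted to $I_v$ is fixed. Moreover, $Z_{I_v}$ is a uniformly random allocation of a fixed number $N_{1v}$ of treated units (fixed given $Z_{-v}$). Centering $d$ within the fold, the fold contribution equals $(N/(N_1N_0))\sum_{i\in I_v}(Z_i-\pi_v)(d_i-\bar d_v)$ plus a term proportional to $(N_{1v}/N_1-N_{0v}/N_0)\bar d_v$. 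Here $\pi_v=N_{1v}/n_v$.

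The first piece has conditional mean zero, and its conditional variance is of order $N^{-2}\cdot n_v Q_N$ by the simple-random-sampling variance formula. Hence it is $O_{\mathbb P}(N^{-1/2}Q_N^{1/2})=o_{\mathbb P}(N^{-1/2})$ by \eqref{eq:dir:riskconsistency}, using conditional Chebyshev and then unconditioning. The second piece involves the fold-level imbalance $N_{1v}/N_1-N_{0v}/N_0$, which is $O_{\mathbb P}(N^{-1/2})$ by hypergeometric concentration. It multiplies $\bar d_v$, which satisfies $|\bar d_v|\le Q_N^{1/2}=o_{\mathbb P}(1)$, so this piece is also $o_{\mathbb P}(N^{-1/2})$. \emph{This step is the main obstacle}: the fold means of the learned correction are not centered, and only the product structure rescues them. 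Summing over the fixed number $V$ of folds gives the first display.

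For asymptotic normality and the variance formula, \eqref{eq:imbalance-variance-proof} applied to the fixed $r$ gives $\Var_Z\{\Delta_Z(r)\}=NS_r^2/(N_1N_0)=S_r^2/(Np_Nq_N)$. Expanding $r=q_Nu(1)+p_Nu(0)$ and using $u(1)-u(0)=\tau$, I get $S_r^2=q_NS_{u(1)}^2+p_NS_{u(0)}^2-p_Nq_NS_\tau^2$. Dividing by $Np_Nq_N$ yields \eqref{eq:dir:oraclevariance}. Since $\Delta_Z(r)=(N/N_0)(\bar r_1-\bar r)$ is a normalized sample mean under sampling without replacement, Hájek's finite-population CLT applies; the fourth-moment bound supplies the Lindeberg-type condition. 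It gives $\sqrt N\Delta_Z(r)\to\mathcal N(0,\sigma^2)$, and Slutsky then delivers the stated limit.

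For the interval, I will show $N\widehat V_{\mathrm{dir}}-N(S^2_{u(1)}/N_1+S^2_{u(0)}/N_0)=o_{\mathbb P}(1)$. Write $\widehat u_i=u_i(Z_i)-d_i$. The arm sample variances of the fixed $u_i(z)$ are consistent for $S^2_{u(z)}$, by the SRS variance of second moments together with the fourth-moment bound. The perturbation by $d$ changes each arm variance by at most $O(\sqrt{Q_N})$ in the norm sense, via Cauchy--Schwarz with $\frac1{N_z}\sum_{Z_i=z}d_i^2\le (N/N_z)Q_N$. Then $N\widehat V_{\mathrm{dir}}\to\sigma^2+s_\tau^2\ge\sigma^2$ in probability, so the limiting coverage is $2\Phi(z_{1-\alpha/2}\sigma/\sqrt{\sigma^2+s_\tau^2})-1\ge1-\alpha$. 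Coverage is exact when $s_\tau^2=0$.
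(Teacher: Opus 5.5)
Your proposal is correct and follows the same route as the paper. It uses the exact algebraic remainder $\Delta_Z(r)-\Delta_Z(\widehat g-g_N^*)$ and the fold-wise split of the learned-score imbalance, conditional on out-of-fold information, into a within-fold centered term (conditional Chebyshev, then unconditioning) and a hypergeometric fold-count term times $\bar d_v$. It then applies the finite-population CLT under the fourth-moment condition and bounds the perturbation of the arm variances.

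One slip needs fixing in the last display. The studentized statistic converges to $\mathcal N\{0,\sigma^2/(\sigma^2+s_\tau^2)\}$, so the limiting coverage is $2\Phi\bigl(z_{1-\alpha/2}\sqrt{\sigma^2+s_\tau^2}/\sigma\bigr)-1\ge 1-\alpha$. The expression you wrote, with the ratio $\sigma/\sqrt{\sigma^2+s_\tau^2}$, is at most $1-\alpha$ and would contradict your own conclusion.
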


The distinction between $V_N^*$ and $\widehat V_{\mathrm{dir}}$ is important. Because $u_i(1)-u_i(0)=\tau_i$, the unidentifiable variance of individual treatment effects remains in the exact finite-population variance. Without additional restrictions, the proposed estimator consistently estimates the Neyman upper bound at the $N^{-1}$ scale, not necessarily the exact variance at that scale. 

Write $p=p_N$ and $q=q_N$ when there is no ambiguity. The proof needs some lemmas. 

\begin{lemma}[Exact identities]
\label{lem:dir:identities}
For any fixed vector $A$,
\begin{equation}
 \Delta_Z(A)=\frac1{Npq}\sum_{i=1}^N(Z_i-p)A_i,
 \qquad
 \operatorname{Var}_Z\{\Delta_Z(A)\}=\frac{S_A^2}{Npq}.
 \label{eq:dir:imbalanceidentity}
\end{equation}
For the fixed reference score in \eqref{eq:dir:residuals},
\begin{equation}
 \widehat\tau(g_N^*)-\tau_N=\Delta_Z(r),
 \qquad
 \frac{S_r^2}{Npq}
 =\frac{S_{u(1)}^2}{N_1}+\frac{S_{u(0)}^2}{N_0}
       -\frac{S_\tau^2}{N}.
 \label{eq:dir:residualvarianceidentity}
\end{equation}
For any fitted numerical score vector, whether or not it is cross-fitted, put $e_i=\widehat g_i-g_{Ni}^*$. Then the following algebraic identity holds for every realized assignment:
\begin{equation}
 \widehat\tau_{\mathrm{dir}}-\tau_N
 =\Delta_Z(r)-\Delta_Z(e).
 \label{eq:dir:exactremainder}
\end{equation}
\end{lemma}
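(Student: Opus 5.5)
We prove the three parts of Lemma~\ref{lem:dir:identities} in order, each by direct algebra from the definitions of $\Delta_Z$ and earlier results.

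\textbf{First identity.} Since $\sum_i Z_i=N_1=Np$, we have
\[
\overline A_1=\frac{1}{Np}\sum_i Z_iA_i,\qquad \overline A_0=\frac{1}{Nq}\sum_i(1-Z_i)A_i .
\]
Put both over the common denominator $Npq$:
\[
\Delta_Z(A)=\frac{1}{Npq}\sum_i\{qZ_i-p(1-Z_i)\}A_i=\frac{1}{Npq}\sum_i(Z_i-p)A_i,
\]
because $qZ_i-p+pZ_i=Z_i-p$. The variance formula is exactly \eqref{eq:imbalance-variance-proof}, derived in the proof of Proposition~\ref{thm:score-approximation-equivalence}. There it follows from simple random sampling without replacement, and $N/(N_1N_0)=1/(Npq)$. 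Alternatively, one can compute it from the representation just derived using $\Var_Z(Z_i)=pq$ and $\operatorname{Cov}_Z(Z_i,Z_j)=-pq/(N-1)$.

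\textbf{Second identity.} The first half is Proposition~\ref{thm:M-identity} combined with linearity of $\Delta_Z$:
\[
\widehat\tau(g_N^*)-\tau_N=\Delta_Z(M)-\Delta_Z(g_N^*)=\Delta_Z(M-g_N^*)=\Delta_Z(r).
\]
This is the same computation as in the proof of Proposition~\ref{thm:score-approximation-equivalence}, and it is valid because $g_N^*$ is fixed with respect to $Z$.

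For the variance decomposition, write $r=qu(1)+pu(0)$ and note $u(1)-u(0)=\tau$, where $\tau_i=Y_i(1)-Y_i(0)$. Expanding gives
\[
S_r^2=q^2S_{u(1)}^2+p^2S_{u(0)}^2+2pq\,S_{u(1)u(0)}.
\]
Substitute $2S_{u(1)u(0)}=S_{u(1)}^2+S_{u(0)}^2-S_\tau^2$ to get
\[
S_r^2=q\,S_{u(1)}^2+p\,S_{u(0)}^2-pq\,S_\tau^2,
\]
using $q^2+pq=q$ and $p^2+pq=p$. Dividing by $Npq$ and using $Np=N_1$, $Nq=N_0$ yields the stated right-hand side.

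\textbf{Third identity.} By definition, \eqref{eq:dir:estimator} equals $\widehat\tau_{\DM}-\Delta_Z(\widehat g)$ for any numerical vector $\widehat g$. Writing $\widehat g=g_N^*+e$ and using linearity of $\Delta_Z$ for each realized $Z$ gives
\[
\widehat\tau_{\mathrm{dir}}-\tau_N=\{\widehat\tau_{\DM}-\tau_N-\Delta_Z(g_N^*)\}-\Delta_Z(e)=\Delta_Z(r)-\Delta_Z(e),
\]
by the second identity. This step needs no fixedness of $e$, because the equality is pointwise in $Z$. Randomness of $e$ matters only for expectations, and no expectation is taken here.

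\textbf{Main obstacle.} The only delicate point is bookkeeping. The variance formulas require $A$ and $g_N^*$ to be assignment-independent, while the remainder identity must be stated purely algebraically so that it applies even to cross-fitted $\widehat g$ that depends on $Z$. Beyond keeping these two cases separate, the covariance algebra in the second identity is the only computation to verify carefully.
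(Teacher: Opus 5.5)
Your proof is correct and follows essentially the same route as the paper. Both use the same explicit algebra for $\Delta_Z(A)$ and the same covariance expansion of $S_r^2$, with $u(1)-u(0)=\tau$. Both also use the same pointwise linearity argument for the remainder. For the variance, the paper uses the $\operatorname{Cov}_Z(Z_i,Z_j)=-pq/(N-1)$ computation you list as an alternative; your primary citation of the simple-random-sampling formula from the proof of Proposition~\ref{thm:score-approximation-equivalence} is equally valid.

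One small remark: $\widehat\tau(g_N^*)-\tau_N=\Delta_Z(r)$ does not actually require $g_N^*$ to be fixed. It is pointwise algebra, exactly like your third identity, and fixedness matters only for the variance formula.
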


\begin{proof}
The first identity follows from $N_1=Np$ and $N_0=Nq$. Under complete randomization,
\[
 \mathbb E_Z Z_i=p,\qquad
 \operatorname{Var}_Z(Z_i)=pq,\qquad
 \operatorname{Cov}_Z(Z_i,Z_j)=-\frac{pq}{N-1}\quad(i\ne j).
\]
Consequently,
\begin{align*}
 \operatorname{Var}_Z\!\left\{\sum_i(Z_i-p)A_i\right\}
 &=pq\sum_i A_i^2-\frac{pq}{N-1}\sum_{i\ne j}A_iA_j\\
 &=Npq\,S_A^2.
\end{align*}
Dividing by $(Npq)^2$ proves the variance identity. Our main text shows
$\widehat\tau_{\mathrm{DM}}-\tau_N=\Delta_Z(M)$ gives $\widehat\tau(g_N^*)-\tau_N=\Delta_Z(M-g_N^*)=\Delta_Z(r)$. Let $S_{u(1),u(0)}$ denote the finite-population covariance. Because $r=qu(1)+pu(0)$ and $\tau=u(1)-u(0)$,
\begin{align*}
 \frac{S_r^2}{pq}
 &=\frac q p S_{u(1)}^2+\frac p q S_{u(0)}^2
                         +2S_{u(1),u(0)}\\
 &=\frac{S_{u(1)}^2}{p}+\frac{S_{u(0)}^2}{q}
       -\{S_{u(1)}^2+S_{u(0)}^2-2S_{u(1),u(0)}\}\\
 &=\frac{S_{u(1)}^2}{p}+\frac{S_{u(0)}^2}{q}-S_\tau^2.
\end{align*}
Finally, subtracting $\Delta_Z(\widehat g)$ instead of $\Delta_Z(g_N^*)$ yields \eqref{eq:dir:exactremainder}. This last step is algebra and therefore does not require independence of $\widehat g$ and $Z$.
\end{proof}

\begin{proposition}
\label{prop:dir:consistencyonly}
Suppose $p_N\to p_\infty\in(0,1)$, $S_r^2=O(1)$, and $N^{-1}\sum_i e_i^2=o_{\mathbb P}(1)$. Then $\widehat\tau_{\mathrm{dir}}-\tau_N=o_{\mathbb P}(1)$.
\end{proposition}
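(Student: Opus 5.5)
Everything runs through the exact remainder identity \eqref{eq:dir:exactremainder} of Lemma~\ref{lem:dir:identities}:
\[
\widehat\tau_{\mathrm{dir}}-\tau_N=\Delta_Z(r)-\Delta_Z(e),
\]
which holds for every realized assignment, whether or not $\widehat g$ depends on $Z$. It therefore suffices to show that each of the two terms on the right is $o_{\mathbb P}(1)$. I would treat them separately, because only the first involves a vector that is fixed with respect to the assignment.

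\textbf{Step 1: the fixed-score term.} The vector $r=M-g_N^*$ is deterministic given the finite population. By the variance identity in \eqref{eq:dir:imbalanceidentity} and $\E_Z\Delta_Z(r)=0$,
\[
\E_Z\{\Delta_Z(r)^2\}=\frac{S_r^2}{Np_Nq_N}.
\]
Since $S_r^2=O(1)$ and $p_N\to p_\infty\in(0,1)$, the product $p_Nq_N$ is bounded away from zero for large $N$. Hence the right-hand side is $O(N^{-1})$, and Chebyshev's inequality gives $\Delta_Z(r)=O_{\mathbb P}(N^{-1/2})=o_{\mathbb P}(1)$.

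\textbf{Step 2: the learned-error term.} Here $e$ may depend on $Z$ (no cross-fitting is assumed in this proposition), so a variance calculation is unavailable. Instead I would bound $\Delta_Z(e)$ deterministically for every assignment. Write it as a difference of arm means and apply Cauchy--Schwarz to each:
\[
|\overline e_1|\le\Bigl(\tfrac1{N_1}\textstyle\sum_i Z_ie_i^2\Bigr)^{1/2}\le\Bigl(\tfrac{N}{N_1}\cdot\tfrac1N\textstyle\sum_i e_i^2\Bigr)^{1/2}=p_N^{-1/2}\Bigl(\tfrac1N\textstyle\sum_ie_i^2\Bigr)^{1/2},
\]
and likewise $|\overline e_0|\le q_N^{-1/2}(N^{-1}\sum_ie_i^2)^{1/2}$. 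Combining,
\[
|\Delta_Z(e)|\le\bigl(p_N^{-1/2}+q_N^{-1/2}\bigr)\Bigl(\tfrac1N\textstyle\sum_ie_i^2\Bigr)^{1/2}.
\]
The prefactor is bounded because $p_N\to p_\infty\in(0,1)$. The assumption $N^{-1}\sum_ie_i^2=o_{\mathbb P}(1)$ then gives $\Delta_Z(e)=o_{\mathbb P}(1)$.

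\textbf{Step 3: combine.} Adding the two $o_{\mathbb P}(1)$ terms yields $\widehat\tau_{\mathrm{dir}}-\tau_N=o_{\mathbb P}(1)$.

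The only delicate point is Step 2. We must resist treating $e$ as fixed with respect to $Z$. The worst-case Cauchy--Schwarz bound avoids that issue, at the cost of giving only $o_{\mathbb P}(1)$ rather than $o_{\mathbb P}(N^{-1/2})$. That weaker rate is exactly why consistency needs no cross-fitting, whereas the $\sqrt N$ result in Theorem~\ref{thm:dir:main} does.
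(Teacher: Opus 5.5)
Your proposal is correct and follows the paper's proof. It uses the same ingredients: the exact remainder identity, Chebyshev applied to the fixed-score term $\Delta_Z(r)$, and a deterministic Cauchy--Schwarz bound on $\Delta_Z(e)$ that needs no independence between $\widehat g$ and $Z$. The only cosmetic difference is that the paper applies Cauchy--Schwarz once to $\frac{1}{Npq}\sum_i(Z_i-p)e_i$ using $\sum_i(Z_i-p)^2=Npq$. This gives $|\Delta_Z(e)|^2\le (pq)^{-1}N^{-1}\sum_i e_i^2$, a slightly sharper constant than your arm-by-arm bound.
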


\begin{proof}
Lemma~\ref{lem:dir:identities} and Chebyshev's inequality imply $\Delta_Z(r)=O_{\mathbb P}(N^{-1/2})$. Since $\sum_i(Z_i-p)^2=Npq$ under complete randomization, Cauchy--Schwarz gives
\begin{equation}
 |\Delta_Z(e)|^2
 \leq \frac1{pq}\frac1N\sum_i e_i^2=o_{\mathbb P}(1).
 \label{eq:dir:consistencybound}
\end{equation}
Apply \eqref{eq:dir:exactremainder}.
\end{proof}

\begin{lemma}[Negligible score-estimation imbalance]
\label{lem:dir:crossfit}
Under conditions 1 and 2 of Assumption~\ref{ass:dir:main},
\begin{equation}
 \Delta_Z(e)=o_{\mathbb P}(N^{-1/2}).
 \label{eq:dir:remainderbound}
\end{equation}
\end{lemma}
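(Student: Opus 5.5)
We use the linear representation from Lemma~\ref{lem:dir:identities} and split the sum by fold:
\[
\Delta_Z(e)=\frac{1}{Npq}\sum_{v=1}^V\sum_{i\in I_v}(Z_i-p)e_i .
\]
It suffices to show that each fold term $T_v:=N^{-1/2}\sum_{i\in I_v}(Z_i-p)e_i$ is $o_{\mathbb P}(1)$, since $V$ is fixed and $pq$ is bounded away from zero by condition~1. The obstacle is that $e_i=\widehat g^{(-v)}(X_i)-g^*_{Ni}$ depends on $Z$ through the training folds, and under complete randomization the assignments in $I_v$ are not independent of those outside it. Cross-fitting nonetheless yields conditional exchangeability, and we exploit that.

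\textbf{Step 1 (conditioning).} Condition on $\mathcal W_v$, the $\sigma$-field generated by the assignments outside $I_v$ (and any independent seed). Given $\mathcal W_v$, the vector $(e_i)_{i\in I_v}$ is fixed, because the fold-$v$ fit uses only training information. The number of treated units in $I_v$, $m_v=N_1-\sum_{j\notin I_v}Z_j$, is also fixed. By complete randomization, $(Z_i)_{i\in I_v}$ is then a uniformly random set of $m_v$ treated units among $n_v$, that is, a simple random sample without replacement within the fold.

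\textbf{Step 2 (decomposition).} Write $\hat p_v=m_v/n_v$ and split
\[
\sum_{i\in I_v}(Z_i-p)e_i=\sum_{i\in I_v}(Z_i-\hat p_v)(e_i-\bar e_v)+n_v(\hat p_v-p)\bar e_v,
\]
where $\bar e_v$ is the fold mean of $e_i$. The first term has conditional mean zero. By the same variance computation as in Lemma~\ref{lem:dir:identities}, applied within the fold, its conditional variance is at most
\[
\hat p_v(1-\hat p_v)\,\frac{n_v}{n_v-1}\sum_{i\in I_v}(e_i-\bar e_v)^2\le C\,n_vQ_N .
\]
After dividing by $N$, this is $O(Q_N)=o_{\mathbb P}(1)$. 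A conditional Chebyshev argument, with the standard device that a conditional variance which is $o_{\mathbb P}(1)$ implies convergence in probability, therefore makes the first term $o_{\mathbb P}(N^{1/2})$.

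\textbf{Step 3 (the fold-imbalance term).} For the second term, $\hat p_v-p$ is the deviation of a hypergeometric proportion, so $n_v(\hat p_v-p)=O_{\mathbb P}(N^{1/2})$. By Cauchy--Schwarz, $|\bar e_v|\le Q_N^{1/2}=o_{\mathbb P}(1)$. The product is therefore $o_{\mathbb P}(N^{1/2})$, even though $\bar e_v$ is itself random and correlated with $\hat p_v$. This step is the one needing care: $\bar e_v$ need not be small at rate $N^{-1/2}$, so the argument relies on the fold imbalance being $O_{\mathbb P}(N^{-1/2})$.

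\textbf{Step 4 (conclusion).} Summing over the $V$ folds gives $\sum_i(Z_i-p)e_i=o_{\mathbb P}(N^{1/2})$. Dividing by $Npq$ yields $\Delta_Z(e)=o_{\mathbb P}(N^{-1/2})$.
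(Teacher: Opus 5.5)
Your proposal is correct and follows essentially the same route as the paper's proof. Both condition on the out-of-fold assignments, so that the held-out errors are fixed and the within-fold assignment is a simple random sample of size $m_v$. Both then split $\sum_{i\in I_v}(Z_i-p)e_i$ into a conditionally mean-zero term, handled by conditional Chebyshev plus bounded convergence, and the fold-count term $(m_v-pn_v)\bar e_v$, handled as $O_{\mathbb P}(N^{1/2})\cdot o_{\mathbb P}(1)$. Your centered first term equals the paper's $A_v=\sum_{i\in I_v}(Z_i-p_v)e_{vi}$ because $\sum_{i\in I_v}(Z_i-\hat p_v)=0$, so the two arguments differ only in notation.
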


\begin{proof}
Fix a fold $v$. Let $\mathcal H_v$ contain the fold partition, all pretreatment covariates, independent algorithmic random seeds, and all assignments and observed outcomes outside $I_v$. Conditional on $\mathcal H_v$, each error $e_{vi}=\widehat g^{(-v)}(X_i)-g_{Ni}^*$, $i\in I_v$, is fixed. Furthermore, the number of treated units in the held-out fold is known:
\[
 m_v=N_1-\sum_{i\notin I_v}Z_i,\qquad p_v=m_v/n_v.
\]
The conditional assignment in $I_v$ is uniform over subsets of size $m_v$. Writing $\overline e_v=n_v^{-1}\sum_{i\in I_v}e_{vi}$, decompose
\begin{align}
 \sum_{i\in I_v}(Z_i-p)e_{vi}
 &=\underbrace{\sum_{i\in I_v}(Z_i-p_v)e_{vi}}_{A_v}
   +\underbrace{(m_v-pn_v)\overline e_v}_{B_v}.
 \label{eq:dir:folddecomposition}
\end{align}
The first term has conditional mean zero and conditional variance
\[
 \mathbb E(A_v\mid\mathcal H_v)=0,\qquad
 \operatorname{Var}(A_v\mid\mathcal H_v)
       =n_vp_v(1-p_v)S_{e,v}^2,
\]
where
$S_{e,v}^2=(n_v-1)^{-1}\sum_{i\in I_v}(e_{vi}-\overline e_v)^2$. This formula also gives zero when $m_v\in\{0,n_v\}$. Put $Q_v=n_v^{-1}\sum_{i\in I_v}e_{vi}^2$. For all sufficiently large $N$, $S_{e,v}^2\leq2Q_v$, and conditional Chebyshev's inequality implies
\[
 \mathbb P\!\left(\left.|A_v|>\epsilon\sqrt N\,
                    \right|\mathcal H_v\right)
 \leq \min\{1,CQ_v/\epsilon^2\}.
\]
Here and below $C$ denotes a finite constant independent of $N$.
Since $Q_v=o_{\mathbb P}(1)$ and the right side is bounded, its expectation
tends to zero. Hence $A_v/\sqrt N=o_{\mathbb P}(1)$.

Conditional on the assignment-independent partition, $m_v$ is
hypergeometric, with
\[
 \mathbb E(m_v)=pn_v ,\text{and }
 \operatorname{Var}(m_v)=n_vpq\frac{N-n_v}{N-1}=O(N).
\]
Therefore $(m_v-pn_v)/\sqrt N=O_{\mathbb P}(1)$. Also,
$|\overline e_v|\leq\sqrt{Q_v}=o_{\mathbb P}(1)$, so
$B_v/\sqrt N=o_{\mathbb P}(1)$. This product argument does not require
independence between the fold count and the fitted score.
Summing over the fixed number of folds gives
\[
 \sqrt N\,\Delta_Z(e)
   =\frac1{pq}\sum_{v=1}^V\frac{A_v+B_v}{\sqrt N}
   =o_{\mathbb P}(1).
\]
\end{proof}

\paragraph{Proof of Theorem~\ref{thm:dir:main}}

\begin{proof}

Combining \eqref{eq:dir:exactremainder} and Lemma~\ref{lem:dir:crossfit} proves the first assertion in \eqref{eq:dir:linearization}. In view of $\operatorname{Var}_Z\{\Delta_Z(r)\}=S_r^2/(Npq)=O(N^{-1})$, it also proves consistency.

For Finite-population central limit theorem, set $\widetilde u_i(z)=u_i(z)-\overline u(z)$ and $\widetilde r_i=r_i-\overline r=q\widetilde u_i(1)+p\widetilde u_i(0)$. Convexity of $x\mapsto x^4$ and the moment assumption imply
\[
 \frac1N\sum_i\widetilde r_i^4
 \leq q\frac1N\sum_i\widetilde u_i(1)^4
       +p\frac1N\sum_i\widetilde u_i(0)^4=O(1).
\]
Consequently,
\[
 \max_i\widetilde r_i^2
 \leq\left(\sum_i\widetilde r_i^4\right)^{1/2}=O(N^{1/2}),
\text{and }
 \frac{\max_i\widetilde r_i^2}{\min(N_1,N_0)S_r^2}\longrightarrow0,
\]
since $S_r^2/(pq)\to\sigma^2>0$ and $p,q$ are bounded away from zero. The treated units form a simple random sample without replacement from the finite population. The finite-population central limit theorem \citep[Theorem~1]{li2017general} therefore applies to its mean of $r$. Because $\Delta_Z(r)=(\overline r_1-\overline r)/q$,
\[
 \frac{\Delta_Z(r)}{\sqrt{S_r^2/(Npq)}}\xrightarrow{d}\mathcal N(0,1).
\]
Slutsky's theorem and Step 1 prove the normal limit in \eqref{eq:dir:linearization}. Equation~\eqref{eq:dir:oraclevariance} follows from Lemma~\ref{lem:dir:identities}.

Now, we show variance estimation. For $z\in\{0,1\}$, let $s_{z,*}^2$ be the sample variance, among units with $Z_i=z$, of the fixed values $u_i(z)$. Define $a_i=u_i(z)-\overline u(z)$. The arm is a simple random sample of size $N_z$, so its sample mean of $a_i$ is $O_{\mathbb P}(N^{-1/2})$. The fourth-moment assumption gives
\[
 S_{a^2}^2
 =\frac1{N-1}\sum_i\left(a_i^2-\frac1N\sum_j a_j^2\right)^2
 \leq\frac1{N-1}\sum_i a_i^4=O(1).
\]
The simple-random-sample variance formula applied to $a_i^2$ then yields
\[
 \frac1{N_z}\sum_{i:Z_i=z}a_i^2
       -\frac1N\sum_i a_i^2=o_{\mathbb P}(1).
\]
Using
\[
 s_{z,*}^2=\frac{N_z}{N_z-1}
 \left\{\frac1{N_z}\sum_{i:Z_i=z}a_i^2
            -\left(\frac1{N_z}\sum_{i:Z_i=z}a_i\right)^2\right\},
\]
we conclude that $s_{z,*}^2-S_{u(z)}^2=o_{\mathbb P}(1)$.

Let $s_{e,z}^2$ be the sample variance of $e_i$ in arm $z$. Since $N^{-1}\sum_i e_i^2\leq Q_N=o_{\mathbb P}(1)$,
\[
 s_{e,z}^2\leq\frac1{N_z-1}\sum_{i:Z_i=z}e_i^2
       \leq\frac1{N_z-1}\sum_i e_i^2=o_{\mathbb P}(1).
\]
Within arm $z$, $\widehat u_i=u_i(z)-e_i$. The sample covariance inequality therefore gives
\[
 |\widehat s_z^2-s_{z,*}^2|
 \leq 2\sqrt{s_{z,*}^2s_{e,z}^2}+s_{e,z}^2=o_{\mathbb P}(1).
\]
No independence between residuals and predictions is required for this inequality. Thus $\widehat s_z^2=S_{u(z)}^2+o_{\mathbb P}(1)$, and
\begin{align*}
 N\widehat V_{\mathrm{dir}}
 &=\frac{S_{u(1)}^2}{p}+\frac{S_{u(0)}^2}{q}+o_{\mathbb P}(1)\\
 &=\frac{S_r^2}{pq}+S_\tau^2+o_{\mathbb P}(1).
\end{align*} Therefore, \(
 N\left(\widehat V_{\mathrm{dir}}-V_N^*\right)
   -S_\tau^2=o_{\mathbb P}(1),\) and \(
 N\widehat V_{\mathrm{dir}}\xrightarrow{\mathbb P}
       \sigma^2+s_\tau^2.\)

Another application of Slutsky's theorem gives
\[
 \frac{\widehat\tau_{\mathrm{dir}}-\tau_N}
        {\sqrt{\widehat V_{\mathrm{dir}}}}
 \xrightarrow{d}
 \mathcal N\!\left(0,\frac{\sigma^2}{\sigma^2+s_\tau^2}\right).
\]
The limiting coverage of \eqref{eq:dir:CI} is consequently
\[
 2\Phi\!\left(z_{1-\alpha/2}
       \sqrt{1+s_\tau^2/\sigma^2}\right)-1\ \geq\ 1-\alpha.
\]
It equals $1-\alpha$ when $s_\tau^2=0$.
\end{proof}

\begin{corollary}[Attainment of the class-oracle variance]
\label{cor:dir:oracle}
Suppose the reference score minimizes the complementary weighted finite-population loss over a class $\mathcal G_N$ that contains the constant scores and is closed under adding constants. If Assumption~\ref{ass:dir:main} holds, the direct estimator has the same
first-order distribution as the best fixed-score estimator in $\mathcal G_N$. In particular,
\begin{equation}
 V_N^*\leq\frac{S_M^2}{Np_Nq_N}
       =\operatorname{Var}_Z(\widehat\tau_{\mathrm{DM}}).
 \label{eq:dir:noharm}
\end{equation}
\end{corollary}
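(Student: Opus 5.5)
The plan has two parts: first identify the first-order law of the direct estimator, then compare the reference variance with the difference-in-means variance.

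\textbf{First-order law.} By Theorem~\ref{thm:dir:main},
\[
\sqrt N(\widehat\tau_{\mathrm{dir}}-\tau_N)=\sqrt N\,\Delta_Z(r)+o_{\mathbb P}(1),\qquad r=M-g_N^*.
\]
By Lemma~\ref{lem:dir:identities}, the fixed-score estimator $\widehat\tau(g_N^*)$ satisfies the exact identity $\widehat\tau(g_N^*)-\tau_N=\Delta_Z(r)$. Hence the two estimators differ by $o_{\mathbb P}(N^{-1/2})$ and have the same normal limit $\mathcal N(0,\sigma^2)$.

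\textbf{$g_N^*$ is the best fixed score in $\mathcal G_N$.} Every $g\in\mathcal G_N$ gives
\[
\MSE_Z\{\widehat\tau(g)\}=\frac{S^2_{M-g}}{Npq}
\]
by Proposition~\ref{thm:score-approximation-equivalence}. By Proposition~\ref{thm:observable-loss}, $g_N^*$ minimizes $N^{-1}\sum_i(M_i-g_i)^2$ over $\mathcal G_N$. This uncentered loss decomposes as
\[
\frac1N\sum_i(M_i-g_i)^2=\frac{N-1}{N}S^2_{M-g}+(\overline M-\overline g)^2 .
\]
Since $\mathcal G_N$ is closed under adding constants, I compare $g_N^*$ with $g_N^*+c$. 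The choice $c=\overline M-\overline g_N^*$ lowers the loss unless $\overline g_N^*=\overline M$, so the minimizer is centered. It follows that $g_N^*$ also minimizes $S^2_{M-g}$ over $\mathcal G_N$, which is exactly the class-oracle MSE criterion.

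\textbf{The no-harm inequality.} Take $g\equiv c$ in $\mathcal G_N$. Then $S^2_{M-g}=S^2_M$, so
\[
V_N^*=\frac{S_r^2}{Npq}\le \frac{S_M^2}{Npq}.
\]
Lemma~\ref{lem:dir:identities} applied to $A=M$, together with Proposition~\ref{thm:M-identity}, identifies the right-hand side as $\Var_Z\{\Delta_Z(M)\}=\Var_Z(\widehat\tau_{\mathrm{DM}})$.

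\textbf{Main subtlety.} The delicate step is interpreting ``same first-order distribution'' correctly. The comparison has to be made with the fixed-score oracle $\widehat\tau(g_N^*)$, not with the fitted score $\widehat g$. The $o_{\mathbb P}(N^{-1/2})$ remainder must be scaled by $\sqrt N$. Because $\sigma^2>0$ is assumed, this scaled remainder does not disturb the limit under Slutsky's theorem.
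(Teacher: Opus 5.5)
Your proposal is correct and follows essentially the paper's route: the completing-the-square identity for $\mathcal L_p$ (equivalently Proposition~\ref{thm:observable-loss}), closure under constants to reduce the problem to minimizing $S^2_{M-g}$, a constant score for the no-harm bound, and the first-order claim read off from Theorem~\ref{thm:dir:main}. The one step to tighten is your \emph{It follows that}: centering $g_N^*$ alone does not give minimality of $S^2_{M-g}$. You also need to compare $g_N^*$ with each centered competitor $g+(\overline M-\overline g)\in\mathcal G_N$, whose loss is $\frac{N-1}{N}S^2_{M-g}$, exactly as the paper does.
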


\begin{proof}
For every numerical score vector $g$, completing the square gives
\begin{equation}
 \mathcal L_p(g)
 =\frac1N\sum_i(M_i-g_i)^2
     +pq\frac1N\sum_i\tau_i^2.
 \label{eq:dir:lossidentity}
\end{equation}
The second term does not depend on $g$. Because the class is closed under adding constants, minimizing the first term also chooses an intercept such that $\overline{M-g}=0$. More explicitly, for every candidate $g$, $g+\overline{M-g}$ belongs to the class and has squared prediction error $(N-1)S_{M-g}^2/N$. Hence minimizing $\mathcal L_p$ is equivalent to minimizing $S_{M-g}^2$ over this class. A constant score has $S_{M-g}^2=S_M^2$, which proves \eqref{eq:dir:noharm}. The oracle first-order distribution follows from Theorem~\ref{thm:dir:main}.
\end{proof}

\section{Simulation implementation details}
\label{si:implement}

\paragraph{Full-sample estimators.}
The unadjusted estimator is $\widehat\tau_{\DM}=\overline Y_1-\overline Y_0$. All-$X$ OLS reports the coefficient on $Z$ from an additive regression of $Y$ on an intercept, $Z$, and all 60 full-sample-centered covariates. All-$X$ Lin additionally interacts $Z$ with every centered covariate. If $\widehat\beta_A$ is the additive slope and $\widehat\beta_1,\widehat\beta_0$ are the within-arm slopes, these two estimators can be written as
\begin{align}
\widehat\tau_{\mathrm{OLS,all}}
&=\widehat\tau_{\DM}
-\widehat\beta_A^{\mathsf T}\Delta_Z(X),
\nonumber\\
\widehat\tau_{\lin,\mathrm{all}}
&=\widehat\tau_{\DM}
-\{q\widehat\beta_1+p\widehat\beta_0\}^{\mathsf T}\Delta_Z(X).
\label{eq:simulation-all-x}
\end{align}
Both estimators are identified with $K=60$; even under $p=1/4$, the fully interacted Lin fit has 100 treated observations for 60 covariates and an intercept.

One-step LASSO is computed from the full analysis sample by solving:
\begin{equation}
(\widehat\alpha_\lambda,\widehat\tau_{\mathrm{1S}},
\widehat\beta_\lambda)
\in
\arg\min_{\alpha,t,\beta}
\left[
\frac{1}{2N}\sum_{i=1}^N
\{Y_i-\alpha-tZ_i-(X_i-\overline X)^{\mathsf T}\beta\}^2
+\lambda\lVert\beta\rVert_1
\right].
\label{eq:simulation-one-step}
\end{equation}
The intercept and treatment coefficient are unpenalized, whereas all 60 covariate coefficients receive the LASSO penalty. We choose $\lambda$ by five-fold cross-validation with folds stratified by treatment status, using the minimum-CV-error rule over a 40-value penalty path. The reported estimate is the full-sample coefficient $\widehat\tau_{\mathrm{1S}}$; there is no outer cross-fitting or post-LASSO OLS refit. The first-order conditions for the two unpenalized coefficients imply the realized-sample identity
\begin{equation}
\widehat\tau_{\mathrm{1S}}
=\widehat\tau_{\DM}
-\widehat\beta_\lambda^{\mathsf T}\Delta_Z(X).
\label{eq:simulation-one-step-identity}
\end{equation}
Unpenalizing $Z$ therefore prevents direct shrinkage of the treatment coefficient, although it does not by itself imply exact finite-sample design unbiasedness for the adaptive estimator.

\paragraph{Direct and Arm-specific estimators.}
Direct $M$-LASSO and Arm-specific LASSO/AIPW use the same four-fold outer cross-fitting scheme to estimate $\hat{g}(x)$ and construct final corrected estimator. We form the folds separately within treatment and control so that every fold has treatment fraction $p$. In each outer training sample, Direct $M$-LASSO solves
\begin{equation}
\widehat g_D^{(-v)}
\in\arg\min_{\alpha,\beta}
\left[
\frac{1}{2|I_v^c|}\sum_{i\notin I_v}
w_i\{Y_i-\alpha-X_i^{\mathsf T}\beta\}^2
+\lambda\lVert\beta\rVert_1
\right],
\text{ where }
w_i=
\begin{cases}
q/p,&Z_i=1,\\
p/q,&Z_i=0.
\end{cases}
\label{eq:simulation-direct-lasso}
\end{equation}
Using the same folds, Arm-specific LASSO fits $\widehat m_1^{(-v)}$ and $\widehat m_0^{(-v)}$ separately and combines their held-out predictions as
$
\widehat g_{\mathrm{AS},i}
=q\widehat m_1^{(-v)}(X_i)
+p\widehat m_0^{(-v)}(X_i)$, $i \in I_v.$
Within each outer training sample, every LASSO is tuned by three-fold inner cross-validation using the minimum-CV-error rule over 40 penalties. All preprocessing, fitting, and tuning are confined to the training observations. For all three LASSO procedures, the implementation standardizes covariates within the observations supplied to each fit and returns coefficients on the original scale. For notational simplicity, the displayed objectives suppress this internal rescaling. 

For either held-out score $\widehat g$, the associated estimator is
\begin{equation}
\widehat\tau_{\cf}(\widehat g)
=\sum_{v=1}^{4}\frac{|I_v|}{N}
\left\{\overline{Y-\widehat g}_{1,v}
-\overline{Y-\widehat g}_{0,v}\right\}.
\label{eq:simulation-cf-estimator}
\end{equation}
For the arm-specific method, this expression is algebraically identical to the cross-fitted augmented inverse-probability-weighted estimator with the known randomization probability:
\begin{align}
\widehat\tau_{\mathrm{AS}}
&=\frac{1}{N}\sum_{i=1}^N
\left[
\widehat m_{1i}-\widehat m_{0i}
+\frac{Z_i}{p}\{Y_i-\widehat m_{1i}\}
-\frac{1-Z_i}{q}\{Y_i-\widehat m_{0i}\}
\right]
\nonumber\\
&=\widehat\tau_{\cf}(\widehat g_{\mathrm{AS}}).
\label{eq:simulation-aipw}
\end{align}
Thus, the arm-specific learner supplies the nuisance outcome regressions for the familiar known-propensity AIPW form \citep{robins1994estimation,robins1995analysis}; cross-fitting changes how the nuisance functions are learned, not the algebraic equivalence. This is also the generalized-regression representation in Equation~\eqref{eq:greg}.

\paragraph{Oracle and identity check.}
For the DGP oracle, we compute
\begin{equation}
g_s^*(X)=\E(M\mid X)=\sqrt{R_M^2}\,g_s^\circ(X),
\qquad
\widehat\tau_{\mathrm{or}}
=\widehat\tau_{\DM}-\Delta_Z\{g_s^*(X)\}.
\label{eq:simulation-oracle}
\end{equation}
This adjustment removes the covariate-predictable component of $M$ but leaves the irreducible component. Adjustment by the full unit-level $M$ is used only as an internal identity check.

\section{Covariate Selection Using an Independent Sample}
\label{app:selection-simulation}

To avoid additional inferential complications from selecting covariates in the target experiment, we assume that an independent sample is available for covariate selection and external-score fitting. This supplementary study asks which criterion should determine whether a covariate enters adjustment. In particular, it examines whether selecting predictors of the control outcome is sufficient for selecting predictors of ATE estimation error. The design is deliberately one-sided. It contains a setting in which control-outcome prognosticity does not translate into useful error prediction, but it does not contain the converse case.

\subsection{Design and covariate-set rules}

Each replication contains independently generated selection and target samples with
\begin{equation}
N_{\mathrm{sel}}=N_{\mathrm{target}}=400, K=20.
\label{eq:app-selection-sample}
\end{equation}
The two populations use the same allocation fraction and the same structural scenario from Section~\ref{subsec:simulation-design}. The first six covariates retain the roles specified there; the remaining covariates are irrelevant. Because normalization is fixed by $\Sigma$, coefficients fitted in the selection sample and covariates in the target sample are expressed in the same units. Outcome-dependent selection and external-score fitting use only outcomes from the selection sample. The raw-balance rule uses target assignments and covariates but never target outcomes.

We compare five covariate sets. The control-outcome set $\widehat{\mathcal J}_0$ is the support selected by a LASSO of $Y$ on $X$ among controls in the selection sample. The raw-imbalance set is
\begin{equation}
\widehat{\mathcal J}_{\mathrm{imb}}(Z_T)
=\{j:p_{\mathrm{bal},j}(Z_T)<0.10\},
\label{eq:app-selection-imbalance}
\end{equation}
where $p_{\mathrm{bal},j}$ is the two-sided standardized-mean-difference test for target covariate $j$. We also use the intersection and union
\begin{equation}
\widehat{\mathcal J}_{\cap}
=\widehat{\mathcal J}_0\cap\widehat{\mathcal J}_{\mathrm{imb}},
\qquad
\widehat{\mathcal J}_{\cup}
=\widehat{\mathcal J}_0\cup\widehat{\mathcal J}_{\mathrm{imb}}.
\label{eq:app-selection-intersection-union}
\end{equation}
Finally, the design-relevant set $\widehat{\mathcal J}_M$ is selected from
both treatment arms in the selection sample by the complementary weighted LASSO loss
\begin{equation}
\frac{1}{N_{\mathrm{sel}}}
\sum_{i=1}^{N_{\mathrm{sel}}}
w_i^S\{Y_i^S-\alpha-(X_i^S)^{\mathsf T}\beta\}^2
+\lambda_M\lVert\beta\rVert_1,
\qquad
w_i^S=
\begin{cases}
q/p,&Z_i^S=1,\\
p/q,&Z_i^S=0.
\end{cases}
\label{eq:app-selection-weighted-lasso}
\end{equation}
Both LASSO selection rules use five-fold cross-validation and the one-standard-error rule. Folds for $\widehat{\mathcal J}_0$ are formed within controls in the selection sample, whereas folds for $\widehat{\mathcal J}_M$ are stratified by treatment status in that sample.

\subsection{Common post-selection estimators}

To isolate the effect of selecting different sets, every nonempty set $\mathcal J$ receives the same unpenalized complementary-weighted refit in the selection sample. If $\widehat\gamma_{\mathcal J}$ denotes the resulting coefficient, the primary estimator applies the external score to the target experiment:
\begin{equation}
\widehat\tau_{\mathrm{ext}}(\mathcal J)
=\widehat\tau_{\DM}^{T}
-\widehat\gamma_{\mathcal J}^{\mathsf T}
\Delta_{Z_T}(X_{\mathcal J}^{T}).
\label{eq:app-selection-external}
\end{equation}
An empty set produces no adjustment. As a secondary practical comparison, we apply target-sample Lin regression to the same selected set:
\begin{equation}
\widehat\tau_{\lin}(\mathcal J)
=\widehat\tau_{\DM}^{T}
-\{q\widehat\beta_{1,\mathcal J}^{T}
+p\widehat\beta_{0,\mathcal J}^{T}\}^{\mathsf T}
\Delta_{Z_T}(X_{\mathcal J}^{T}).
\label{eq:app-selection-lin}
\end{equation}
For the external score, we also report centered prediction risk in the independently generated target population:
\begin{equation}
\operatorname{RelPredRisk}_{M,r}(\widehat g)
=
\frac{
\sum_{i=1}^{N}
\left[(M_{ir}-\overline M_r)
-\{\widehat g_r(X_{ir})-\overline{\widehat g}_r\}\right]^2
}{
\sum_{i=1}^{N}(M_{ir}-\overline M_r)^2
}.
\label{eq:app-selection-risk}
\end{equation}
Because score fitting occurs in an independent selection sample, this is genuine out-of-sample prediction risk. The mean number of selected covariates is a secondary diagnostic.

\subsection{Results}

Table~\ref{tab:app-selection-results} reports the complete comparison, and Figure~\ref{fig:app-selection-performance} displays relative MSE. When control-outcome and design-relevant prognosticity coincide, the two selection rules perform nearly identically. In the sparse-signal scenario, both attain external-score relative MSE $0.506$, close to the DGP-oracle value $0.498$.

\begin{table}[!htbp]
\centering
\small
\renewcommand{\arraystretch}{1.08}
\caption{Separate-sample covariate selection and ATE performance}
\label{tab:app-selection-results}
\resizebox{\textwidth}{!}{%
\begin{tabular}{llrrrr}
\hline
Scenario & Rule & External rel. MSE (MCSE) & Lin rel. MSE (MCSE) & $M$ risk & Size \\
\hline
Null & No adjustment & 1.000 (0.000) & 1.000 (0.000) & 1.000 & 0.00 \\
Null & Control-outcome & 1.000 (0.000) & 1.000 (0.000) & 1.000 & 0.01 \\
Null & Raw imbalance & 1.020 (0.003) & 1.022 (0.003) & 1.005 & 2.00 \\
Null & Intersection & 1.000 (0.000) & 1.000 (0.000) & 1.000 & 0.00 \\
Null & Union & 1.020 (0.003) & 1.022 (0.003) & 1.005 & 2.01 \\
Null & Design-relevant & 1.000 (0.000) & 1.000 (0.000) & 1.000 & 0.00 \\
Sparse M signal & No adjustment & 1.000 (0.000) & 1.000 (0.000) & 1.000 & 0.00 \\
Sparse M signal & Control-outcome & 0.506 (0.008) & 0.504 (0.008) & 0.508 & 4.34 \\
Sparse M signal & Raw imbalance & 0.749 (0.010) & 0.748 (0.010) & 0.931 & 2.01 \\
Sparse M signal & Intersection & 0.737 (0.010) & 0.734 (0.010) & 0.928 & 0.43 \\
Sparse M signal & Union & 0.515 (0.008) & 0.515 (0.008) & 0.510 & 5.92 \\
Sparse M signal & Design-relevant & 0.506 (0.008) & 0.504 (0.008) & 0.507 & 4.15 \\
Arm-specific cancellation & No adjustment & 1.000 (0.000) & 1.000 (0.000) & 1.000 & 0.00 \\
Arm-specific cancellation & Control-outcome & 1.022 (0.004) & 1.010 (0.002) & 1.026 & 3.55 \\
Arm-specific cancellation & Raw imbalance & 1.044 (0.005) & 1.047 (0.005) & 1.010 & 2.01 \\
Arm-specific cancellation & Intersection & 1.010 (0.002) & 1.007 (0.002) & 1.002 & 0.36 \\
Arm-specific cancellation & Union & 1.056 (0.006) & 1.034 (0.004) & 1.033 & 5.20 \\
Arm-specific cancellation & Design-relevant & 1.000 (0.000) & 1.000 (0.000) & 1.000 & 0.00 \\
Unequal allocation with HTE & No adjustment & 1.000 (0.000) & 1.000 (0.000) & 1.000 & 0.00 \\
Unequal allocation with HTE & Control-outcome & 0.545 (0.009) & 0.542 (0.009) & 0.528 & 7.29 \\
Unequal allocation with HTE & Raw imbalance & 0.796 (0.011) & 0.790 (0.011) & 0.941 & 2.01 \\
Unequal allocation with HTE & Intersection & 0.758 (0.010) & 0.755 (0.010) & 0.933 & 0.73 \\
Unequal allocation with HTE & Union & 0.565 (0.009) & 0.560 (0.009) & 0.533 & 8.57 \\
Unequal allocation with HTE & Design-relevant & 0.535 (0.009) & 0.530 (0.009) & 0.518 & 3.29 \\
\hline
\end{tabular}

}
\begin{minipage}{0.96\textwidth}
\footnotesize
\emph{Notes:} Relative MSE is calculated against no adjustment within each scenario; parentheses contain paired Monte Carlo standard errors. The $M$-risk column reports Equation~\eqref{eq:app-selection-risk}, and Size is the
mean number of selected covariates. The DGP oracle's relative MSE is $1.000$, $0.498$, $1.000$, and $0.516$ in the four scenarios, respectively.
\end{minipage}
\end{table}

\begin{figure}[!htbp]
\centering
\includegraphics[width=0.96\textwidth]{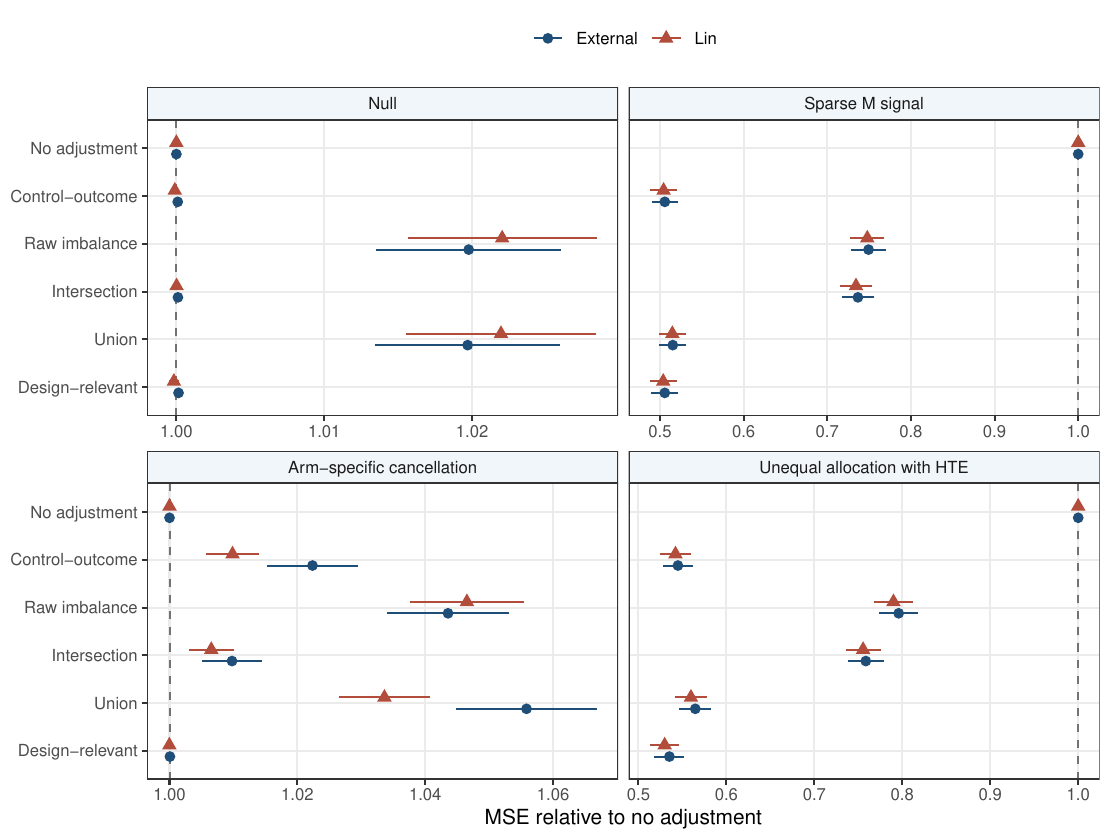}
\caption{Separate-sample covariate selection: MSE relative to no adjustment}
\label{fig:app-selection-performance}
\begin{minipage}{0.94\textwidth}
\footnotesize
\emph{Notes:} Points show relative MSE, and horizontal intervals equal the
estimate plus or minus $1.96$ paired Monte Carlo standard errors. The dashed
vertical line marks the MSE of the unadjusted difference in means.
\end{minipage}
\end{figure}

Arm-specific cancellation most clearly distinguishes the two prediction targets. The control-outcome rule selects 3.55 covariates on average and includes each heterogeneity-only covariate with probability above $0.999$. Although these covariates predict $Y(0)$, they do not predict $M$. The fitted score therefore has centered prediction risk $1.026$ and raises relative MSE to $1.022$. The design-relevant rule selects essentially no covariates, has prediction risk $1.000$, and reproduces no adjustment. This establishes the intended one-way conclusion: control-outcome prognosticity is not sufficient for prognosticity with respect to ATE estimation error.

Under unequal allocation, the same distinction is smaller but remains visible. The control-outcome rule selects the three heterogeneity-only variables along with the signal variables, yielding a mean set size of $7.29$ and relative MSE $0.545$. The design-relevant rule largely excludes the heterogeneity-only variables, selects 3.29 covariates on average, and lowers relative MSE to $0.535$, close to the oracle value $0.516$.

The balance-based rules are less reliable. Raw-imbalance selection chooses about two variables even under the null and raises relative MSE to $1.020$. In the sparse-signal and unequal-allocation scenarios, it captures too little useful signal, yielding relative MSE $0.749$ and $0.796$. The union preserves the control-outcome predictors but adds unnecessary variables, producing relative MSE $0.515$ under sparse signal, $1.056$ under cancellation, and $0.565$ under unequal allocation. The target-sample Lin panel preserves the same qualitative ranking, indicating that the main contrast concerns what is selected rather than where the final coefficients are estimated.

\begingroup
\sloppy
The study uses 8,000 independent selection--target sample pairs per scenario. We retain the common replication count used in the archived run; four raw-imbalance or intersection comparisons have paired MCSEs slightly above $0.01$, and the largest is $0.0113$. 
\endgroup

\section{Small-Sample Estimation-Cost Results}
\label{app:small-n-simulation}

Table~\ref{tab:app-small-n-results} supplements Figure~\ref{fig:simulation-sample-size-performance} by reporting the complete results for all seven estimators. The design holds $K=60$ and $p=1/2$ fixed and varies $N\in\{160,240,400\}$ in the sparse-signal and arm-specific-cancellation
scenarios. The $N=400$ entries are the corresponding results from the baseline simulation.

\begin{table}[!htbp]
\centering
\scriptsize
\renewcommand{\arraystretch}{1.03}
\caption{Small-sample estimation cost relative to the DGP oracle}
\label{tab:app-small-n-results}
\resizebox{\textwidth}{!}{%
\begin{tabular}{lrllrrr}
\hline
Scenario & $N$ & Method & Bias (MCSE) & RMSE & Rel. MSE (MCSE) & Oracle gap (MCSE) [95\% MCI] \\
\hline
Sparse M signal & 160 & DM & 0.0107 (0.0051) & 0.1612 & 1.000 (0.000) & 0.513 (0.022) [0.470, 0.556] \\
Sparse M signal & 160 & All-X OLS & 0.0132 (0.0046) & 0.1471 & 0.832 (0.044) & 0.346 (0.034) [0.280, 0.412] \\
Sparse M signal & 160 & All-X Lin & 0.0143 (0.0059) & 0.1874 & 1.351 (0.083) & 0.865 (0.075) [0.718, 1.011] \\
Sparse M signal & 160 & One-step LASSO & 0.0105 (0.0038) & 0.1194 & 0.549 (0.020) & 0.062 (0.009) [0.045, 0.079] \\
Sparse M signal & 160 & Direct M-LASSO & 0.0104 (0.0038) & 0.1211 & 0.565 (0.020) & 0.078 (0.011) [0.057, 0.099] \\
Sparse M signal & 160 & Arm-specific LASSO (AIPW) & 0.0123 (0.0040) & 0.1261 & 0.612 (0.018) & 0.125 (0.014) [0.098, 0.151] \\
Sparse M signal & 160 & DGP oracle & 0.0099 (0.0035) & 0.1125 & 0.487 (0.022) & 0.000 (0.000) [0.000, 0.000] \\
\hline
Sparse M signal & 240 & DM & -0.0064 (0.0042) & 0.1328 & 1.000 (0.000) & 0.504 (0.022) [0.462, 0.547] \\
Sparse M signal & 240 & All-X OLS & -0.0029 (0.0034) & 0.1073 & 0.654 (0.031) & 0.158 (0.019) [0.120, 0.195] \\
Sparse M signal & 240 & All-X Lin & -0.0047 (0.0036) & 0.1143 & 0.741 (0.038) & 0.245 (0.027) [0.192, 0.297] \\
Sparse M signal & 240 & One-step LASSO & -0.0029 (0.0030) & 0.0964 & 0.527 (0.019) & 0.031 (0.007) [0.017, 0.045] \\
Sparse M signal & 240 & Direct M-LASSO & -0.0030 (0.0031) & 0.0979 & 0.543 (0.019) & 0.048 (0.008) [0.031, 0.064] \\
Sparse M signal & 240 & Arm-specific LASSO (AIPW) & -0.0036 (0.0032) & 0.1002 & 0.569 (0.017) & 0.073 (0.010) [0.053, 0.093] \\
Sparse M signal & 240 & DGP oracle & -0.0014 (0.0030) & 0.0935 & 0.496 (0.022) & 0.000 (0.000) [0.000, 0.000] \\
\hline
Sparse M signal & 400 & DM & -0.0009 (0.0033) & 0.1038 & 1.000 (0.000) & 0.515 (0.022) [0.473, 0.558] \\
Sparse M signal & 400 & All-X OLS & -0.0010 (0.0024) & 0.0767 & 0.546 (0.026) & 0.061 (0.014) [0.034, 0.088] \\
Sparse M signal & 400 & All-X Lin & -0.0008 (0.0025) & 0.0779 & 0.563 (0.027) & 0.079 (0.015) [0.049, 0.108] \\
Sparse M signal & 400 & One-step LASSO & -0.0010 (0.0023) & 0.0736 & 0.502 (0.020) & 0.017 (0.005) [0.007, 0.028] \\
Sparse M signal & 400 & Direct M-LASSO & -0.0008 (0.0024) & 0.0743 & 0.512 (0.019) & 0.028 (0.006) [0.015, 0.040] \\
Sparse M signal & 400 & Arm-specific LASSO (AIPW) & -0.0007 (0.0024) & 0.0756 & 0.531 (0.019) & 0.046 (0.008) [0.031, 0.061] \\
Sparse M signal & 400 & DGP oracle & -0.0006 (0.0023) & 0.0723 & 0.485 (0.022) & 0.000 (0.000) [0.000, 0.000] \\
\hline
Arm-specific cancellation & 160 & DM & 0.0029 (0.0050) & 0.1576 & 1.000 (0.000) & 0.000 (0.000) [0.000, 0.000] \\
Arm-specific cancellation & 160 & All-X OLS & 0.0090 (0.0076) & 0.2397 & 2.312 (0.112) & 1.312 (0.112) [1.091, 1.532] \\
Arm-specific cancellation & 160 & All-X Lin & 0.0036 (0.0080) & 0.2518 & 2.552 (0.136) & 1.552 (0.136) [1.286, 1.819] \\
Arm-specific cancellation & 160 & One-step LASSO & 0.0037 (0.0050) & 0.1590 & 1.017 (0.008) & 0.017 (0.008) [0.001, 0.032] \\
Arm-specific cancellation & 160 & Direct M-LASSO & 0.0027 (0.0050) & 0.1592 & 1.020 (0.007) & 0.020 (0.007) [0.006, 0.033] \\
Arm-specific cancellation & 160 & Arm-specific LASSO (AIPW) & 0.0018 (0.0052) & 0.1647 & 1.091 (0.019) & 0.091 (0.019) [0.054, 0.128] \\
Arm-specific cancellation & 160 & DGP oracle & 0.0029 (0.0050) & 0.1576 & 1.000 (0.000) & 0.000 (0.000) [0.000, 0.000] \\
\hline
Arm-specific cancellation & 240 & DM & 0.0091 (0.0040) & 0.1263 & 1.000 (0.000) & 0.000 (0.000) [0.000, 0.000] \\
Arm-specific cancellation & 240 & All-X OLS & 0.0066 (0.0052) & 0.1634 & 1.675 (0.065) & 0.675 (0.065) [0.548, 0.801] \\
Arm-specific cancellation & 240 & All-X Lin & 0.0069 (0.0051) & 0.1602 & 1.611 (0.061) & 0.611 (0.061) [0.490, 0.731] \\
Arm-specific cancellation & 240 & One-step LASSO & 0.0096 (0.0040) & 0.1276 & 1.022 (0.007) & 0.022 (0.007) [0.009, 0.035] \\
Arm-specific cancellation & 240 & Direct M-LASSO & 0.0093 (0.0040) & 0.1269 & 1.011 (0.006) & 0.011 (0.006) [-0.001, 0.022] \\
Arm-specific cancellation & 240 & Arm-specific LASSO (AIPW) & 0.0084 (0.0041) & 0.1288 & 1.040 (0.014) & 0.040 (0.014) [0.013, 0.067] \\
Arm-specific cancellation & 240 & DGP oracle & 0.0091 (0.0040) & 0.1263 & 1.000 (0.000) & 0.000 (0.000) [0.000, 0.000] \\
\hline
Arm-specific cancellation & 400 & DM & -0.0033 (0.0031) & 0.0991 & 1.000 (0.000) & 0.000 (0.000) [0.000, 0.000] \\
Arm-specific cancellation & 400 & All-X OLS & -0.0044 (0.0037) & 0.1165 & 1.381 (0.044) & 0.381 (0.044) [0.296, 0.467] \\
Arm-specific cancellation & 400 & All-X Lin & -0.0056 (0.0035) & 0.1111 & 1.256 (0.033) & 0.256 (0.033) [0.191, 0.321] \\
Arm-specific cancellation & 400 & One-step LASSO & -0.0035 (0.0031) & 0.0996 & 1.009 (0.004) & 0.009 (0.004) [0.001, 0.018] \\
Arm-specific cancellation & 400 & Direct M-LASSO & -0.0032 (0.0031) & 0.0993 & 1.004 (0.004) & 0.004 (0.004) [-0.003, 0.012] \\
Arm-specific cancellation & 400 & Arm-specific LASSO (AIPW) & -0.0039 (0.0032) & 0.1011 & 1.041 (0.010) & 0.041 (0.010) [0.020, 0.061] \\
Arm-specific cancellation & 400 & DGP oracle & -0.0033 (0.0031) & 0.0991 & 1.000 (0.000) & 0.000 (0.000) [0.000, 0.000] \\
\hline
\end{tabular}

}
\begin{minipage}{0.98\textwidth}
\footnotesize
\emph{Notes:} Bias is calculated relative to the realized finite-population
ATE. Relative MSE uses the difference in means as its within-cell benchmark. The oracle gap is in difference-in-means MSE
units. Parentheses contain Monte Carlo standard errors; brackets in the final column contain untruncated 95\% Monte Carlo intervals based on paired
replication-level influence values. Each newly simulated cell contains 1,000 replications; the $N=400$ entries reuse the baseline replications.
\end{minipage}
\end{table}

\end{document}